\titleformat{\section}[hang]{\Large\bfseries\filcenter}{}{1em}{}
\titleformat{\subsection}[hang]{\bfseries}{}{1em}{}
\newtheorem{conjecture}{Conjecture}
\newtheorem{definition}{Definition}
\newtheorem{remark}{Remark}
\newtheorem{lemma}{Lemma}
\newtheorem{theorem}{Theorem}
\newtheorem{proposition}{Proposition}
\newtheorem{corollary}{Corollary}
\newcommand{\bra}[1]{\langle{#1}|}
\newcommand{\ket}[1]{|{#1}\rangle}
\newcommand{\proj}[1]{|{#1}\rangle \langle {#1}|}
\newcommand{\ketbra}[2]{|{#1}\rangle \! \langle{#2}|}
\newcommand{\abs}[1]{\left\lvert {#1} \right\rvert}
\newcommand{\Rmnum}[1]{\expandafter\@slowromancap\romannumeral  #1@}
\def\bma{\begin{bmatrix}}
\def\ema{\end{bmatrix}}
\def\rank{\mathop{\rm rank}}
\def\min{\mathop{\rm min}}
\def\diag{\mathop{\rm diag}}
\def\tr{{\rm Tr}}
\def\dg{\dagger}
\def\ox{\otimes}
\def\a{\alpha}
\def\b{\beta}
\def\g{\gamma}
\def\d{\delta}
\def\t{\theta}
\def\l{\lambda}
\def\m{\mu}
\def\n{\nu}
\def\x{\xi}
\def\p{\pi}
\def\r{\rho}
\def\s{\sigma}
\def\ph{\varphi}
\def\ps{\psi}
\newcommand{\nc}{\newcommand}
 \nc{\bbA}{\mathbb{A}} \nc{\bbB}{\mathbb{B}} \nc{\bbC}{\mathbb{C}}
 \nc{\bbD}{\mathbb{D}} \nc{\bbE}{\mathbb{E}} \nc{\bbF}{\mathbb{F}}
 \nc{\bbG}{\mathbb{G}} \nc{\bbH}{\mathbb{H}} \nc{\bbI}{\mathbb{I}}
 \nc{\bbJ}{\mathbb{J}} \nc{\bbK}{\mathbb{K}} \nc{\bbL}{\mathbb{L}}
 \nc{\bbM}{\mathbb{M}} \nc{\bbN}{\mathbb{N}} \nc{\bbO}{\mathbb{O}}
 \nc{\bbP}{\mathbb{P}} \nc{\bbQ}{\mathbb{Q}} \nc{\bbR}{\mathbb{R}}
 \nc{\bbS}{\mathbb{S}} \nc{\bbT}{\mathbb{T}} \nc{\bbU}{\mathbb{U}}
 \nc{\bbV}{\mathbb{V}} \nc{\bbW}{\mathbb{W}} \nc{\bbX}{\mathbb{X}}
 \nc{\bbZ}{\mathbb{Z}}
 \nc{\bA}{{\bf A}} \nc{\bB}{{\bf B}} \nc{\bC}{{\bf C}}
 \nc{\bD}{{\bf D}} \nc{\bE}{{\bf E}} \nc{\bF}{{\bf F}}
 \nc{\bG}{{\bf G}} \nc{\bH}{{\bf H}} \nc{\bI}{{\bf I}}
 \nc{\bJ}{{\bf J}} \nc{\bK}{{\bf K}} \nc{\bL}{{\bf L}}
 \nc{\bM}{{\bf M}} \nc{\bN}{{\bf N}} \nc{\bO}{{\bf O}}
 \nc{\bP}{{\bf P}} \nc{\bQ}{{\bf Q}} \nc{\bR}{{\bf R}}
 \nc{\bS}{{\bf S}} \nc{\bT}{{\bf T}} \nc{\bU}{{\bf U}}
 \nc{\bV}{{\bf V}} \nc{\bW}{{\bf W}} \nc{\bX}{{\bf X}}
 \nc{\bZ}{{\bf Z}}
\nc{\cA}{{\cal A}} \nc{\cB}{{\cal B}} \nc{\cC}{{\cal C}}
\nc{\cD}{{\cal D}} \nc{\cE}{{\cal E}} \nc{\cF}{{\cal F}}
\nc{\cG}{{\cal G}} \nc{\cH}{{\cal H}} \nc{\cI}{{\cal I}}
\nc{\cJ}{{\cal J}} \nc{\cK}{{\cal K}} \nc{\cL}{{\cal L}}
\nc{\cM}{{\cal M}} \nc{\cN}{{\cal N}} \nc{\cO}{{\cal O}}
\nc{\cP}{{\cal P}} \nc{\cQ}{{\cal Q}} \nc{\cR}{{\cal R}}
\nc{\cS}{{\cal S}} \nc{\cT}{{\cal T}} \nc{\cU}{{\cal U}}
\nc{\cV}{{\cal V}} \nc{\cW}{{\cal W}} \nc{\cX}{{\cal X}}
\nc{\cZ}{{\cal Z}}
\nc{\cnn}{{\cal NN}}
\pacs{03.65.Ud, 03.67.Mn}
\begin{document}

\title{Entangling power of two-qubit unitary operations}

\author{Yi Shen}\email[]{yishen@buaa.edu.cn}
\affiliation{School of Mathematics and Systems Science, Beihang University, Beijing 100191, China}

\author{Lin Chen}\email[]{linchen@buaa.edu.cn (corresponding author)}
\affiliation{School of Mathematics and Systems Science, Beihang University, Beijing 100191, China}
\affiliation{International Research Institute for Multidisciplinary Science, Beihang University, Beijing 100191, China}

\begin{abstract}
The entangling power of a bipartite unitary operation shows the maximum created entanglement with the product input states. For an arbitrary two-qubit unitary operation, it is sufficient to consider its normalized operation $U$ with parameters $c_0,c_1,c_2$ and $c_3$. We show how to compute the entangling power of $U$ when $c_2=c_3$. In particular we construct the analytical expressions of entangling power of such $U$ for two examples. We also formulate the entangling power of bipartite unitary operations of Schmidt rank two for any dimensions.
\end{abstract}

\date{\today}

\maketitle

\tableofcontents

\section{Introduction}
\label{sec:intro}
    
In quantum information, bipartite nonlocal unitary gates can create quantum entanglement. They are respectively the fundamental operation and physical resource in quantum-information tasks, such as teleportation \cite{bbc93}, quantum cryptography \cite{ea91} and steering \cite{zmc16}. It is known that the entanglement of a bipartite pure state $\ket{\ps}_{AB}$ can be measured by the von Neumann entropy $S(\cdot)$ of the reduced density matrix on any one system, i.e. $
E(\ket{\psi}_{AB}):=S(\tr_{A} \proj{\psi}):=S(\tr_{B} \proj{\psi})$. Understanding how much entanglement $U$ can create not only characterizes how useful $U$ is in the above quantum-information tasks, but also is related to the reversibility of resources in quantum computation, i.e., whether there is a bipartite unitary operation whose entanglement cost is strictly greater than its ability to create entanglement \cite{lsw09,sm10,mkz13,cy16,lcly20160808}. 

In this paper we investigate the maximum amount of entanglement a bipartite unitary $U$ on the Hilbert space $\cH_A\ox\cH_B$ can create using a product input state \cite{Nielsen03}. The amount is called the entangling power of $U$. Mathematically it is defined as 
\begin{equation}
\label{def:entanglingpower}
\mathit{K}_{\mathit{E}}(U)\coloneqq \max_{\ket{\phi},\ket{\psi}} \mathit{E}(U(\ket{\phi}\ket{\psi})).
\end{equation}
Here $\ket{\phi}$ and $\ket{\psi}$ are respectively pure states on systems $A R_A$ and $B R_B$, $R_A$ and $R_B$ are local reference systems, and $U\ket{\phi,\psi}$ is a bipartite pure state on the system $AR_A$ and $BR_B$. We shall name the "critical state" for $U$ as the states $\ket{\phi,\psi}$ that realize the maximum in Eq. \eqref{def:entanglingpower}. We will investigate two families of $U$. The first family acts on the two-qubit system, and the second family acts on high dimensions and has Schmidt rank two (The Schmidt rank is defined as the smallest number of tensor product matrices \cite{cy15} ). On the one hand, the two-qubit system is one of the fundamental systems in quantum information and has been widely investigated in entanglement measure and physical models \cite{wootters1998,kc2000,zj03,ra04,zj04,vd04,svv04,vfwc04,kb10,ydy13}. On the other hand, it is known that the Schdmit-rank-two bipartite unitary operation is a controlled unitary operation controlled from both the $A$ and $B$ side \cite{cy15,cy13}. Such operations including the known controlled-NOT(CNOT) gates have been extensively investigated in theory and experiments in the past decades \cite{zj04,vd04,svv04,vfwc04,zj040416}. 
As far as we know, it is a hard problem to show the explicit expression of entangling power of a two-qubit unitary operation of Schmidt rank four due to mathematical difficulties, and even showing a non-trival, computationally efficient upper bound of entangling power is very difficult. Recently, Ref. \cite{sd1803} determines a non-trival upper bound of entangling power of bipartite unitary operation. 

We shall present the normalized operation $U$ of an arbitrary two-qubit unitary operation via four parameters $c_0,c_1,c_2$ and $c_3$, defined in Eq. \eqref{eq:twoqubit}. The normalized $U$ is equivalent to the original two-qubit unitary operation when considering entangling power, since it is known that local unitary transformations don't change the entanglement. Hence, it suffices to investigate the entangling power of the normalized $U$. It follows from Eq. \eqref{eq:twoqubit} that the Schmidt rank of $U$ is the number of nonzero coefficients $c_0,c_1,c_2$ and $c_3$. We study Schmidt-rank-four $U$ with $c_2=c_3$ not only because we devise an efficient method to compute its entangling power, but also because such $U$ is of wide interest in quantum information. For example, the family of Schmidt-rank-four $U$ with $c_2=c_3$ includes some essential quantum gates such as SWAP gate and CNOT gate. Studying such $U$ will help us understand how the entangling power of $U$ varies, apart from the SWAP and controlled phase gates. We construct a bipartite unitary $V(\g)$ which commutes with Schmidt-rank-four $U$ with $c_2=c_3$. Using the commutativity, we simplify the critical states as $(\cos\a\ket{00}+\sin\a\ket{11})\otimes(\cos\b\ket{00}+\sin\b\ket{11})$  in Theorem \ref{pp:fam2-1}. We further derive a necessary condition that $\a+\b={\pi\over2}$ for the critical states when $\a\in(0,{\pi\over2})$ and $\b\in(0,{\pi\over2})$. In Proposition \ref{pp:lowerbound} we show the created entanglement is the function of a single variable $\a\in[0,{\pi\over4}]$. Then we analytically derive the expressions of entangling power of two examples, respectively in Theorem \ref{thm:keuofspeu} and Theorem \ref{thm:keuoffam-2}. For the first example, we show the critical states should be either the local product state $(\a=0)$ or the local maximally entangled state $(\a={\pi\over4})$, up to equivalence. For the second example, we show the critical states should only be the local maximally entangled state $(\a={\pi\over4})$, up to equivalence. We also formulate the entangling power of Schmidt-rank-two bipartite unitary operations for dimensions higher than $3$ in Theorem \ref{thm:Schr2}.

The rest of this paper is organized as follows. In Sec. \ref{sec:pre} we present some preliminary results used in this paper. In Sec. \ref{sec:normaldecom}
we characterize the normal decomposition and mathematical properties of two-qubit unitary operation $U$. In Sec. \ref{sec:2qubit} we first introduce the physical meaning of a family of two-qubit unitary operations $U$ defined in Eq. \eqref{eq:twoqubit} with $c_2=c_3$ as our research motivation, and then investigate the entangling power of $U$ in this family. We construct the analytical forms of entangling power for two examples of such $U$ in Sec. \ref{sec:specialU}. For higher dimensions, we formulate the entangling power of Schmidt-rank-two bipartite unitary gates in Sec. \ref{sec:Schr2}. Finally we present open problems in Sec. \ref{sec:open}.

\section{Preliminaries}
\label{sec:pre}

In this section we introduce two preliminary results from quantum information. They will be used to support our main results. For two density matrices $\r$ and $\s$, we write $\r\prec\s$ to denote that the eigenvalue vector of $\r$ is majorized by that of $\s$. The following fact is known.
\begin{lemma}\cite{mc2011}
\label{le:maj}
If $\r\prec\s$ then $S(\r)\ge S(\s)$.	
\end{lemma}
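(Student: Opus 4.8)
The plan is to reduce the statement to a fact about probability vectors. Since the von Neumann entropy $S(\r)=-\tr(\r\log\r)$ depends only on the eigenvalues of $\r$, I write $x=(x_1,\dots,x_n)$ and $y=(y_1,\dots,y_n)$ for the eigenvalue vectors of $\r$ and $\s$. These are probability vectors, and the hypothesis $\r\prec\s$ is precisely $x\prec y$. Thus it suffices to prove $H(x)\ge H(y)$, where $H(p)=-\sum_i p_i\log p_i$ is the Shannon entropy, because $S(\r)=H(x)$ and $S(\s)=H(y)$.

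The key step is to invoke the Hardy--Littlewood--P\'olya characterization of majorization: $x\prec y$ holds if and only if there is a doubly stochastic matrix $D=(D_{ij})$ with $x_i=\sum_j D_{ij}y_j$, where $\sum_j D_{ij}=1$ for every $i$ and $\sum_i D_{ij}=1$ for every $j$. With this at hand, I would use the concavity of $\eta(t)=-t\log t$ (its second derivative $-1/t$ is negative on $(0,1]$) together with Jensen's inequality and the row-sum condition $\sum_j D_{ij}=1$ to bound each term by $\eta(x_i)=\eta\bigl(\sum_j D_{ij}y_j\bigr)\ge\sum_j D_{ij}\,\eta(y_j)$. Summing over $i$ and exchanging the order of summation then gives $H(x)=\sum_i\eta(x_i)\ge\sum_j\bigl(\sum_i D_{ij}\bigr)\eta(y_j)=\sum_j\eta(y_j)=H(y)$, where the final equality uses the column-sum condition $\sum_i D_{ij}=1$. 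Hence $S(\r)\ge S(\s)$.

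The only substantive ingredient is the doubly-stochastic characterization of majorization; the remaining computation is routine once concavity and Jensen are in place, so I expect no real obstacle here. An alternative route bypasses this theorem by observing that $H$ is a symmetric concave function of its arguments and is therefore Schur-concave, whence $x\prec y$ immediately yields $H(x)\ge H(y)$ by the definition of Schur-concavity; this, however, merely repackages the same content. Since the lemma is quoted from \cite{mc2011}, either route reproduces a standard textbook argument.
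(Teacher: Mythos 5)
Your proof is correct. The paper itself offers no proof of this lemma---it is quoted directly from the cited reference---so there is nothing to compare against; your argument (reduce $S$ to the Shannon entropy of the eigenvalue vectors, invoke the Hardy--Littlewood--P\'olya doubly stochastic characterization of $x\prec y$, and apply Jensen's inequality to the concave function $\eta(t)=-t\log t$ using the row and column sums of $D$) is the standard textbook derivation and is complete, up to the routine remark that $\eta$ is extended continuously by $\eta(0)=0$.
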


\begin{lemma}\cite[Theorem 11.9]{nc2000book}
\label{le:povm}
Let $\{P_i\}$ be an orthogonal complete POVM, i.e., $\sum_i P_i = I$ and $P_i P_j = \d_{ij}P_i$. Let $\r$ be a quantum state. Then
\begin{eqnarray}
S( \sum_i P_i \r P_i ) \ge S(\r),	
\end{eqnarray}
where the equality holds iff $\sum_i P_i \r P_i = \r$.	
\end{lemma}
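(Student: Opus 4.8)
The plan is to reduce both the inequality and its equality clause to the non-negativity of the quantum relative entropy (Klein's inequality), since that single tool delivers everything cleanly. Throughout, write $\rho':=\sum_i P_i\rho P_i$ for the pinched (post-measurement) state. The first step is to record that $\rho'$ commutes with every $P_j$. Using $P_iP_j=\d_{ij}P_i$ one computes $P_j\rho'=P_j\rho P_j=\rho'P_j$, so $[\rho',P_j]=0$; hence $\rho'$ is block diagonal with respect to the resolution $I=\sum_i P_i$, and by the functional calculus $\log\rho'$ commutes with each $P_j$ as well.

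The heart of the argument is then the identity $\tr(\rho\log\rho')=\tr(\rho'\log\rho')$. I would prove it by expanding $\rho'$ and, term by term, moving a projector through $\log\rho'$ by commutativity and collapsing it with $P_i^2=P_i$:
\begin{equation}
\tr(\rho'\log\rho')=\sum_i\tr(P_i\rho P_i\log\rho')=\sum_i\tr(\rho P_i\log\rho')=\tr\!\Big(\rho\Big(\sum_i P_i\Big)\log\rho'\Big)=\tr(\rho\log\rho').
\end{equation}
With this identity, the relative entropy $S(\rho\|\rho'):=\tr(\rho\log\rho)-\tr(\rho\log\rho')$ telescopes into $S(\rho')-S(\rho)$. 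Klein's inequality gives $S(\rho\|\rho')\ge 0$, with equality iff $\rho=\rho'$, which is precisely the asserted $S(\rho')\ge S(\rho)$ together with its equality condition.

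The only point demanding genuine care, and the step I expect to be the main obstacle, is the support issue in $\log\rho'$: one must check $\mathrm{supp}(\rho)\subseteq\mathrm{supp}(\rho')$ so that $\tr(\rho\log\rho')$ is finite. This holds because each $P_i\rho P_i$ is supported inside the range of $P_i$; and in the degenerate case where it fails, $S(\rho\|\rho')=+\infty$ and the inequality is immediate, so no information is lost.

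Alternatively, for the inequality alone one can bypass relative entropy and invoke Lemma \ref{le:maj} directly: averaging the unitary conjugates $U_{\vec\theta}\rho U_{\vec\theta}^\dagger$, with $U_{\vec\theta}=\sum_i e^{\mathrm{i}\theta_i}P_i$, over independent uniform phases $\theta_i$ kills the off-diagonal blocks $P_i\rho P_j$ ($i\neq j$) and reproduces $\rho'$; thus the pinching is a mixture of unitary conjugations, hence a unital trace-preserving map, so $\rho'\prec\rho$, and Lemma \ref{le:maj} yields $S(\rho')\ge S(\rho)$ at once. I would nonetheless keep the relative-entropy argument as the main line, because recovering the equality condition from majorization is more delicate (it requires tracking when the strict Schur-concavity of $S$ degenerates), whereas Klein's inequality supplies it for free.
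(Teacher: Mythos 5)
Your proof is correct; the paper offers no proof of its own here, simply citing Theorem 11.9 of Nielsen and Chuang, and your main line — establishing $[\rho',P_j]=0$, deducing the pinching identity $\tr(\rho\log\rho')=\tr(\rho'\log\rho')$, and then applying Klein's inequality to get both the inequality and the equality condition — is precisely the argument in that cited reference. Your alternative via averaging the phase unitaries $\sum_i e^{\mathrm{i}\theta_i}P_i$ is a nice tie-in to Lemma \ref{le:maj}, and your remark that this route does not cheaply recover the equality clause is well placed.
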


A natural measure of entanglement for $U$ is the Schmidt Strength $K_{Sch}(U)$ defined as follows. We can always decompose $U$ acting on systems $A$ and $B$ as
$
U=\sum_j s_j A_j\otimes B_j,
$
where $s_j\geq0$ and $\{A_j\}$ and $\{B_j\}$ are orthonormal operator bases for $A$ and $B$ whose dimensions are $d_A$ and $d_B$, respectively. Then $K_{Sch}(U)$ is defined to be the Shannon entropy $H(\cdot)$ of the distribution $s_j^2/(d_A d_B)$,
\begin{equation}
\label{eq:kschU}
K_{Sch}(U):=H\big(\big\{\frac{s_j^2}{d_A d_B}\big\}\big).
\end{equation}
$K_{Sch}(U)$ has several good properties and is related to $K_E(U)$. For example, $K_{Sch}(U)$ satisfies the properties of exchange symmetry, time-reversal invariance, stability with respect to local ancillas, and is strongly additive distinctively.
It is shown that $K_{Sch}(U)$ is a lower bound of $K_E(U)$ for all unitaries $U$, and is a strict lower bound for $U_p=(\sqrt{1-p}\s_0 \otimes \s_0+i\sqrt{p} \s_1\otimes \s_1)(\sqrt{1-p}\s_0 \otimes \s_0+i\sqrt{p} \s_3\otimes \s_3)$ with certain $p$ \cite[Theorem 1]{Nielsen03}.

\section{Normalized two-qubit unitary operations $U$}
\label{sec:normaldecom}

In this section we characterize the normal decomposition of an arbitrary two-qubit unitary operation and mathematical properties of the normalized operation $U$.
In Sec. \ref{subsec:decom}, we introduce a normal decomposition in Eq. \eqref{eq:decom}. The $U$ in terms of three real parameters $x,y,z$ in Eq. \eqref{eq:twoqubit} is called normalized. 
For simplifying the computation of entangling power of $U$, we present constraints on $x,y,z$ in \eqref{eq:xyz} and \eqref{eq:xyz2}. In Sec. \ref{subsec:mathproperty} we present some widely used mathematical properties of the normalized $U$ satisfying 
\eqref{eq:xyz} and \eqref{eq:xyz2}.

\subsection{Normal decomposition}
\label{subsec:decom}

The number of parameters of any two-qubit unitary operation $W_{AB}$ can be reduced from $15$ to $3$ through an efficient decomposition \cite{kc2000}. That is, there exists local unitary operators $U_A,U_B,V_A,V_B$ and the two-qubit unitary operator $U$ such that 
\begin{eqnarray}
\label{eq:decom}
W_{AB}=(U_A\otimes U_B) U (V_A\otimes V_B),
\end{eqnarray}
where
\begin{eqnarray}
\label{eq:twoqubit}	
U:=\sum^3_{j=0} c_j\s_j \ox \s_j.
\end{eqnarray}
Here $\s_j$'s are the Pauli matrices, i.e., 
\begin{eqnarray}
\label{eq:s1}
\s_0=\left(
                   \begin{array}{cc}
                     1 & 0 \\
                     0 & 1 \\
                     \end{array}
                 \right),
~~
\s_1=\left(
                   \begin{array}{cc}
                     0 & 1 \\
                     1 & 0 \\
                     \end{array}
                 \right),
~~
\s_2=\left(
                   \begin{array}{cc}
                     0 & -i \\
                     i & 0 \\
                     \end{array}
                 \right),
~~
\s_3=\left(
                   \begin{array}{cc}
                     1 & 0 \\
                     0 & -1 \\
                     \end{array}
                 \right),
\end{eqnarray}
and the coefficients $c_j$'s are complex numbers   \cite[Eq. (4.3)]{Nielsen03}
\begin{equation}
\label{eq:c0-c3exp}
\begin{aligned}
c_0&=\cos x\cos y\cos z+i \sin x\sin y\sin z,
\\
c_1&=\cos x\sin y\sin z+i \sin x\cos y\cos z,
\\
c_2&=\sin x\cos y\sin z+i \cos x\sin y\cos z,
\\
c_3&=\sin x\sin y\cos z+i \cos x\cos y\sin z,	
\end{aligned}
\end{equation}
for $x,y,z\in(-\p/4,\p/4]$.
As we know, the entanglement is not changed by local unitary transformations for any measure. So for any two-qubit unitary operations it suffices to study the entangling power $K_E(U)$ of the normalized $U$.

\begin{remark}
\label{re:2qubitu-1}
Since $\s_j$'s are linearly independent, the Schmidt rank of $U$ is the same as the number of nonzero $c_j$'s in \eqref{eq:twoqubit}. If some $c_j$ is zero then the entangling power of $U$ is known \cite[Proposition 2. and Theorem 2.]{Nielsen03}, and we have extended it to arbitrary Schmidt-rank-two bipartite unitaries in Sec. \ref{sec:Schr2}.  The entangling power of any bipartite permutation unitary of Schmidt rank three is also known from \cite[Proposition 3]{lcly20160808}. Further, if all $c_j$'s have modulus $1/2$ then it is easy to verify that the entangling power of $U$ reaches the maximum, i.e., $2$ ebits.
\end{remark}

The remaining problem is how to obtain the entangling power of $U$ of Schmidt rank four. Therefore, we shall assume that $c_j\ne0$ for any $j$, and one of $c_j$'s has modulus greater than $1/2$. For convenience, we can impose restrictions on parameters $x,y,z$ as follows.
\begin{eqnarray}
\label{eq:xyz}	
&&\frac{\pi}{4}\geq x\geq y\geq z\geq 0,\\
\label{eq:xyz2}
&&{\p\over4}>y>0.
\end{eqnarray}
Eq. \eqref{eq:xyz} follows from \cite[Eq. (13)]{kc2000}. For the entangling power of $U$ whose $x,y,z$ don't satisfy \eqref{eq:xyz} is equal to some satisfying \eqref{eq:xyz} via some symmetric relations \cite{kc2000}, and Eq. \eqref{eq:xyz2} follows from $c_j\ne0$ for any $j$, and one of $c_j$'s has modulus greater than $1/2$. Hence, we will investigate the entangling power of the normalized $U$ satisfying \eqref{eq:xyz} and \eqref{eq:xyz2}.

\subsection{Mathematical properties}
\label{subsec:mathproperty}

From \eqref{eq:c0-c3exp}, we find $c_j$'s are closely related from each other. We construct a few formulas on $c_j$'s. They will be widely used in the remaining sections. 
\begin{eqnarray}
&&\label{eq:c03+}
\abs{c_0+c_3}^2-\abs{c_1-c_2}^2=\cos2(x-y),
\\&&\label{eq:c03+}
\abs{c_0-c_3}^2-\abs{c_1+c_2}^2=\cos2(x+y),
\\&&\label{eq:c03}
\abs{c_0+c_3}^2+\abs{c_1-c_2}^2=\abs{c_0-c_3}^2+\abs{c_1+c_2}^2=1,
\\&&\label{eq:c0}
\abs{c_0}\ge \max_{j=1,2,3}\{\abs{c_j}\},
\\&&\label{eq:c0>12}
\abs{c_0}>{1\over2},
\\&&\label{eq:cos2x}
\abs{c_0}^2+\abs{c_3}^2
=
{1\over2}
(1+\cos2x\cos2y)
>{1\over2}
> 
\abs{c_1}^2+\abs{c_2}^2,
\\&&\label{eq:c0c3c1c2}
c_0c_3^*+c_0^*c_3=
c_1c_2^*+c_1^*c_2=
{1\over2}\sin2x\sin2y>0,
\\&&\label{eq:c0-c3}
(c_0c_3^*-c_0^*c_3)^2=-\frac{1}{4}(\cos2x+\cos2y)^2\sin^22z\leq 0,
\\&&\label{eq:c1-c2}
(c_1c_2^*-c_1^*c_2)^2=-\frac{1}{4}(\cos2x-\cos2y)^2\sin^22z\leq 0,
\\&&\label{eq:c0c3}
\abs{c_0c_3}^2-\abs{c_1c_2}^2
={1\over4}\cos2x\cos2y(\sin2z)^2\ge0,
\\&&\label{eq:c02-c12}
\abs{c_0}^2-\abs{c_1}^2
={1\over2}\cos2x(\cos2y+\cos2z),
\\&&\label{eq:c02-c22}
\abs{c_0}^2-\abs{c_2}^2
={1\over2}\cos2y(\cos2x+\cos2z),
\\&&\label{eq:c02-c32}
\abs{c_0}^2-\abs{c_3}^2
={1\over2}\cos2z(\cos2x+\cos2y).
\end{eqnarray}
Eq. \eqref{eq:c0>12} is from the assumption that one of $c_j$'s has modulus greater than $1/2$, and the restriction \eqref{eq:xyz2} assures the positivity of Eq. \eqref{eq:c0c3c1c2}. Eqs. \eqref{eq:c02-c12}-\eqref{eq:c02-c32} imply that if $\abs{c_0}=\abs{c_i}$ for some $i\in\{1,2,3\}$ then $\abs{c_j}=\abs{c_k}$ for $j,k\in\{1,2,3\}\setminus \{i\}$, and the same equations imply that if $\abs{c_1}=\abs{c_2}=\abs{c_3}$ then they are also equal to $\abs{c_0}$.

By the definition of entangling power in Eq. \eqref{def:entanglingpower}, we select a family of input states defined as follows. Up to equivalence, one can show that they cover all input states for computing the entangling power of $U$.

\begin{definition}
\label{def:inputstates}
\begin{equation}
\label{eq:definstates1}
\begin{aligned}
\ket{\ps(\a;\t,\m)}=\cos\a\ket{0,0}+\sin\a\ket{1}(e^{i\t}\cos \m\ket{0}+\sin \m\ket{1}),\\
\ket{\phi(\b;\x,\n)}=\cos\b\ket{0,0}+\sin\b\ket{1}(e^{i\x}\cos \n\ket{0}+\sin \n\ket{1}),\\
\end{aligned}
\end{equation}
where
\begin{eqnarray}
\label{eq:ab}
\a\in[0,\frac{\pi}{2}],
~~~~~~
\b\in[0,\frac{\pi}{2}],
~~~~~~
\t,\x\in[0,2\p),
~~~~~~
\m,\n\in(0,\frac{\pi}{2}].
\end{eqnarray}
$\ket{\ps(\a;\t,\m)}$ is the input state on system $AR_A$ and $\ket{\phi(\b;\x,\n)}$ is the input state on system $BR_B$.
Then the output state of $U$ is the bipartite pure state
\begin{eqnarray}
\label{eq:upsi}	
\ket{\ph^\t_\x(\a,\b;\m,\n)}:=U(\ket{\ps(\a;\t,\m)}_{AR_A}\ox\ket{\phi(\b;\x,\n)}_{BR_B})
\end{eqnarray}
which has entanglement $E(\ph^\t_\x(\a,\b;\m,\n))$.
\qed
\end{definition}

\begin{remark}
\label{re:defofinput-1}
Since $E(\ph^\t_\x(\a,\b;\m,\n))$ is a continuous function with $\a,\b,\t,\x,\m,\n$, $E(\ph^\t_\x(\a,\b;\m,\n))$ reaches its maximum in its domain \eqref{eq:ab}. The maximum is equal to the entangling power of $U$ by definition. We shall omit $\t$ and $\x$ respectively when we set $\m=\frac{\pi}{2}$ and $\n=\frac{\pi}{2}$ respectively.
\end{remark}

Though computing the entangling power of $U$ of Schmidt rank four in \eqref{eq:twoqubit} is a hard problem, we will show a practical method of computing the entangling power for the family of $U$ with $c_2=c_3$ in the next section by using the normal decomposition and mathematical properties of $U$ developed in this section.

\section{The physical meaning of $U$ with $c_2=c_3$ and its entangling power}
\label{sec:2qubit}

We begin this section with introducing the rich physical meaning of normalized $U=\sum^3_{j=0} c_j\s_j\ox\s_j$ with $c_2=c_3$ in \eqref{eq:c0-c3exp}. It is our motivation to investigate the entangling power of such $U$.

We stress that the normalized $U$ with $c_2=c_3$ in \eqref{eq:c0-c3exp} is of wide interest in quantum information.
The condition $c_2=c_3$ is equivalent to $y=z$, hence
\begin{eqnarray}
\label{eq:y=z}
U=
\bma 
e^{iy}\cos(x-y)&0&0&ie^{iy}\sin(x-y)\\
0&e^{-iy}\cos(x+y)&ie^{-iy}\sin(x+y)&0\\
0&ie^{-iy}\sin(x+y)&e^{-iy}\cos(x+y)&0\\
e^{iy}\cos(x-y)&0&0&ie^{iy}\sin(x-y)\\
\ema.	
\end{eqnarray}
First if $x=y={{\p\over4}}$ then $U$ becomes the well-known SWAP gate. It has the maximum entangling power 2 ebits, and its implementation requires  three controlled unitary gates and some local unitary gates \cite[Lemma 1]{cy15}. Second, if $y=0$ then up to local unitary gates $U$ is equivalent to the diagonal matrix $\diag(1,1,1,e^{4ix})$. Since $x\in\bbR$, this is an arbitrary controlled phase gate. For example the controlled-NOT gate is a basic ingredient in quantum circuit, and it is known that the CNOT gate has entangling power 1 ebit. It is thus a more general and fundmental question to understand how the entangling power of $U$ varies, apart from the SWAP and controlled phase gates. Third,
there is an interesting subfamily of $U$ with $c_2=c_3$, that is 
\begin{equation}
\label{eq:U_p-1}
\begin{aligned}
U_p&=(1-p)\s_0 \otimes \s_0+ p\s_1 \otimes \s_1+ i\sqrt{p(1-p)}(\s_2 \otimes \s_2+\s_3 \otimes \s_3)\\
   &=\big(\sqrt{1-p}\s_0 \otimes \s_0+i\sqrt{p} \s_2\otimes \s_2)(\sqrt{1-p}\s_0 \otimes \s_0+i\sqrt{p} \s_3\otimes \s_3 \big).
\end{aligned}
\end{equation}
Firstly, $U_p$ can be decomposed into a product of two Schmidt-rank-two unitary operations, so it is more realizable in experiments. Secondly, its properties about entanling power is clear. For example, it can be used to construct unitaries $U$ such that $K_E^{A_1A_2:B_1B_2}(U_{A_1B_1}\otimes U_{A_2B_2})>2K_E^{A_1B_1}(U_{A_1B_1})$, where the subscripts on $U$ indicate the subsystems to which it is applied \cite[Theorem 3]{Nielsen03}. It implies $K_E(U)$ is superadditive, while the Schmidt Strength $K_{Sch}(U)$ is additive.
Fourth, the two-qubit X state 
$
\r=
\bma  
a_1 &0&0&a_2\\
0&b_1&b_2&0\\
0&b_2^*&b_3&0\\
a_2^* &0&0&a_3\\
\ema
$ has been extensively studied in the past years, not only for entanglement, but also for other kinds of quantum correlations, such as quantum discord \cite{MAG10,CQXstates11}. Quantum discord is defined as the difference between quantum mutual information and classical correlation in a bipartite system, and has been shown it can offer some advantage to several tasks in quantum information processing. We claim that any $\r$ can be realized by performing some $U$ and local unitary $D$ on a diagonal two-qubit state $\s$. The latter is a classical-correlated state containing no quantum correlation, namely it has zero quantum discord \cite{LV01,VVcorrelations03}. To explain our claim, we note that $\r=V\s V^\dg$ with an $X$ unitary operation $V=\bma 
\cos(x-y)&0&0&\sin(x-y)\\
0&\cos(x+y)&\sin(x+y)&0\\
0&e^{im}\sin(x+y)&-e^{im}\cos(x+y)&0\\
e^{in}\cos(x-y)&0&0&-e^{in}\sin(x-y)\\
\ema$ for real $m,n$,  
and a diagonal $\s$. Let $D=\diag(1,e^{ip})\ox \diag(1,e^{iq})$ and $F$ be a diagonal unitary gate and a local unitary gate, respectively. By choosing suitable $D$ and $F$ one can obtain that $U=DVF$. So $\r=D^\dg U\s U^\dg D$, and we have proven our claim. One can extend the above claim to constructing multiqubit $X$ states using more $U$'s in \eqref{eq:y=z}, and such states have recently been interesting in entanglement theory \cite{Chen2017Separability, Han2016Construction, Mendonca2015Maximally}. The claim is relatively realizable in experiments, because $D$ and $\s$ are both diagonal, and $U$ is of an $X$-type operation. By investigating the entangling power of $U$ with $c_2=c_3$, we can see how the entanglement of two-qubit $X$ state varies, since $\s$ contains no entanglement.

In the following of this section, we investigate the entangling power of normalized $U=\sum^3_{j=0} c_j\s_j\ox\s_j$ with $c_2=c_3$ in \eqref{eq:c0-c3exp}. We will show that the critical states of such $U$ should be $\ket{\ph(\a,\b;\frac{\pi}{2},\frac{\pi}{2})}$ defined in \eqref{eq:upsi}. Then $E(\ph(\a,\b;\frac{\pi}{2},\frac{\pi}{2}))$ is the entanglement produced by $\ket{\ph(\a,\b;\frac{\pi}{2},\frac{\pi}{2})}$.

It has been shown that for normalized $U$ satisfying the condition $c_1=c_2=c_3$, i.e., $x=y=z$ in \eqref{eq:c0-c3exp}, the critical states can always be written as $\ket{\ph(\a,\b;\frac{\pi}{2},\frac{\pi}{2})}$ no matter which measure is selected \cite{kc2000}. In Theorem \ref{pp:fam2-1} we generalize the result to the case when $U$ satisfy the condition $c_2=c_3$ i.e., $y=z$ if we select the von Neumann entropy as the measure. Before that we present Lemma \ref{le:nu=pi/2} to support Theorem \ref{pp:fam2-1}.

\begin{lemma}
\label{le:nu=pi/2}	
The critical states of $U$ can be written as $\ket{\ph(\a,\b;\frac{\pi}{2},\frac{\pi}{2})}$ if they  can be written as $\ket{\ph^\t(\a,\b;\m,\frac{\pi}{2})}$.
\end{lemma}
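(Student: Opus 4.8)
The plan is to show that, once $\nu=\frac{\pi}{2}$ is fixed, replacing the $A$-side input by its dephasing in the computational basis can only increase the produced entanglement; since this dephasing makes the reduced input state on system $A$ diagonal, the resulting critical state is exactly of the form $\ket{\ph(\a,\b;\frac{\pi}{2},\frac{\pi}{2})}$. First I would record the elementary fact that the output reduced state on $BR_B$ depends on the input only through the reduced state $\r_A=\tr_{R_A}\proj{\ps(\a;\t,\m)}$ on $A$: since $U$ acts trivially on $R_A$ and $\proj{\phi}$ lives on $BR_B$, the partial trace over $R_A$ passes through $U$, giving
\begin{eqnarray}
\r_{BR_B}=\tr_{AR_A}\big[U(\proj{\ps}\ox\proj{\phi})U^\dg\big]=\tr_{A}\big[U(\r_A\ox\proj{\phi})U^\dg\big].
\end{eqnarray}
Because the computational-basis diagonal of $\r_A$ is $\diag(\cos^2\a,\sin^2\a)$ regardless of $\t,\m$, the dephased state $\D(\r_A):=\proj{0}\r_A\proj{0}+\proj{1}\r_A\proj{1}$ is precisely the reduced input state produced by $\ket{\ps(\a;\frac{\pi}{2})}$; hence the output for the dephased input is the $BR_B$-reduced state of $\ket{\ph(\a,\b;\frac{\pi}{2},\frac{\pi}{2})}$.

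The heart of the argument is to recognize this dephasing at the level of the output. I would use that $\s_3\ox\s_3$ commutes with every $\s_j\ox\s_j$, hence with $U$, which rearranges to $U\s_3^A=\s_3^A\s_3^B U\s_3^B$, where $\s_3^A,\s_3^B,\s_3^{R_B}$ denote $\s_3$ acting on $A,B,R_B$. The condition $\nu=\frac{\pi}{2}$ enters decisively here: for $\ket{\phi}=\cos\b\ket{00}+\sin\b\ket{11}$ one has $\s_3^B\ket{\phi}=\s_3^{R_B}\ket{\phi}$, and $\s_3^{R_B}$ commutes with $U$. Writing $\D(\r_A)=\frac12(\r_A+\s_3^A\r_A\s_3^A)$ and pushing these relations through $U$ and the trace over $A$ (using $\tr_A[\s_3^A X\s_3^A]=\tr_A[X]$), I expect to obtain
\begin{eqnarray}
\tr_A\big[U(\D(\r_A)\ox\proj{\phi})U^\dg\big]={1\over2}\big(\r_{BR_B}+W\r_{BR_B}W\big),\qquad W:=\s_3^B\s_3^{R_B}.
\end{eqnarray}
Since $W$ is a $\pm1$-reflection, this equals $P_+\r_{BR_B}P_++P_-\r_{BR_B}P_-$ for the spectral projectors $P_\pm$ of $W$; that is, the dephased output is a pinching of $\r_{BR_B}$.

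With this identity in hand the conclusion follows from Lemma \ref{le:povm} applied to the orthogonal complete POVM $\{P_+,P_-\}$: the pinched state has entropy at least $S(\r_{BR_B})$, so $E(\ph(\a,\b;\frac{\pi}{2},\frac{\pi}{2}))\ge E(\ph^\t(\a,\b;\m,\frac{\pi}{2}))$ for the same $\a,\b$. If the latter equals the entangling power, then so does the former, and $\ket{\ph(\a,\b;\frac{\pi}{2},\frac{\pi}{2})}$ is itself a critical state, which is the claim.

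I expect the main obstacle to be the bookkeeping behind the pinching identity, namely verifying that conjugating the $A$-dephasing through $U$ converts it into dephasing by $W=\s_3^B\s_3^{R_B}$ on the output. The only genuinely nontrivial ingredient is the interplay of the commutation $[\s_3\ox\s_3,U]=0$ with the relation $\s_3^B\ket{\phi}=\s_3^{R_B}\ket{\phi}$, and it is exactly this relation that requires the hypothesis $\nu=\frac{\pi}{2}$; everything else is the manipulation of partial traces already set up in the first step.
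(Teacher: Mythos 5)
Your proof is correct and follows essentially the same route as the paper: both arguments pinch the output state $\r_{BR_B}$ with the orthogonal projectors onto $\mathrm{span}\{\ket{00},\ket{11}\}$ and $\mathrm{span}\{\ket{01},\ket{10}\}$ of $BR_B$ (your spectral projectors $P_\pm$ of $W=\s_3^B\s_3^{R_B}$ are exactly the paper's $P_1,P_2$) and then invoke Lemma \ref{le:povm}. The only difference is that you explicitly establish, via $[\s_3\ox\s_3,U]=0$ and $\s_3^B\ket{\phi}=\s_3^{R_B}\ket{\phi}$, the key identity that the pinched output of the general input equals the output of the input $\ket{\ph(\a,\b;\frac{\pi}{2},\frac{\pi}{2})}$, which the paper merely asserts with ``one can verify.''
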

\begin{proof}
Suppose $\rho_{BR_B}$ represents the reduced density matrix and $\rho_{BR_B}'=P_1\rho_{BR_B}P_1^\dg+P_2\rho_{BR_B}P_2^\dg$ where $P_1=\ketbra{00}{00}+\ketbra{11}{11}$ and $P_2=\ketbra{01}{01}+\ketbra{10}{10}$. Since $\{P_1,P_2\}$ is an orthogonal complete POVM, Lemma \ref{le:povm} implies $S(\rho_{BR_B})\leq S(\rho_{BR_B}')$. 

Suppose $\r_{BR_B}$ and $\r_{BR_B}'$ respectively become $\rho_{1BR_B}$ and $\rho_{1BR_B}'$ when we set the input states as $\ket{\ph^\t(\a,\b;\m,\frac{\pi}{2})}$, and respectively become $\rho_{2BR_B}$ and $\rho_{2BR_B}'$ when we set the input states as $\ket{\ph(\a,\b;\frac{\pi}{2},\frac{\pi}{2})}$. If the critical states can be written as $\ket{\ph^\t(\a,\b;\m,\frac{\pi}{2})}$, then the maximum von Neumann entropy of $\rho_{1BR_B}'$ is an upper bound of the entangling power. Next we will show this upper bound can be attained via the input states $\ket{\ph(\a,\b;\frac{\pi}{2},\frac{\pi}{2})}$. One can verify that $\rho_{1BR_B}'=\rho_{2BR_B}'=\rho_{2BR_B}$. So we have $K_E(U)\leq\max S(\rho_{1BR_B}')=\max S(\rho_{2BR_B}')=\max S(\rho_{2BR_B})\leq K_E(U)$. It implies $\max S(\rho_{2BR_B})= K_E(U)$. So we conclude the critical states can always be written as $\ket{\ph(\a,\b;\frac{\pi}{2},\frac{\pi}{2})}$.
\end{proof}

\begin{theorem}
\label{pp:fam2-1}
If normalized operations $U$ satisfy $c_2=c_3$, then $K_E(U)=\max\limits_{\a,\b\in[0,\p/2]}E(\ph(\a,\b;\frac{\pi}{2},\frac{\pi}{2}))$ with the critical states as $\ket{\ph(\a,\b;\frac{\pi}{2},{\p\over2})}
$.
\end{theorem}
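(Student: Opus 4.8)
The plan is to show that the maximization in \eqref{def:entanglingpower} may be restricted to the two-parameter family $\ket{\ph(\a,\b;\frac{\pi}{2},\frac{\pi}{2})}$, i.e.\ that some critical state has $\m=\n=\frac{\p}{2}$. First I would make the statement cleaner by passing to reduced states. Since a local unitary on a reference changes neither the entanglement nor membership in the family of Definition \ref{def:inputstates}, and since two purifications on $AR_A$ with the same reduced state on $A$ differ by such a local unitary, $E(\ph^\t_\x(\a,\b;\m,\n))$ depends on the inputs only through the one-qubit reduced states $\r_A=\tr_{R_A}\proj{\ps(\a;\t,\m)}$ and $\r_B=\tr_{R_B}\proj{\phi(\b;\x,\n)}$. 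The off-diagonal entry of $\r_A$ is proportional to $\cos\m$, so for $\a\in(0,\frac{\p}{2})$ the condition $\m=\frac{\p}{2}$ is exactly ``$\r_A$ diagonal in the computational basis'' (and similarly $\n=\frac{\p}{2}$ means $\r_B$ diagonal). Thus the theorem is equivalent to: the maximizing pair $(\r_A,\r_B)$ may be chosen diagonal.

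\textbf{The engine and its mirror.} Lemma \ref{le:nu=pi/2} supplies half of this. Its dephasing argument, built on the orthogonal complete POVM $\{P_1,P_2\}$ adapted to the two invariant subspaces $\mathrm{span}\{\ket{00},\ket{11}\}$ and $\mathrm{span}\{\ket{01},\ket{10}\}$ of the $X$-shaped $U$ in \eqref{eq:y=z}, shows that once $\n=\frac{\p}{2}$ the coherence of $\r_A$ can be removed without lowering the output entropy (via Lemma \ref{le:povm}), forcing $\m=\frac{\p}{2}$. Because each $\s_j\ox\s_j$ is symmetric under interchange of the two tensor factors, $U$ commutes with $\mathrm{SWAP}$, so the whole problem is invariant under $A\lra B$, i.e.\ under $(\a,\t,\m)\lra(\b,\x,\n)$. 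Running the proof of Lemma \ref{le:nu=pi/2} on $\r_{AR_A}$ instead of $\r_{BR_B}$ then yields the mirror statement: once $\m=\frac{\p}{2}$, one may take $\n=\frac{\p}{2}$. Hence it only remains to arrange that \emph{one} of $\m,\n$ equals $\frac{\p}{2}$ for some critical state, after which one of these two lemmas closes the argument.

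\textbf{Where $c_2=c_3$ enters.} This last step is the crux, and it is where the hypothesis is used. A direct check shows that the local rotation $V(\g)=e^{i\g\s_1}\ox e^{i\g\s_1}$ fixes $\s_0\ox\s_0$ and $\s_1\ox\s_1$ and maps $c_2\s_2\ox\s_2+c_3\s_3\ox\s_3$ to itself \emph{precisely when} $c_2=c_3$; hence $V(\g)$ commutes with $U$ for all $\g$. Since $V(\g)$ is a product operator across the $AR_A\!:\!BR_B$ cut, the relation $UV(\g)=V(\g)U$ gives the continuous symmetry $E(\r_A,\r_B)=E(R_\g\r_A R_\g^\dg,\,R_\g\r_B R_\g^\dg)$ with $R_\g=e^{i\g\s_1}$, a simultaneous rotation of both Bloch vectors about the $x$-axis; together with the discrete flips from $\s_1\ox\s_1,\s_2\ox\s_2,\s_3\ox\s_3$, which also commute with $U$, this lets me normalize an optimal pair $(\r_A,\r_B)$. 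I would then exploit that $\r_{out,BR_B}$ is affine in $\r_A$ (the map $\r_A\mapsto\tr_A[U(\r_A\ox\ketbra{b}{b'})U^\dg]$ is linear) so that $E$ is separately concave in $\r_A$ and in $\r_B$, and combine concavity with the normalization to place one Bloch vector on the $z$-axis, making that reduced state diagonal, and finally invoke Lemma \ref{le:nu=pi/2} or its mirror.

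\textbf{Main obstacle.} I expect the difficulty to be exactly this interdependence. Lemma \ref{le:nu=pi/2} and its mirror are each genuinely conditional: the key identity $\r'_{1BR_B}=\r_{2BR_B}$ fails when $\n\neq\frac{\p}{2}$, because then $\ket{v_1}$ is not a computational basis vector and part of the $\m$-dependent, cross-sector coherence survives the POVM; so neither reduction can be launched on its own. Turning the symmetry-plus-concavity normalization into an \emph{actual} diagonal reduced state---rather than merely a Bloch vector in the $x$--$z$ plane, which the $x$-axis rotation alone cannot straighten onto the $z$-axis---is the delicate point, and it is precisely here that $c_2=c_3$, through the commuting family $V(\g)$, is indispensable.
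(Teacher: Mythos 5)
Your route is essentially the paper's: the commuting family you construct, $e^{i\g\s_1}\ox e^{i\g\s_1}$ (which indeed commutes with $U$ precisely when $c_2=c_3$), is generated by products of the paper's reflections $V(\g)=\cos\g\,\s_2+\sin\g\,\s_3$, and your intended endgame --- normalize one input to $\cos\b\ket{00}+\sin\b\ket{11}$ and then invoke Lemma \ref{le:nu=pi/2} or its SWAP-mirror --- is exactly the paper's. The genuine gap is that the one step you defer is the entire content of the theorem, and the device you propose for it cannot work. Every symmetry at your disposal (continuous rotations of the Bloch sphere about the $\s_1$-axis, the discrete $\pi$-flips coming from $\s_j\ox\s_j$, and the paper's $V(\g)$, which act as $\pi$-rotations about axes in the $\s_2$--$\s_3$ plane) preserves $\abs{\tr(\s_1\r_A)}$ and $\abs{\tr(\s_1\r_B)}$. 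Hence no element of this group can carry a generic input --- one with $\sin2\a\cos\m\cos\t\neq 0$, say --- to an input whose reduced state is diagonal in the computational basis; "a Bloch vector in the $x$--$z$ plane" is genuinely the best the symmetry delivers, as you yourself observe. Concavity does not close this: $S(\r^{out}_{BR_B})$ is concave in the input density operator, but for mixed inputs that entropy is no longer the entanglement, concavity pushes the maximum of a function toward the interior of the convex set rather than onto the special pure states you need, and averaging over the $x$-rotation orbit would in any case produce a reduced state diagonal in the $\s_1$ eigenbasis, not the computational one. So as written you have proved the two conditional reductions (Lemma \ref{le:nu=pi/2} and its mirror) but not the unconditional statement of Theorem \ref{pp:fam2-1}.

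It is worth saying that the step you cannot complete is precisely the step the paper itself disposes of in a single sentence: it asserts that for suitable $\g$ and a reference-side unitary $W$ one has $(V(\g)\ox W)\ket{\phi(\b;\x,\n)}=\cos\b\ket{00}+\sin\b\ket{11}$, and then applies Lemma \ref{le:nu=pi/2}. Your invariant $\abs{\tr(\s_1\r_B)}$ shows that this assertion, too, cannot follow from the $V(\g)$ symmetry alone when $\cos\n\cos\x\neq0$. A completion would therefore have to come from a second dephasing/majorization argument in the spirit of Lemma \ref{le:nu=pi/2} --- e.g.\ a POVM adapted to the $\s_1$-coherence of the output, or a direct computation showing the output spectrum is majorized by the $\n=\pi/2$ spectrum --- rather than from the symmetry-plus-concavity normalization you sketch.
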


\begin{proof}
We first propose the following observations and one can verify them straightforwardly.
\begin{eqnarray}
\label{eq:commute}
&&(\s_j\ox \s_k+\s_k\ox \s_j)
(\s_j\ox \s_j+\s_k\ox \s_k)=0,
\\&& 
j,k=1,2,3,
\quad
j\ne k.
\notag
\end{eqnarray}

When $c_2=c_3$, using \eqref{eq:commute} we can show that 
\begin{eqnarray}
[V(\g)
\ox
V(\g),
U]=0,
\end{eqnarray}
where $V(\g)=\cos\g\s_2+\sin\g\s_3$ is unitary and $\g$ is an arbitrary real number.
Using \eqref{eq:upsi}, we obtain
\begin{eqnarray}
\label{eq:upsi2}	
&&
(V(\g)
\ox
V(\g))_{AB}\ket{\ph^\t_\x(\a,\b;\m,\n)}\notag
\\=&&
U
\bigg(
(V(\g)\ox I_{R_A})\ket{\ps(\a;\t,\m)}_{AR_A}
\ox
(V(\g)\ox I_{R_B})
\ket{\phi(\b;\x,\n)}_{BR_B}
\bigg).
\end{eqnarray}
Since performing local unitaries on $R_A,R_B$ does not change the entanglement of $\ket{\ph^\t_\x(\a,\b;\m,\n)}$, we may assume that $(V(\g_1)\ox W_1)\ket{\ps(\a;\t,\m)}=\cos\a\ket{00}+\sin\a\ket{11}$ by choosing a suitable $\g_1$ and unitary $W_1$, or we may assume that $(V(\g_2)\ox W_2)\ket{\phi(\b;\x,\n)}=\cos\b\ket{00}+\sin\b\ket{11}$ by choosing another suitable $\g_2$ and unitary $W_2$. Hence, Lemma \ref{le:nu=pi/2} implies we can choose the critical state for computing $K_E(U)$ as 
\begin{eqnarray}
\label{eq:inputspe31}
\ket{\ph(\a,\b;\frac{\pi}{2},{\p\over2})}
=
(\cos\a\ket{00}+\sin\a\ket{11})_{AR_A}
\ox		
(\cos\b\ket{00}+\sin\b\ket{11})_{BR_B},
\end{eqnarray}
for $\a,\b\in[0,\p/2]$.
By definition we have $K_E(U)=\max\limits_{\a,\b\in[0,\p/2]}E(\ph(\a,\b;\frac{\pi}{2},\frac{\pi}{2}))$.

This completes the proof.
\end{proof}

In the following part of this section we will determine $\a$ and $\b$ such that $E(\ph(\a,\b;\frac{\pi}{2},\frac{\pi}{2}))$ reaches its maximum.

By computing, we have the reduced density matrix $\rho_{BR_B}$ as follows.
\begin{equation}
\label{eq:rhobrb}
\begin{aligned}
\rho_{BR_B}&:=\tr_{AR_A}\Big[U(\ket{\ps(\a;\t,\m)}_{AR_A}\ox\ket{\phi(\b;\x,\n)}_{BR_B})\\
&(\bra{\ps(\a;\t,\m)}_{AR_A}\ox\bra{\phi(\b;\x,\n)}_{BR_B})U^\dg\Big]\\
&=P^\dg
\bma
m_{11}(\a,\b) & m_{14}(\a,\b) & 0 & 0 \\
m_{41}(\a,\b) & m_{44}(\a,\b) & 0 & 0 \\
0 & 0 & m_{22}(\a,\b) & m_{23}(\a,\b) \\
0 & 0 & m_{32}(\a,\b) & m_{33}(\a,\b) \\
\ema
P,
\end{aligned}
\end{equation}
where $P$ is a permutation matrix and
\begin{equation}
\label{eq:Ephialphabetax}
\begin{aligned}
m_{11}(\a,\b)&=\cos^2\b\big(\abs{c_0}^2+\abs{c_3}^2+\cos2\a(c_0c_3^*+c_3c_0^*)\big), \\
m_{14}(\a,\b)&=\frac{1}{2}\sin2\b\big(\abs{c_0}^2-\abs{c_3}^2+\cos2\a(-c_0c_3^*+c_3c_0^*)\big), \\
m_{22}(\a,\b)&=\cos^2\b\big(\abs{c_1}^2+\abs{c_2}^2+\cos2\a(-c_1c_2^*-c_2c_1^*)\big), \\
m_{23}(\a,\b)&=\frac{1}{2}\sin2\b\big(\abs{c_1}^2-\abs{c_2}^2+\cos2\a(c_1c_2^*-c_2c_1^*)\big), \\
m_{32}(\a,\b)&=\frac{1}{2}\sin2\b\big(\abs{c_1}^2-\abs{c_2}^2+\cos2\a(-c_1c_2^*+c_2c_1^*)\big), \\
m_{33}(\a,\b)&=\sin^2\b\big(\abs{c_1}^2+\abs{c_2}^2+\cos2\a(c_1c_2^*+c_2c_1^*)\big), \\
m_{41}(\a,\b)&=\frac{1}{2}\sin2\b\big(\abs{c_0}^2-\abs{c_3}^2+\cos2\a(c_0c_3^*-c_3c_0^*)\big), \\
m_{44}(\a,\b)&=\sin^2\b\big(\abs{c_0}^2+\abs{c_3}^2+\cos2\a(-c_0c_3^*-c_3c_0^*)\big).
\end{aligned}
\end{equation}
One can show the eigenvalues of $\rho_{BR_B}$ are the following nonnegative numbers
\begin{equation}
\label{eq:Ephialphabetaeigen}
\begin{aligned}
\lambda_1(\a,\b)
         &=\frac{t_1-\sqrt{t_1^2-4\abs{c_0c_3}^2\sin^22\a\sin^22\b}}{2},\\
\lambda_2(\a,\b)
         &=\frac{t_1+\sqrt{t_1^2-4\abs{c_0c_3}^2\sin^22\a\sin^22\b}}{2},\\
\lambda_3(\a,\b)
         &=\frac{t_2-\sqrt{t_2^2-4\abs{c_1c_2}^2\sin^22\a\sin^22\b}}{2},\\
\lambda_4(\a,\b)
         &=\frac{t_2+\sqrt{t_2^2-4\abs{c_1c_2}^2\sin^22\a\sin^22\b}}{2},    
\end{aligned}
\end{equation}
where 
\begin{equation}
\label{eq:t1t2}
\begin{aligned}
t_1&:=&\abs{c_0}^2+\abs{c_3}^2+\cos2\a\cos2\b(c_0c_3^*+c_3c_0^*),
\\
t_2&:=&\abs{c_1}^2+\abs{c_2}^2-\cos2\a\cos2\b(c_1c_2^*+c_2c_1^*).	
\end{aligned}
\end{equation}
By definition we have 
\begin{equation}
\label{eq:entropyentanglement}
E(\ph(\a,\b;\frac{\pi}{2},\frac{\pi}{2}))=H(\lambda_1(\a,\b),\lambda_2(\a,\b),\lambda_3(\a,\b),\lambda_4(\a,\b)).
\end{equation}

The following lemma gives the maximum of $E(\ph(\a,\b;\frac{\pi}{2},\frac{\pi}{2}))$ when points $(\a,\b)$ on the boundary.

\begin{lemma}
\label{le:boundarycases}
(i)
\begin{eqnarray}
\label{eq:max}
\max_{\a,\b\in[0,\p/2]}	
E(\ph(\a,\b;\frac{\pi}{2},\frac{\pi}{2}))
=&&
\max_{\a\in[0,\p/4],\b\in[0,\p/2]}	
E(\ph(\a,\b;\frac{\pi}{2},\frac{\pi}{2})),
\\
=&&
\max_{\a\in[0,\p/2],\b\in[0,\p/4]}	
E(\ph(\a,\b;\frac{\pi}{2},\frac{\pi}{2})).
\end{eqnarray}
From now on we shall work with the interval $\a\in[0,\p/4],\b\in[0,\p/2]$.

(ii) Suppose set $B=\big\{(\a,\b)|\text{$\a$ or $\b$ $\in$ \{$0$,${\p\over4}$\}}\big\}$. Then we have 
\begin{equation}
\label{eq:maxonboundary}
\max_{(\a,\b)\in B}E(\ph(\a,\b;\frac{\pi}{2},\frac{\pi}{2}))=
\begin{cases}
\max\big\{1,E(\ph(\frac{\pi}{4},\frac{\pi}{4};\frac{\pi}{2},\frac{\pi}{2}))\big\},& \text{$\cos(2x+2y)\leq 0$},\\
\max\big\{E(\ph(0,\frac{\pi}{2};\frac{\pi}{2},\frac{\pi}{2})),E(\ph(\frac{\pi}{4},\frac{\pi}{4};\frac{\pi}{2},\frac{\pi}{2}))\big\},& \text{$\cos(2x+2y)> 0$},
\end{cases}
\end{equation}
where $x,y$ are in \eqref{eq:c0-c3exp}.
\end{lemma}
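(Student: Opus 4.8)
For part (i), I would start from the explicit spectrum \eqref{eq:Ephialphabetaeigen}--\eqref{eq:t1t2} and note that the four eigenvalues depend on $(\a,\b)$ only through the two quantities $\cos2\a\cos2\b$ (which enters $t_1,t_2$) and $\sin^2 2\a\,\sin^2 2\b$ (which enters the square roots). Both are invariant under $(\a,\b)\mapsto(\frac\pi2-\a,\frac\pi2-\b)$, since $\cos2(\frac\pi2-\a)=-\cos2\a$ while $\sin2(\frac\pi2-\a)=\sin2\a$; hence $E(\ph(\a,\b;\frac\pi2,\frac\pi2))=E(\ph(\frac\pi2-\a,\frac\pi2-\b;\frac\pi2,\frac\pi2))$. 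For any point of $[0,\frac\pi2]^2$ with $\a>\frac\pi4$ the image has first coordinate $\frac\pi2-\a\in[0,\frac\pi4)$, so $E$ takes the same set of values on $[0,\frac\pi4]\times[0,\frac\pi2]$ as on the full square, which gives the first equality in \eqref{eq:max}. Since both quantities are also symmetric in $\a,\b$, the function $E$ is symmetric in its first two arguments, and combining this symmetry with the first equality yields the second.

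For part (ii), working in $\a\in[0,\frac\pi4],\b\in[0,\frac\pi2]$, the set $B$ is the union of the four segments $\a=0$, $\a=\frac\pi4$, $\b=0$, $\b=\frac\pi4$, and the plan is to maximize $E$ on each. On $\a=\frac\pi4$ we have $\cos2\a=0$, so $t_1=\abs{c_0}^2+\abs{c_3}^2$ and $t_2=\abs{c_1}^2+\abs{c_2}^2$ are constant and only $w:=\sin^2 2\a\,\sin^2 2\b=\sin^2 2\b$ varies over $[0,1]$. As $w$ increases the two discriminants in \eqref{eq:Ephialphabetaeigen} decrease, so within each block $\{\lambda_1,\lambda_2\}$ (sum $t_1$) and $\{\lambda_3,\lambda_4\}$ (sum $t_2$) the pair spreads less; each block at larger $w$ is then majorized by the block at smaller $w$, and because the block sums are fixed this yields majorization of the full four-element spectrum. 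By Lemma \ref{le:maj}, $E$ is nondecreasing in $w$, hence maximized at $w=1$, i.e. at $\b=\frac\pi4$, giving $E(\ph(\frac\pi4,\frac\pi4;\frac\pi2,\frac\pi2))$. The identical argument on the segment $\b=\frac\pi4$ (with $w=\sin^2 2\a$) again produces the maximum at $(\frac\pi4,\frac\pi4)$.

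On the remaining two segments $\sin2\a\,\sin2\b=0$, so two eigenvalues vanish and $E$ collapses to the binary entropy $H(t_1,1-t_1)$, where $t_1+t_2=1$ follows from \eqref{eq:c0c3c1c2} and the normalization $\sum_j\abs{c_j}^2=1$. On $\b=0$ one has $t_1=\abs{c_0}^2+\abs{c_3}^2+\cos2\a\,(c_0c_3^*+c_3c_0^*)$ with $\cos2\a\in[0,1]$ and $c_0c_3^*+c_3c_0^*>0$ by \eqref{eq:c0c3c1c2}; since $\abs{c_0}^2+\abs{c_3}^2>\frac12$ by \eqref{eq:cos2x}, the argument $t_1$ never drops below $\frac12$, and as $H(t_1,1-t_1)$ decreases for $t_1>\frac12$ the segment maximum is attained at the corner $(\frac\pi4,0)$, which is the $w=0$ endpoint of the segment $\a=\frac\pi4$ and is therefore already dominated by $E(\ph(\frac\pi4,\frac\pi4;\frac\pi2,\frac\pi2))$. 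On $\a=0$ the same expression for $t_1$ carries $\cos2\b\in[-1,1]$, so $t_1$ sweeps $[\frac12(1+\cos(2x+2y)),\,\frac12(1+\cos(2x-2y))]$ and the sign of $\cos(2x+2y)$ decides whether $\frac12$ lies in this interval: if $\cos(2x+2y)\le0$ then $t_1=\frac12$ is attainable and the segment maximum is $H(\frac12,\frac12)=1$, whereas if $\cos(2x+2y)>0$ then $H$ is maximized at the smallest admissible value $t_1=\frac12(1+\cos(2x+2y))$, reached at $\b=\frac\pi2$, giving $E(\ph(0,\frac\pi2;\frac\pi2,\frac\pi2))$. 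Taking the largest of the four segment maxima produces exactly \eqref{eq:maxonboundary}.

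I expect the principal obstacle to be the monotonicity step on the $\a=\frac\pi4$ and $\b=\frac\pi4$ segments: one must verify carefully that block-wise majorization of two pairs with fixed sums implies majorization of the combined four-element spectrum, so that Lemma \ref{le:maj} genuinely applies, and one must check that the discriminants $t_1^2-4\abs{c_0c_3}^2 w$ and $t_2^2-4\abs{c_1c_2}^2 w$ stay nonnegative throughout $w\in[0,1]$ so the eigenvalues remain real. By comparison, the binary-entropy case analysis on the $\a=0$ and $\b=0$ segments is routine once the endpoints $\frac12(1+\cos(2x\pm2y))$ are identified and compared with $\frac12$.
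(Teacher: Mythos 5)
Your proposal is correct and follows essentially the same route as the paper: part (i) via the symmetry $E(\ph(\a,\b;\frac{\pi}{2},\frac{\pi}{2}))=E(\ph(\frac{\pi}{2}-\a,\frac{\pi}{2}-\b;\frac{\pi}{2},\frac{\pi}{2}))$ (which the paper checks at the level of the matrix entries $m_{jk}$ rather than the spectrum), and part (ii) by a segment-by-segment boundary analysis reducing the $\a=0$ and $\b=0$ segments to the binary entropy $H(t_1,1-t_1)$ and comparing $t_1$ with $\tfrac12$ via $\cos(2x\pm 2y)$. The one place you go beyond the paper is the $\a=\frac{\pi}{4}$ and $\b=\frac{\pi}{4}$ segments, where the paper only asserts ``straightforward computation'' while you supply a block-wise majorization argument feeding into Lemma \ref{le:maj}; the two points you flag there do check out (majorization of two fixed-sum pairs does imply majorization of the combined spectrum, and $t_1^2\ge 4\abs{c_0c_3}^2$, $t_2^2\ge 4\abs{c_1c_2}^2$ by AM--GM keep the discriminants nonnegative).
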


The proof of Lemma \ref{le:boundarycases} has been shown in Appendix \ref{sec:proofoflemma4}. Lemma \ref{le:boundarycases} (i) shows it suffices to consider the case $\a\in[0,\frac{\pi}{4}],\b\in[0,\frac{\pi}{2}]$, and (ii) gives the maximum on the boundary. It remains to find out the maximum of $E(\ph(\a,\b;\frac{\pi}{2},\frac{\pi}{2}))$ when $\a\in(0,{\p\over4}),\b\in(0,{\p\over2})$. It is necessary to find out the extreme points. It is known that a necessary condition for the extreme points $(\a,\b)$ of $E(\ph(\a,\b;\frac{\pi}{2},\frac{\pi}{2}))$ is that they make each partial derivative of $E(\ph(\a,\b;\frac{\pi}{2},\frac{\pi}{2}))$ equal zero. One can formulate the two partial derivatives as follows.

\begin{equation}
\label{eq:partialderivatives}
\begin{aligned}
{\partial E(\ph(\a,\b;\frac{\pi}{2},\frac{\pi}{2}))
\over
\partial \a} :=
f_{\a}&=k\sin2\a\cos2\b\cdot g(\a,\b)+\sin2\a\cos2\a\sin^22\b\cdot h(\a,\b),
\\
{\partial E(\ph(\a,\b;\frac{\pi}{2},\frac{\pi}{2}))
\over
\partial \b} :=
f_{\b}&=k\cos2\a\sin2\b\cdot g(\a,\b)+\sin2\b\cos2\b\sin^22\a\cdot h(\a,\b),
\end{aligned}
\end{equation}
where 
\begin{equation}
\label{eq:partialderivatives1}
\begin{aligned}
g(\a,\b)&=\log\frac{l_1}{l_2}+\frac{t_1}{\sqrt{t_1^2-4l_1\sin^22\a\sin^22\b}}\log\frac{\lambda_2}{\lambda_1}-\frac{t_2}{\sqrt{t_2^2-4l_2\sin^22\a\sin^22\b}}\log\frac{\lambda_4}{\lambda_3},\\
h(\a,\b)&=\frac{4l_1}{\sqrt{t_1^2-4l_1\sin^22\a\sin^22\b}}\log\frac{\lambda_2}{\lambda_1}+\frac{4l_2}{\sqrt{t_2^2-4l_2\sin^22\a\sin^22\b}}\log\frac{\lambda_4}{\lambda_3}>0,
\end{aligned}
\end{equation}
and the constants
\begin{equation}
\label{eq:eq:partialderivatives11}
\begin{aligned}
k&=c_0c_3^*+c_3c_0^*,\quad b&=\abs{c_0}^2+\abs{c_3}^2,\quad l_1&=\abs{c_0c_3}^2,\quad l_2&=\abs{c_1c_2}^2.
\end{aligned}
\end{equation}
The following lemma presents a necessary condition for the points which realize the maximum of $E(\ph(\a,\b;\frac{\pi}{2},\frac{\pi}{2}))$ when $\a\in(0,\frac{\pi}{4}],\b\in(0,\frac{\pi}{2})$.

\begin{lemma}
\label{le:maxa=b}
Suppose $\a\in(0,\frac{\pi}{4}],\b\in(0,\frac{\pi}{2})$. The maximum of $E(\ph(\a,\b;\frac{\pi}{2},\frac{\pi}{2}))$ occurs only when $\a+\b=\pi/2$, i.e., 
\begin{eqnarray}
\label{eq:necon1}
\max_{\a\in(0,\p/4],\b\in(0,\p/2)}	
E(\ph(\a,\b;\frac{\pi}{2},\frac{\pi}{2}))
=
\max\limits_{\a\in(0,\p/4]}
E(\ph(\a,\frac{\pi}{2}-\a;\frac{\pi}{2},\frac{\pi}{2})).	
\end{eqnarray}
\end{lemma}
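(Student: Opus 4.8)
The plan is to locate the interior maximizers of $E(\ph(\a,\b;\frac{\p}{2},\frac{\p}{2}))$ via the first-order conditions \eqref{eq:partialderivatives} and to show that every such point lies on $\a+\b=\frac{\p}{2}$. At a maximizer in the open region we have $f_\a=f_\b=0$. Since $\a\in(0,\frac{\p}{4}]$ and $\b\in(0,\frac{\p}{2})$ give $\sin2\a>0$ and $\sin2\b>0$, I would divide these factors out of \eqref{eq:partialderivatives}, reducing the system to
\begin{equation}
\begin{aligned}
k\cos2\b\,g(\a,\b)+\cos2\a\sin^{2}2\b\,h(\a,\b)&=0,\\
k\cos2\a\,g(\a,\b)+\cos2\b\sin^{2}2\a\,h(\a,\b)&=0.
\end{aligned}
\end{equation}
Multiplying the first equation by $\cos2\a$, the second by $\cos2\b$, and subtracting cancels the $g$-term and leaves $(\cos^{2}2\a\sin^{2}2\b-\cos^{2}2\b\sin^{2}2\a)\,h(\a,\b)=0$. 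Because $h>0$ by \eqref{eq:partialderivatives1}, the first factor must vanish, and it factors as $\sin(2\b-2\a)\sin(2\b+2\a)=0$. In the admissible ranges $2\b-2\a\in(-\frac{\p}{2},\p)$ and $2\a+2\b\in(0,\frac{3\p}{2})$, so this forces either $\a=\b$ or $\a+\b=\frac{\p}{2}$.

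It then remains to exclude the diagonal $\a=\b$ with $\a\neq\frac{\p}{4}$. Substituting $\a=\b$ into the first reduced equation gives $\cos2\a\,(k\,g(\a,\a)+\sin^{2}2\a\,h(\a,\a))=0$; as $\cos2\a\neq0$, a diagonal critical point would require $k\,g(\a,\a)+\sin^{2}2\a\,h(\a,\a)=0$. Here $k=c_0c_3^*+c_3c_0^*>0$ by \eqref{eq:c0c3c1c2}, while $h>0$ and $\sin^{2}2\a>0$, so this can hold only if $g(\a,\a)<0$. Thus the whole reduction hinges on proving $g>0$ along the diagonal. To this end I would recast $g$ through the spectrum \eqref{eq:Ephialphabetaeigen}: writing $s:=\sin2\a\sin2\b$ one has $\l_1+\l_2=t_1$, $\l_3+\l_4=t_2$, $\l_1\l_2=l_1s^2$, $\l_3\l_4=l_2s^2$, and \eqref{eq:partialderivatives1} collapses to
\begin{equation}
g=G(\l_1,\l_2)-G(\l_3,\l_4),\qquad G(\m_1,\m_2):=\log(\m_1\m_2)+\frac{\m_1+\m_2}{\m_2-\m_1}\log\frac{\m_2}{\m_1}.
\end{equation}
Expressing $G$ through the sum $T=\m_1+\m_2$ and product $P=\m_1\m_2$, a short computation gives $\partial G/\partial T>0$ always, while the sign of $\partial G/\partial P$ reduces to the elementary inequality $\log\frac{1+w}{1-w}\ge2w$ with $w=\sqrt{1-4P/T^{2}}\in[0,1)$; hence $G$ strictly increases in both the sum and the product of its arguments. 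On the diagonal $t_1=\abs{c_0}^2+\abs{c_3}^2+k\cos^{2}2\a>\tfrac12$ by \eqref{eq:cos2x}, so with $t_1+t_2=1$ (from \eqref{eq:c0c3c1c2}) we get $t_1>t_2$, while $l_1=\abs{c_0c_3}^2\ge\abs{c_1c_2}^2=l_2$ by \eqref{eq:c0c3}. Therefore the pair $(\l_1,\l_2)$ exceeds $(\l_3,\l_4)$ in both sum and product, and monotonicity of $G$ yields $g=G(\l_1,\l_2)-G(\l_3,\l_4)>0$, contradicting the critical-point equation and ruling out diagonal maximizers (the point $\a=\b=\frac{\p}{4}$, where $\cos2\a=0$, already lies on $\a+\b=\frac{\p}{2}$).

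The main obstacle is precisely this sign analysis of $g$: it is not visible from \eqref{eq:partialderivatives1}, and the argument relies on recognizing the two-pair structure $g=G(\l_1,\l_2)-G(\l_3,\l_4)$ together with the joint monotonicity of $G$; everything else is routine calculus or already in hand. Finally, to promote the critical-point statement to the identity \eqref{eq:necon1}, I would combine it with compactness of $[0,\frac{\p}{4}]\times[0,\frac{\p}{2}]$ and Lemma \ref{le:boundarycases}. A maximizer of $E$ lying in the set $B$ is controlled by \eqref{eq:maxonboundary}; one lying off $B$ is an interior maximizer (or sits on the edge $\a=\frac{\p}{4}$, where $f_\b=0$ forces $\b=\frac{\p}{4}$), and the analysis above then places it on $\a+\b=\frac{\p}{2}$. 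Restricting to $\a\in(0,\frac{\p}{4}],\b\in(0,\frac{\p}{2})$ gives \eqref{eq:necon1}.
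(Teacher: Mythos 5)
Your proposal is correct, and its first half coincides with the paper's: you divide the stationarity conditions \eqref{eq:partialderivatives} by $\sin2\a\sin2\b$, eliminate $g$, and use $h>0$ to get $\sin(2\a+2\b)\sin(2\a-2\b)=0$, hence $\a+\b=\frac{\p}{2}$ or $\a=\b$, with the edge $\a=\frac{\p}{4}$ handled by $f_\b=0\Rightarrow\b=\frac{\p}{4}$. Where you genuinely diverge is in disposing of the diagonal. The paper does not rule out diagonal critical points; instead it shows their values are harmless, by proving $E(\ph(\a,\a;\frac{\p}{2},\frac{\p}{2}))\le E(\ph(\frac{\p}{4},\frac{\p}{4};\frac{\p}{2},\frac{\p}{2}))$ for all $\a$: it tracks the monotonicity of the four eigenvalues $\l_{i1}$ in $\a$, evaluates them at $\a=0$ and $\a=\frac{\p}{4}$, deduces the majorization $[\abs{c_0}^2,\abs{c_1}^2,\abs{c_2}^2,\abs{c_3}^2]^T\prec[\l_{21},\l_{41},\l_{11},\l_{31}]^T$, and invokes Lemma \ref{le:maj}. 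You instead show the diagonal contains no interior critical point at all (except $\a=\b=\frac{\p}{4}$, which already lies on the anti-diagonal), by establishing the sign $g>0$ there. Your identification $g=G(\l_1,\l_2)-G(\l_3,\l_4)$ with $G(\m_1,\m_2)=\log(\m_1\m_2)+\frac{\m_1+\m_2}{\m_2-\m_1}\log\frac{\m_2}{\m_1}$ is exactly right (since $\l_1\l_2=l_1\sin^22\a\sin^22\b$ and $\l_1+\l_2=t_1$, etc.), the joint monotonicity of $G$ in the sum and product checks out (the $\partial G/\partial P\ge0$ computation indeed reduces to $\ln\frac{1+w}{1-w}\ge 2w$ --- note the inequality should be read with the natural logarithm, though the sign conclusion is unaffected by the base), and the comparisons $t_1>t_2$ (from $t_1+t_2=1$, \eqref{eq:cos2x} and $k\cos^2 2\a>0$ for $\a\neq\frac{\p}{4}$) and $l_1\ge l_2$ (from \eqref{eq:c0c3}) are all available from Sec.~\ref{subsec:mathproperty}. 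Each route buys something: the paper's majorization argument yields the quantitative statement that the diagonal is globally dominated by $(\frac{\p}{4},\frac{\p}{4})$ and reuses Lemma \ref{le:maj} cleanly, while your sign analysis of $g$ is sharper locally (no diagonal stationary points exist) and exposes the two-pair structure of $g$, which is potentially reusable elsewhere in the derivative analysis. Both proofs share the same mild looseness in passing from critical-point classification to the identity \eqref{eq:necon1} over a half-open region, so I do not count that against you.
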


\begin{proof}
Set the two equations in Eq. \eqref{eq:partialderivatives} equal zero. When $\a,\b\neq\frac{\pi}{4}$  we have
\begin{equation}
\label{eq:partialderivativesx}
\left\{
\begin{aligned}
\frac{g(\a,\b)}{h(\a,\b)}&=-\frac{\cos2\a\sin^22\b}{k\cos2\b},\\
\frac{g(\a,\b)}{h(\a,\b)}&=-\frac{\cos2\b\sin^22\a}{k\cos2\a}.
\end{aligned}
\right.
\end{equation}
We derive $(\sin(2\a+2\b))(\sin(2\a-2\b))=0$ from \eqref{eq:partialderivativesx}. It implies $\a+\b=\pi/2$ or $\a=\b$. When $\a=\frac{\pi}{4}$ or $\b=\frac{\pi}{4}$, Eq. \eqref{eq:necon1} follows from $\max\limits_{\b\in[0,\frac{\pi}{2}]}E(\ph(\frac{\pi}{4},\b;\frac{\pi}{2},\frac{\pi}{2}))=\max\limits_{\a\in[0,\frac{\pi}{4}]}E(\ph(\a,\frac{\pi}{4};\frac{\pi}{2},\frac{\pi}{2}))=E(\ph(\frac{\pi}{4},\frac{\pi}{4};\frac{\pi}{2},\frac{\pi}{2}))$. We next rule out the necessary condition $\a=\b$. That is we will show $E(\ph(\a,\a;\frac{\pi}{2},\frac{\pi}{2}))\leq E(\ph(\frac{\pi}{4},\frac{\pi}{4};\frac{\pi}{2},\frac{\pi}{2})),\forall \a\in[0,\frac{\pi}{4}]$.

For convenience, when $\b=\a$ and ${\p\over2}-\a$ in \eqref{eq:Ephialphabetaeigen}, we respectively name the eigenvalues as in Eq. \eqref{eq:Ephialphabetaeigen1} and Eq. \eqref{eq:Ephialphabetaeigen2}.
\begin{equation}
\label{eq:Ephialphabetaeigen1}
\begin{aligned}
\lambda_{11}
         &:=\frac{t_{11}-\sqrt{t_{11}^2-4\abs{c_0c_3}^2\sin^42\a}}{2},\\
\lambda_{21}
         &:=\frac{t_{11}+\sqrt{t_{11}^2-4\abs{c_0c_3}^2\sin^42\a}}{2},\\
\lambda_{31}
         &:=\frac{t_{21}-\sqrt{t_{21}^2-4\abs{c_1c_2}^2\sin^42\a}}{2},\\
\lambda_{41}      
         &:=\frac{t_{21}+\sqrt{t_{21}^2-4\abs{c_1c_2}^2\sin^42\a}}{2},
\end{aligned}
\end{equation}
where 
\begin{eqnarray}
\label{eq:t11}
t_{11}&:=&\abs{c_0}^2+\abs{c_3}^2+\cos^22\a(c_0c_3^*+c_3c_0^*),
\\\label{eq:t21}
t_{21}&:=&\abs{c_1}^2+\abs{c_2}^2-\cos^22\a(c_1c_2^*+c_2c_1^*),	
\end{eqnarray}
and
\begin{equation}
\label{eq:Ephialphabetaeigen2}
\begin{aligned}
\lambda_{12}
         &=\frac{t_{12}-\sqrt{t_{12}^2-4\abs{c_0c_3}^2\sin^42\a}}{2},\\
\lambda_{22}
       &=\frac{t_{12}+\sqrt{t_{12}^2-4\abs{c_0c_3}^2\sin^42\a}}{2},\\
\lambda_{32}
         &=\frac{t_{22}-\sqrt{t_{22}^2-4\abs{c_1c_2}^2\sin^42\a}}{2},\\
\lambda_{42}
&=\frac{t_{22}+\sqrt{t_{22}^2-4\abs{c_1c_2}^2\sin^42\a}}{2},    
\end{aligned}
\end{equation}
where
\begin{eqnarray}
\label{eq:t12}
t_{12}&:=&\abs{c_0}^2+\abs{c_3}^2-\cos^22\a(c_0c_3^*+c_3c_0^*),
\\\label{eq:t22}
t_{22}&:=&\abs{c_1}^2+\abs{c_2}^2+\cos^22\a(c_1c_2^*+c_2c_1^*).	
\end{eqnarray}

When $\b=\a$, one can show $\l_{11}$ is monotone decreasing in terms of $t_{11}$ while $\l_{21}$ is monotone increasing in terms of $t_{11}$, and $\l_{31},\l_{41}$ are monotone increasing in terms of $t_{21}$. Using the derivation rule of composite function, one can show $\l_{11},\l_{31},\l_{41}$ are monotone increasing while $\l_{21}$ is monotone decreasing in terms of $\a$. When $\a=0$, we have $\l_{11}=0,\l_{21}=\abs{c_0+c_3}^2,\l_{31}=0,\l_{41}=\abs{c_1-c_2}^2$. When $\a=\frac{\pi}{4}$, we have $\l_{11}=\abs{c_3}^2,\l_{21}=\abs{c_0}^2,\l_{31}=\abs{c_2}^2,\l_{41}=\abs{c_1}^2$. According to the monotonicities of $\l_{i1}$'s, one can show $[\abs{c_0}^2,\abs{c_1}^2,\abs{c_2}^2,\abs{c_3}^2]^T\prec[\l_{21},\l_{41},\l_{11},\l_{31}]^T$. From Lemma \ref{le:maj}, it implies $H(\l_{11},\l_{21},\l_{31},\l_{41})\leq H(\abs{c_0}^2,\abs{c_1}^2,\abs{c_2}^2,\abs{c_3}^2)$ with equality if and only if $\a=\frac{\pi}{4}$. Hence, we can exclude the condition $\a=\b$. So Eq. \eqref{eq:necon1} holds.

This completes the proof.
\end{proof}

Comparing the maximum of the boundary with the maximum of the non-boundary, we formulate $K_E(U)$ as follows.
\begin{proposition}
\label{pp:lowerbound}
For normalized operations $U$ satisfying $c_2=c_3$, i.e., $y=z$ in \eqref{eq:c0-c3exp},\\
\begin{equation}
\label{eq:pp-1main}
K_E(U)=
\begin{cases}
\max\limits_{\a\in[0,\p/4]}
\bigg\{
E(\ph(\a,\frac{\pi}{2}-\a;\frac{\pi}{2},\frac{\pi}{2})),1
\bigg\}, & \text{$\cos(2x+2y)\leq 0$},\\
\max\limits_{\a\in[0,\p/4]}
E(\ph(\a,\frac{\pi}{2}-\a;\frac{\pi}{2},\frac{\pi}{2})), & \text{$\cos(2x+2y)> 0$}.
\end{cases}
\end{equation}
\end{proposition}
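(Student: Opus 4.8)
The plan is to glue together the three results already established---Theorem~\ref{pp:fam2-1}, Lemma~\ref{le:boundarycases}, and Lemma~\ref{le:maxa=b}---by separating the interior maximum from the boundary maximum and then running a short case split on the sign of $\cos(2x+2y)$. First I would use Theorem~\ref{pp:fam2-1} to write $K_E(U)=\max_{\a,\b\in[0,\p/2]}E(\ph(\a,\b;\frac{\pi}{2},\frac{\pi}{2}))$ and Lemma~\ref{le:boundarycases}(i) to shrink the compact search domain to $D:=[0,\frac{\pi}{4}]\times[0,\frac{\pi}{2}]$. Because $E(\ph(\a,\b;\frac{\pi}{2},\frac{\pi}{2}))$ is continuous, the maximum over $D$ is attained, and I would distinguish whether a maximizer lies in the open interior $(0,\frac{\pi}{4})\times(0,\frac{\pi}{2})$ or on $\partial D$.

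If a maximizer is interior, it is a critical point, and Lemma~\ref{le:maxa=b} forces it onto the segment $\a+\b=\frac{\pi}{2}$; hence the maximal value equals $M:=\max_{\a\in(0,\pi/4]}E(\ph(\a,\frac{\pi}{2}-\a;\frac{\pi}{2},\frac{\pi}{2}))$. For a boundary maximizer I would observe that every edge of $D$ except the top edge $\b=\frac{\pi}{2}$ lies in the set $B$ of Lemma~\ref{le:boundarycases}(ii), so it remains only to control that top edge. There $\sin 2\b=0$, so by \eqref{eq:Ephialphabetaeigen} two of the four eigenvalues vanish and $E$ collapses to a binary entropy $H(t_1,t_2)\le 1$ with $t_1+t_2=1$; a one-variable analysis shows its maximum over $\a\in[0,\frac{\pi}{4}]$ is $1$ when $\cos(2x+2y)\le 0$ and is $E(\ph(0,\frac{\pi}{2};\frac{\pi}{2},\frac{\pi}{2}))$ when $\cos(2x+2y)>0$. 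In both cases this value is $\le\max_{(\a,\b)\in B}E(\ph(\a,\b;\frac{\pi}{2},\frac{\pi}{2}))$, so the top edge adds nothing and I would conclude $K_E(U)=\max\{M,\ \max_{(\a,\b)\in B}E(\ph(\a,\b;\frac{\pi}{2},\frac{\pi}{2}))\}$.

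It then remains to feed in Lemma~\ref{le:boundarycases}(ii) and simplify, using the trivial fact that $E(\ph(\frac{\pi}{4},\frac{\pi}{4};\frac{\pi}{2},\frac{\pi}{2}))\le M$ since $(\frac{\pi}{4},\frac{\pi}{4})$ is the $\a=\frac{\pi}{4}$ endpoint of the segment. When $\cos(2x+2y)>0$ the boundary maximum is $\max\{E(\ph(0,\frac{\pi}{2};\frac{\pi}{2},\frac{\pi}{2})),E(\ph(\frac{\pi}{4},\frac{\pi}{4};\frac{\pi}{2},\frac{\pi}{2}))\}$, so $K_E(U)=\max\{M,E(\ph(0,\frac{\pi}{2};\frac{\pi}{2},\frac{\pi}{2}))\}=\max_{\a\in[0,\pi/4]}E(\ph(\a,\frac{\pi}{2}-\a;\frac{\pi}{2},\frac{\pi}{2}))$, the endpoint $\a=0$ simply reinserting $E(\ph(0,\frac{\pi}{2}))$ into the closed segment maximum. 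When $\cos(2x+2y)\le 0$ the boundary maximum is $\max\{1,E(\ph(\frac{\pi}{4},\frac{\pi}{4};\frac{\pi}{2},\frac{\pi}{2}))\}$, giving $K_E(U)=\max\{M,1\}$; since $E(\ph(0,\frac{\pi}{2};\frac{\pi}{2},\frac{\pi}{2}))$ is again the binary entropy $\le 1$ identified above, adjoining it leaves the maximum unchanged and yields $\max_{\a\in[0,\pi/4]}\{E(\ph(\a,\frac{\pi}{2}-\a;\frac{\pi}{2},\frac{\pi}{2})),1\}$, as claimed.

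The main obstacle is not any single hard estimate but the bookkeeping around two domain mismatches. The top edge $\b=\frac{\pi}{2}$ is covered neither by the set $B$ of Lemma~\ref{le:boundarycases}(ii) (whose $\b$-values are only $0$ and $\frac{\pi}{4}$) nor by Lemma~\ref{le:maxa=b} (whose range $\b\in(0,\frac{\pi}{2})$ is open at $\frac{\pi}{2}$), so one must separately verify that it degenerates to a binary entropy dominated by $\max_{(\a,\b)\in B}E$. In addition, one must reconcile the closed segment $\a\in[0,\frac{\pi}{4}]$ of the final formula with the half-open range $\a\in(0,\frac{\pi}{4}]$ delivered by Lemma~\ref{le:maxa=b}, which is precisely what the endpoint comparisons $E(\ph(\frac{\pi}{4},\frac{\pi}{4}))\le M$ and $E(\ph(0,\frac{\pi}{2}))\le 1$ accomplish.
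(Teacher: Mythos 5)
Your proof is correct and assembles exactly the ingredients the paper relies on (Theorem~\ref{pp:fam2-1}, Lemma~\ref{le:boundarycases}, Lemma~\ref{le:maxa=b}); the paper itself offers only the one-line remark about comparing the boundary maximum with the non-boundary maximum, so your write-up is a faithful filling-in of that intended argument. Your separate treatment of the top edge $\beta=\pi/2$ (not covered by the set $B$ nor by the open range of Lemma~\ref{le:maxa=b}) is the right thing to do, and the one-variable binary-entropy analysis you invoke there is precisely the ``second claim'' already established in the appendix proof of Lemma~\ref{le:boundarycases}(ii), so the argument closes.
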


\begin {remark}
\label{re:speU-1}
Since $E(\ph(\a,\frac{\pi}{2}-\a;\frac{\pi}{2},\frac{\pi}{2}))$ is a function of single one variable $\a$, we can plot $E(\ph(\a,\frac{\pi}{2}-\a;\frac{\pi}{2},\frac{\pi}{2}))$ and clearly observe the maximum from the plot when the three parameters $x,y,z$ are fixed. We construct several numerical examples in Fig. \ref{fig:prop1} to show Eq. \eqref{eq:pp-1main} is effective to obtain $K_E(U)$. 
Also, it is obvious that for a general $U$, i.e., $c_2\neq c_3$, the rhs of Eq. \eqref{eq:pp-1main} is a lower bound of its entangling power. Since $K_{Sch}(U)=E(\ph(\frac{\pi}{4},\frac{\pi}{4};\frac{\pi}{2},\frac{\pi}{2}))$, it implies this lower bound is tighter than the known lower bound Schmidt Strength $K_{Sch}(U)$ which is defined in Eq. \eqref{eq:kschU} and introduced in detail there.
\end{remark}

\begin{figure}[!h]
\centering
\includegraphics[height=8cm, width=16cm, angle=0]{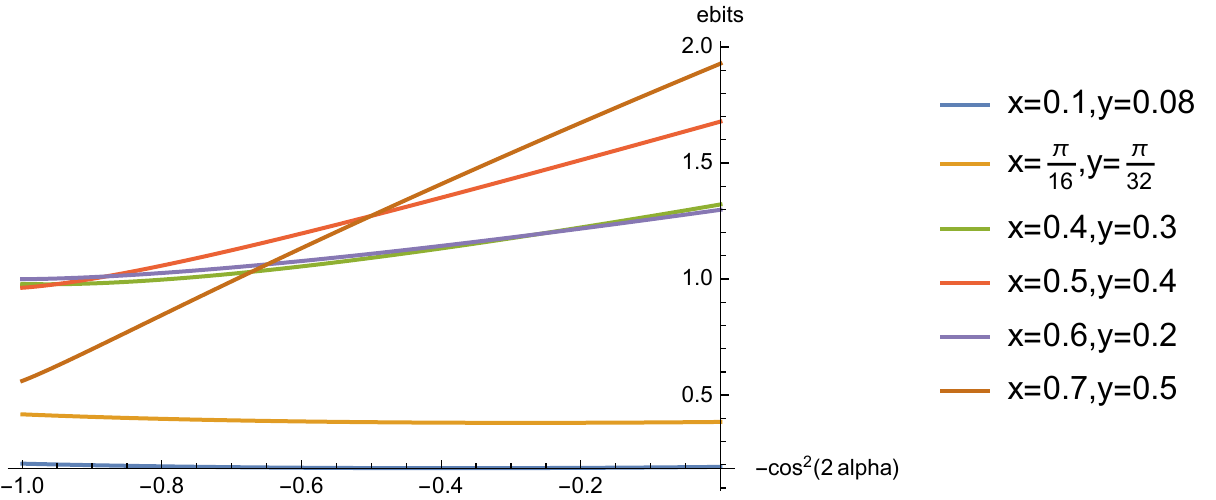}
\caption{Numerical examples of $K_E(U)$ with $y=z$.}
\label{fig:prop1}
\end{figure}

From Fig. \ref{fig:prop1} we find the maximums of $E(\ph(\a,\frac{\pi}{2}-\a;\frac{\pi}{2},\frac{\pi}{2}))$ occur only at the edge points, i.e., $\a$ equals $0$ or $\frac{\pi}{4}$. So we present the following conjecture.

\begin{conjecture}
\label{cj:prop1}
For normalized operations $U$ satisfying $c_2=c_3$, i.e., $y=z$ in \eqref{eq:c0-c3exp},\\
\begin{equation}
\label{eq:cjpp1}
\begin{aligned}
K_E(U)&=\max\{E(\ph(0,\frac{\pi}{2};\frac{\pi}{2},\frac{\pi}{2})),E(\ph(\frac{\pi}{4},\frac{\pi}{4};\frac{\pi}{2},\frac{\pi}{2}))\}\\
      &=\max\Big\{ H\big(\abs{c_0-c_3}^2,\abs{c_1+c_2}^2\big), H\big(\abs{c_0}^2,\abs{c_1}^2,\abs{c_2}^2,\abs{c_3}^2\big) \Big\}.
\end{aligned}
\end{equation}
\end{conjecture}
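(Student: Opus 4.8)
The plan is to deduce the conjecture from Proposition~\ref{pp:lowerbound} by studying the single–variable function
\[
f(\a):=E(\ph(\a,\tfrac{\pi}{2}-\a;\tfrac{\pi}{2},\tfrac{\pi}{2})),\qquad \a\in[0,\tfrac{\pi}{4}],
\]
which already encodes everything in \eqref{eq:pp-1main}. First I would substitute $\b=\tfrac{\pi}{2}-\a$ into the eigenvalues \eqref{eq:Ephialphabetaeigen}; since $\cos2\b=-\cos2\a$ and $\sin2\b=\sin2\a$, and since $c_0c_3^*+c_3c_0^*=c_1c_2^*+c_2c_1^*$ by \eqref{eq:c0c3c1c2}, the quantities $t_1,t_2$ collapse to the $t_{12},t_{22}$ of \eqref{eq:t12}--\eqref{eq:t22}, so that $f(\a)=H(\l_{12},\l_{22},\l_{32},\l_{42})$ with the eigenvalues of \eqref{eq:Ephialphabetaeigen2}. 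Evaluating at the two ends, and using $\abs{c_0}\ge\abs{c_3}$ from \eqref{eq:c0} to resolve the square root, yields $f(0)=H(\abs{c_0-c_3}^2,\abs{c_1+c_2}^2)$ and $f(\tfrac{\pi}{4})=H(\abs{c_0}^2,\abs{c_1}^2,\abs{c_2}^2,\abs{c_3}^2)$, the two quantities in \eqref{eq:cjpp1}. Thus the conjecture is equivalent to saying that $f$ attains its maximum on $[0,\tfrac{\pi}{4}]$ at an endpoint, plus the remark that in the regime $\cos(2x+2y)\le0$ the extra value $1$ in \eqref{eq:pp-1main} is dominated. The latter reduces, via \eqref{eq:cos2x}, to the elementary inequality $H(\abs{c_0}^2,\dots,\abs{c_3}^2)\ge1$ whenever $\cos(2x+2y)\le0$ (equivalently $\cos^22x+\cos^22y\le1$), which I would dispatch by a direct two–variable check.

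The core is to show $f$ has no interior maximum. Here I would compute $f'(\a)=f_\a-f_\b$ by the chain rule from \eqref{eq:partialderivatives}. Along $\b=\tfrac{\pi}{2}-\a$ one finds $f_\b=-f_\a$, and a short simplification gives
\[
f'(\a)=\sin4\a\cdot\Phi(\a),\qquad \Phi(\a):=\sin^22\a\,h(\a)-k\,g(\a),
\]
where $k=c_0c_3^*+c_3c_0^*>0$ by \eqref{eq:c0c3c1c2}, $h>0$ by \eqref{eq:partialderivatives1}, and $g,h$ are evaluated along the antidiagonal. Since $\sin4\a>0$ on $(0,\tfrac{\pi}{4})$, the interior critical points of $f$ are exactly the zeros of $\Phi$. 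As $\a\to0^+$ the divergent logarithms in $g$ cancel, leaving $\Phi(0^+)=-2k\log\frac{\abs{c_0-c_3}^2}{\abs{c_1+c_2}^2}$, whose sign is opposite to that of $\cos(2x+2y)$ by the relation $\abs{c_0-c_3}^2-\abs{c_1+c_2}^2=\cos2(x+y)$ of \eqref{eq:c03+}; this already reproduces the case split of \eqref{eq:pp-1main}.

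To conclude I would prove that every interior critical point of $f$ is a strict local minimum. At a zero $\a^\ast$ of $\Phi$ one has $f''(\a^\ast)=\sin4\a^\ast\,\Phi'(\a^\ast)$, so it suffices that $\Phi'(\a^\ast)>0$ there. Granting this, two interior critical points would force an intermediate local maximum, contradicting that all critical points are minima; hence there is at most one interior critical point, necessarily a minimum, $f$ is quasiconvex, and $\max_{[0,\pi/4]}f=\max\{f(0),f(\tfrac{\pi}{4})\}$. Together with the first paragraph this gives \eqref{eq:cjpp1}.

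The hard part is precisely the sign statement $\Phi'(\a^\ast)>0$ (equivalently, that $\Phi$ changes sign at most once, and only from $-$ to $+$). This is a transcendental inequality mixing the logarithms $\log(\l_{22}/\l_{12})$, $\log(\l_{42}/\l_{32})$ with the algebraic prefactors $t_{12}/\sqrt{t_{12}^2-4\abs{c_0c_3}^2\sin^42\a}$ and $t_{22}/\sqrt{t_{22}^2-4\abs{c_1c_2}^2\sin^42\a}$, and controlling it uniformly in the two free parameters $x,y$ is what resists a closed form and presumably explains why the result is recorded as a conjecture. A plausible route is to use $\Phi(\a^\ast)=0$ to eliminate $g(\a^\ast)$ from $\Phi'(\a^\ast)$, reducing the claim to a monotonicity property of the two eigenvalue pairs that one might attack with the majorization comparison of Lemma~\ref{le:maj} exploited in Lemma~\ref{le:maxa=b}, applied on the two subintervals delimited by the interior minimizer.
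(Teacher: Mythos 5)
This statement is recorded in the paper as Conjecture~\ref{cj:prop1}, not as a theorem: the paper supports it only by the numerical plots of Fig.~\ref{fig:prop1} and by the two special cases $c_1=c_2=c_3$ and $x=\pi/4$, which are settled in Theorems~\ref{thm:keuofspeu} and~\ref{thm:keuoffam-2} via the appendix arguments (and even there partly by inspecting plots of derivatives, Fig.~\ref{fig:family1}). Your reductions are correct and consistent with the paper's own machinery: substituting $\b=\tfrac{\pi}{2}-\a$ does collapse $t_1,t_2$ to $t_{12},t_{22}$ of \eqref{eq:t12}--\eqref{eq:t22}; the endpoint values are indeed $H(\abs{c_0-c_3}^2,\abs{c_1+c_2}^2)$ and $H(\abs{c_0}^2,\dots,\abs{c_3}^2)$; the chain-rule computation $f_\b=-f_\a$ along the antidiagonal and the factorization $f'(\a)=\sin4\a\,\Phi(\a)$ with $\Phi=\sin^2 2\a\,h-kg$ follow from \eqref{eq:partialderivatives}; and the limit $\Phi(0^+)=-2k\log\bigl(\abs{c_0-c_3}^2/\abs{c_1+c_2}^2\bigr)$, whose sign is governed by $\cos2(x+y)$, is right.

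The genuine gap is that the conclusion rests entirely on the unproved claim that $\Phi'(\a^\ast)>0$ at every interior zero $\a^\ast$ of $\Phi$ (equivalently, that $f$ is quasiconvex on $[0,\pi/4]$), and you concede as much. Without that, nothing rules out an interior maximum of $f$, and the statement remains exactly as open as it is in the paper; a reduction to a transcendental inequality that one ``would attack with majorization'' is not a proof. Note that the only places the paper establishes this kind of unimodality are the two examples: in Appendix~\ref{sec:proofofpp2} the sign of $dE_2/dy$ and $d^2E_2/dy^2$ is read off from plots rather than proved, and in Appendix~\ref{sec:proofofpp3} the majorization $[\abs{c_0}^2,\abs{c_0}^2,\abs{c_2}^2,\abs{c_2}^2]^T\prec[\l_{12},\l_{22},\l_{32},\l_{42}]^T$ works only because $x=\pi/4$ forces $\abs{c_0}=\abs{c_1}$ and $\abs{c_2}=\abs{c_3}$, so the target spectrum has two equal pairs; that trick does not transfer to general $c_2=c_3$. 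A secondary, smaller gap: the claim that $H(\abs{c_0}^2,\dots,\abs{c_3}^2)\ge1$ whenever $\cos(2x+2y)\le0$ (needed to absorb the extra value $1$ in \eqref{eq:pp-1main}) is also only asserted; it appears true and is likely provable by a direct computation using \eqref{eq:cj-2}, but you should actually carry it out. As written, your proposal is a clean and correct restatement of why the conjecture is hard, not a proof of it.
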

By computing one can show
\begin{eqnarray}
\label{eq:cj-1}
&H\big(\abs{c_0-c_3}^2,\abs{c_1+c_2}^2\big)=H\big(\cos^2(x+y),\sin^2(x+y)\big),\\
\label{eq:cj-2}
&H(\abs{c_0}^2,\abs{c_1}^2,\abs{c_2}^2,\abs{c_3}^2)=H(\cos^2 x\cos^4 y+\sin^2 x\sin^4 y,\cos^4 y \sin^2 x+\cos^2 x \sin^4 y,\\
\notag
&\cos^2 y \sin^2 y,\cos^2 y \sin^2 y).
\end{eqnarray}
For those $U$ whose parameters $x,y$ make Eq. \eqref{eq:cj-1} greater than Eq. \eqref{eq:cj-2}, the entangling power is strictly greater than Schmidt Strength.

In the following Sec. \ref{sec:specialU}, we will present two examples, for which we can analytically work out the entangling power $K_E(U)$. The analytical results support that Conjecture \ref{cj:prop1} holds. 

\section{Two examples of $U$ with $c_2=c_3$}
\label{sec:specialU}

\subsection{Example 1: $c_1=c_2=c_3$}
\label{subsubsec:f1}

In this subsection we consider the entangling power of $U$ as follows.
\begin{equation}
\label{eq:decomofspecU}
\begin{aligned}
U&=(\cos^3x+i\sin^3x)\s_0\ox\s_0+i\sin x\cos x e^{-ix}\sum_{j=1}^3 \s_j\ox\s_j\\
 &:=c_0\s_0\ox\s_0+c\sum_{j=1}^3 \s_j\ox\s_j,
\end{aligned}
\end{equation}
where $x\in(0,\frac{\pi}{4})$. One can verify that the expression of $U$ in \eqref{eq:decomofspecU} follows from setting $c_1=c_2=c_3$ in \eqref{eq:twoqubit}, namely $x=y=z$ in \eqref{eq:c0-c3exp}. Further, since $\abs{c_0}^2=1-3\abs{c}^2>\abs{c}^2$, we have $0<\abs{c}^2<\frac{1}{4}$. We will show the expression of entangling power of $U$ in Theorem \ref{thm:keuofspeu}. For this purpose, first of all, we present the following result.

\begin{proposition}
\label{pp:cristaspec1}
Suppose $U$ is expressed in Eq. \eqref{eq:decomofspecU}, 
\begin{equation}
\label{eq:profam1}
\begin{aligned}
\max\limits_{\a\in[0,\pi/4]\atop\b\in[0,\pi/2]}E(\ph(\a,\b;\frac{\pi}{2},\frac{\pi}{2}))&=\max\{E(\ph(\frac{\pi}{4},\frac{\pi}{4};\frac{\pi}{2},\frac{\pi}{2})),E(\ph(0,\frac{\pi}{2};\frac{\pi}{2},\frac{\pi}{2}))\},\\
&=\max\{H(4\abs{c}^2,1-4\abs{c}^2),H(1-3\abs{c}^2,\abs{c}^2,\abs{c}^2,\abs{c}^2)\}.
\end{aligned}
\end{equation}
\end{proposition}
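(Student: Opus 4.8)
The plan is to collapse the two-variable optimization onto the anti-diagonal $\b=\frac{\p}{2}-\a$ and then show that the resulting one-variable profile attains its maximum at an endpoint. Since Lemma \ref{le:boundarycases}(i) gives $\max_{\a\in[0,\p/4],\b\in[0,\p/2]}E(\ph(\a,\b;\frac{\p}{2},\frac{\p}{2}))=K_E(U)$, and Proposition \ref{pp:lowerbound} already expresses $K_E(U)$ as $\max_{\a\in[0,\p/4]}E(\ph(\a,\frac{\p}{2}-\a;\frac{\p}{2},\frac{\p}{2}))$ (together with the isolated value $1$ when $\cos(2x+2y)\le 0$), it suffices to analyze $\psi(\a):=E(\ph(\a,\frac{\p}{2}-\a;\frac{\p}{2},\frac{\p}{2}))$ on $[0,\frac{\p}{4}]$ and to check that the value $1$ never exceeds the two endpoint values. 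The endpoints $\a=0$ and $\a=\frac{\p}{4}$ are precisely the corner states $\ket{\ph(0,\frac{\p}{2};\frac{\p}{2},\frac{\p}{2})}$ and $\ket{\ph(\frac{\p}{4},\frac{\p}{4};\frac{\p}{2},\frac{\p}{2})}$ appearing in the claim.

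First I would specialize the eigenvalues \eqref{eq:Ephialphabetaeigen} to $c_1=c_2=c_3=c$, using $\abs{c_0}^2=1-3\abs{c}^2$, $k=c_0c_3^*+c_3c_0^*=2\abs{c}^2$ (from \eqref{eq:c0c3c1c2} with $x=y$), and $\cos2\b=-\cos2\a$, $\sin2\b=\sin2\a$ on the anti-diagonal. The $t_2$-block then factors cleanly: $t_2=2\abs{c}^2(1+\cos^2 2\a)$ and its discriminant becomes a perfect square, giving $\l_3=4\abs{c}^2\sin^4\a$ and $\l_4=4\abs{c}^2\cos^4\a$. For the $t_1$-block one finds $t_1=1-2\abs{c}^2(1+\cos^2 2\a)$ and, after simplification, $\sqrt{t_1^2-4\abs{c_0c_3}^2\sin^4 2\a}=\sqrt{(1-4\abs{c}^2)[(1-4\abs{c}^2)+\abs{c}^2\sin^2 4\a]}$. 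Evaluating at the endpoints reproduces the lists $\{1-4\abs{c}^2,4\abs{c}^2,0,0\}$ at $\a=0$ and $\{1-3\abs{c}^2,\abs{c}^2,\abs{c}^2,\abs{c}^2\}$ at $\a=\frac{\p}{4}$, hence the two $H(\cdots)$ expressions on the right-hand side. (It is worth noting that for $c_1=c_2=c_3$ the operator is a function of the SWAP gate, $U=(c_0+c)P_{\mathrm{sym}}+(c_0-3c)P_{\mathrm{anti}}$, which is the structural reason behind these factorizations.)

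The heart of the proof is to show $\psi$ has no interior local maximum on $(0,\frac{\p}{4})$. Differentiating along the anti-diagonal and substituting $\cos2\b=-\cos2\a$, $\sin2\b=\sin2\a$ into \eqref{eq:partialderivatives} gives
\[
\psi'(\a)=f_\a-f_\b=2\sin2\a\cos2\a\,\big(\sin^2 2\a\cdot h(\a,\tfrac{\p}{2}-\a)-k\,g(\a,\tfrac{\p}{2}-\a)\big).
\]
Because $\sin2\a,\cos2\a\ge0$ and $h>0$ on $[0,\frac{\p}{4}]$, the sign of $\psi'$ is governed by $\Phi(\a):=\sin^2 2\a\,h-k g$. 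Taking the limits of the logarithmic terms in $g,h$ yields the encouraging boundary signs $\Phi(\tfrac{\p}{4}^-)=8\abs{c}^2>0$ and $\Phi(0^+)=-2\abs{c}^2\log\frac{(1-3\abs{c}^2)(1-4\abs{c}^2)}{16\abs{c}^4}$, the latter negative for small $\abs{c}^2$ and positive once $\abs{c}^2$ is large enough. Thus $\psi$ is U-shaped (or monotone), and maximized at an endpoint, provided $\Phi$ changes sign at most once and only upward. The plan is therefore to prove that $\Phi$ is monotonically increasing on $(0,\frac{\p}{4})$, i.e. $\Phi'>0$.

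\emph{The main obstacle} is exactly this interior monotonicity of $\Phi$: it mixes the algebraic factor $\sin^2 2\a\,h$ with the transcendental term $kg$ containing $\log(\l_2/\l_1)$ and $\log(\l_4/\l_3)$, so a direct computation of $\Phi'$ is delicate. If a clean sign for $\Phi'$ proves elusive, a fallback is to recast $\psi$ as a function of $v=\cos^2 2\a$ and prove it is convex in $v$ (so its maximum on $[0,1]$ lies at an endpoint) via the grouping identity $E=H(t_1,t_2)+t_1 H(\frac{\l_1}{t_1},\frac{\l_2}{t_1})+t_2 H(\frac{\l_3}{t_2},\frac{\l_4}{t_2})$, in which $t_1,t_2$ are affine in $v$ and the two conditional binary entropies are explicit. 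Finally, to finish I would combine the endpoint-maximization with the boundary term: when $\cos(2x+2y)\le0$, i.e. $\abs{c}^2\ge\frac18$, a direct check gives $E(\ph(\frac{\p}{4},\frac{\p}{4};\frac{\p}{2},\frac{\p}{2}))=H(1-3\abs{c}^2,\abs{c}^2,\abs{c}^2,\abs{c}^2)\ge1$, so the value $1$ in Proposition \ref{pp:lowerbound} is dominated by the right-hand side, completing the argument.
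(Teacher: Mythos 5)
Your proposal follows essentially the same route as the paper's Appendix~B proof: reduce to the anti-diagonal $\b=\frac{\p}{2}-\a$ via Proposition~\ref{pp:lowerbound}, specialize the eigenvalues (your $\l_3=4\abs{c}^2\sin^4\a$, $\l_4=4\abs{c}^2\cos^4\a$ match the paper's $\l_{32},\l_{42}$), show the one-variable profile is maximized at an endpoint, and finally check that $E(\ph(\frac{\p}{4},\frac{\p}{4};\frac{\p}{2},\frac{\p}{2}))\ge 1$ when $\abs{c}^2\ge\frac{1}{8}$. The one step you flag as the main obstacle --- a clean sign for $\Phi'$, or convexity in $v=\cos^2 2\a$ --- is precisely where the paper is also non-rigorous: it substitutes $y=-\cos4\a$ (affinely equivalent to your $v$) and asserts nonnegativity of the first derivative for $\abs{c}^2\in[\frac{1}{8},\frac{1}{4})$ and of the second derivative for $\abs{c}^2\in(0,\frac{1}{8})$ by inspecting the plots in Fig.~\ref{fig:family1} rather than by proof. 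So your plan is sound and your computations check out, but be aware that completing the monotonicity/convexity step analytically would actually go beyond what the paper itself establishes.
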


The proof of Proposition \ref{pp:cristaspec1} is presented in Appendix \ref{sec:proofofpp2}.

\begin{theorem}
\label{thm:keuofspeu}
Suppose $U$ is expressed by \eqref{eq:decomofspecU}, and $x_0(\approx 0.1018)$ is the root of $H(\cos^22x_0,\sin^22x_0)=H(\cos^6x_0+\sin^6x_0,\cos^2x_0\sin^2x_0,\cos^2x_0\sin^2x_0,\cos^2x_0\sin^2x_0)$.
\\
(i) If $0 < x \leq x_0$, then $K_E(U)=H(\cos^22x,\sin^22x)$ ebits with the critical state $\ket{00}\ket{11}$.
\\
(ii) If $x_0\leq x\leq \frac{\pi}{4}$, then $K_E(U)=H(\cos^6x+\sin^6x,\cos^2x\sin^2x,\cos^2x\sin^2x,\cos^2x\sin^2x)$ ebits with the critical state $\frac{1}{\sqrt{2}}(\ket{00}+\ket{11})\otimes\frac{1}{\sqrt{2}}(\ket{00}+\ket{11})$.
\end{theorem}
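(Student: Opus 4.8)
The plan is to combine the two results already established and then reduce everything to a one-variable calculus problem. Combining Theorem \ref{pp:fam2-1} with Proposition \ref{pp:cristaspec1}, we have $K_E(U)=\max\{H(4\abs{c}^2,1-4\abs{c}^2),\,H(1-3\abs{c}^2,\abs{c}^2,\abs{c}^2,\abs{c}^2)\}$, the two candidates being attained at $(\a,\b)=(0,\frac{\pi}{2})$ and $(\a,\b)=(\frac{\pi}{4},\frac{\pi}{4})$ respectively. For $U$ in \eqref{eq:decomofspecU} one has $\abs{c}^2=\sin^2x\cos^2x$ and $1-3\abs{c}^2=\abs{c_0}^2=\cos^6x+\sin^6x$, while $4\abs{c}^2=\sin^22x$ and $1-4\abs{c}^2=\cos^22x$. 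Substituting these turns the two candidate entropies into exactly $H(\cos^22x,\sin^22x)$ and $H(\cos^6x+\sin^6x,\cos^2x\sin^2x,\cos^2x\sin^2x,\cos^2x\sin^2x)$, and the two optimizing points give the announced critical states $\ket{00}\ket{11}$ and $\frac{1}{\sqrt2}(\ket{00}+\ket{11})\ox\frac{1}{\sqrt2}(\ket{00}+\ket{11})$. So everything reduces to deciding which of the two entropies is larger as $x$ runs over $(0,\frac{\pi}{4})$.

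To do this I would pass to the single variable $s:=\abs{c}^2=\frac14\sin^22x$; since $2x\in(0,\frac{\pi}{2})$, the map $x\mapsto s$ is a strictly increasing bijection of $(0,\frac{\pi}{4})$ onto $(0,\frac14)$, so it suffices to analyze $D(s):=F(s)-G(s)$, where $F(s):=H(4s,1-4s)$ and $G(s):=H(1-3s,s,s,s)$. The theorem is then equivalent to showing that $D$ changes sign exactly once on $(0,\frac14)$, from positive to negative, the crossover $s_0$ corresponding to $x_0$. The endpoint behaviour is easy: $D(s)\to0$ as $s\to0^+$ with $D(s)\sim-s\log s>0$ nearby, whereas $D(\frac14)<0$. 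Thus a sign change exists; the real content is its uniqueness.

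The crux, and the step I expect to be the main obstacle, is proving $D$ is unimodal so that the single crossing is forced. Here the clean fact is that the second derivative has a fixed sign: a direct computation gives
\[
D''(s)=-\frac1s-\frac{16}{1-4s}+\frac{9}{1-3s}=-\frac{1}{s(1-4s)(1-3s)},
\]
the last equality resting on the polynomial identity $(1-4s)(1-3s)+16s(1-3s)-9s(1-4s)=1$. Hence $D$ is strictly concave on $(0,\frac14)$, so $D'$ is strictly decreasing; since $D'(0^+)=+\infty$ and $D'(\frac14^-)=-\infty$, $D$ increases up to a unique interior maximum and then decreases.

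Finally I would assemble these facts. Because $D$ rises from its left-end limit $0$ and then falls to the negative value $D(\frac14)$, unimodality forces a unique zero $s_0\in(0,\frac14)$ with $D>0$ on $(0,s_0)$ and $D<0$ on $(s_0,\frac14)$. Transporting back through the increasing bijection $x\mapsto s$ yields a unique $x_0$ (numerically $\approx0.1018$, the root defined in the statement) such that $F>G$, i.e. $K_E(U)=H(\cos^22x,\sin^22x)$ with critical state $\ket{00}\ket{11}$, for $0<x\le x_0$, and $G>F$, i.e. $K_E(U)=H(\cos^6x+\sin^6x,\cos^2x\sin^2x,\cos^2x\sin^2x,\cos^2x\sin^2x)$ with critical state $\frac{1}{\sqrt2}(\ket{00}+\ket{11})\ox\frac{1}{\sqrt2}(\ket{00}+\ket{11})$, for $x_0\le x\le\frac{\pi}{4}$. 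This gives the two cases of the theorem.
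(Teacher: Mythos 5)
Your proposal is correct, and its first half is exactly the paper's route: invoke Theorem \ref{pp:fam2-1} and Proposition \ref{pp:cristaspec1} to reduce $K_E(U)$ to $\max\{H(4\abs{c}^2,1-4\abs{c}^2),\,H(1-3\abs{c}^2,\abs{c}^2,\abs{c}^2,\abs{c}^2)\}$, identify the two optimizers with the stated critical states, and substitute $\abs{c}^2=\sin^2x\cos^2x$. Where you diverge is in the final comparison. The paper's proof stops at ``by computing we solve the equation $E(\ph(\frac{\pi}{4},\frac{\pi}{4};\frac{\pi}{2},\frac{\pi}{2}))=E(\ph(0,\frac{\pi}{2};\frac{\pi}{2},\frac{\pi}{2}))$ and obtain the root $x_0\approx0.1018$,'' implicitly assuming the difference changes sign only once; no argument for uniqueness of the crossing is given. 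You supply that missing argument: with $s=\abs{c}^2\in(0,\frac14)$ and $D(s)=H(4s,1-4s)-H(1-3s,s,s,s)$, the identity $(1-4s)(1-3s)+16s(1-3s)-9s(1-4s)=1$ gives $D''(s)=-\frac{1}{s(1-4s)(1-3s)}<0$ (I checked the algebra; it is right), so $D$ is strictly concave, and together with $D(0^+)=0$, $D>0$ just to the right of $0$, and $D(\frac14)=-2<0$, this forces exactly one sign change, transported back through the increasing bijection $x\mapsto\frac14\sin^22x$. So your write-up is not merely equivalent to the paper's proof --- it closes a genuine (if small) logical gap in it, at the cost of one extra elementary computation.
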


\begin{proof}
For such $U$, we have $K_E(U)=\max\limits_{\a\in[0,\pi/4],\b\in[0,\pi/2]}E(\ph(\a,\b;\frac{\pi}{2},\frac{\pi}{2}))$ from Theorem \ref{pp:fam2-1}. Further
Proposition \ref{pp:cristaspec1} shows that $K_E(U)$ can be realized by either the local product state ($\a,\b=0,\frac{\pi}{2}$) or the local maximally entangled state ($\a=\b=\frac{\pi}{4}$). That is $K_E(U)=\max\{E(\ph(\frac{\pi}{4},\frac{\pi}{4};\frac{\pi}{2},\frac{\pi}{2})),E(\ph(0,\frac{\pi}{2};\frac{\pi}{2},\frac{\pi}{2}))\}$. Their parametric forms are 
\begin{equation}
\label{eq:lme+lp}
\begin{aligned}
E(\ph(\frac{\pi}{4},\frac{\pi}{4};\frac{\pi}{2},\frac{\pi}{2}))&=H(\cos^6x+\sin^6x,\cos^2x\sin^2x,\cos^2x\sin^2x,\cos^2x\sin^2x),\\
E(\ph(0,\frac{\pi}{2};\frac{\pi}{2},\frac{\pi}{2}))&=H(\cos^22x,\sin^22x).
\end{aligned}
\end{equation}
By computing we solve the equation $E(\ph(\frac{\pi}{4},\frac{\pi}{4};\frac{\pi}{2},\frac{\pi}{2}))=E(\ph(0,\frac{\pi}{2};\frac{\pi}{2},\frac{\pi}{2}))$ and obtain the root $x_0\approx 0.1018$. Then we have the local product state is the critical state when $x\in(0,x_0]$, and the local maximally entangled state is the critical state when $x\in(x_0,\frac{\pi}{4}]$. 

This completes the proof.
\end{proof}

Observe Theorem \ref{thm:keuofspeu} (i). When $0 < x < x_0$, we have 
\begin{equation}
\label{eq:ksch<ke}
\begin{aligned}
&K_{Sch}(U)=H(\cos^6x_0+\sin^6x_0,\cos^2x_0\sin^2x_0,\cos^2x_0\sin^2x_0,\cos^2x_0\sin^2x_0)\\
&<H(\cos^22x_0,\sin^22x_0)=K_E(U).
\end{aligned}
\end{equation}
Eq. \eqref{eq:ksch<ke} shows an explicit example of unitary $U$ whose Schmidt Strength $K_{Sch}(U)$ is strictly less than its entangling power $K_E(U)$.

There are infinitely many critical states in Theorem \ref{thm:keuofspeu}, because $(V_{R_A}\ox W_{R_B})\ket{\ps}$ is still a critical state when so is $\ket{\ps}$.  Comparing Theorem \ref{thm:keuofspeu} with the results in Ref. \cite{kc2000}, we conclude the critical states could be different if we select different measures for the same $U$ due to the two different critical points, i.e., $x_0$ here and $\a_0$ there \cite[Fig. 1]{kc2000}.

Finally one can show $U$ with $\abs{c_1}=\abs{c_2}=\abs{c_3}$ is equivalent to $U$ with $c_1=c_2=c_3$ by using Eqs. \eqref{eq:c02-c12}-\eqref{eq:c02-c32} and the constraint \eqref{eq:xyz}. So we have a slightly larger set of two-qubit unitary operations with computable entangling power as follows.

\begin{corollary}
\label{cr:c1=c2=c3}
The two-qubit unitary operation $U$ with $\abs{c_1}=\abs{c_2}=\abs{c_3}$ in \eqref{eq:twoqubit}
has the same entangling power as that in \eqref{eq:decomofspecU}, and thus can be computed via
Theorem \ref{thm:keuofspeu}.	
\end{corollary}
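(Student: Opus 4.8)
The plan is to show that, under the standing normalization, the apparently weaker hypothesis $\abs{c_1}=\abs{c_2}=\abs{c_3}$ already forces $x=y=z$, so that $U$ is \emph{literally} of the form \eqref{eq:decomofspecU} and Theorem \ref{thm:keuofspeu} applies without change; thus the ``larger'' set contains nothing genuinely new once one works in the canonical frame.

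First I would pass to the canonical frame. A Schmidt-rank-four $U$ with $\abs{c_1}=\abs{c_2}=\abs{c_3}$ has all $c_j\ne0$, and by the symmetric relations of \cite{kc2000} its entangling power equals that of a representative whose parameters satisfy \eqref{eq:xyz} and \eqref{eq:xyz2}. Since those relations act on $(x,y,z)$ only by permutations and reflections, which by \eqref{eq:c0-c3exp} permute the triple $(\abs{c_1},\abs{c_2},\abs{c_3})$ (reflections fixing each $\abs{c_j}$), the condition $\abs{c_1}=\abs{c_2}=\abs{c_3}$ survives the reduction; hence I may assume \eqref{eq:xyz} and \eqref{eq:xyz2} throughout.

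The core is then a short computation with \eqref{eq:c02-c12}--\eqref{eq:c02-c32}. Equating the common value $\abs{c_0}^2-\abs{c_1}^2=\abs{c_0}^2-\abs{c_2}^2=\abs{c_0}^2-\abs{c_3}^2$ and simplifying yields
\[
\cos 2z\,(\cos 2x-\cos 2y)=0,\qquad \cos 2x\,(\cos 2y-\cos 2z)=0.
\]
Because $z\le y<\p/4$ forces $\cos 2z>0$, the first relation gives $\cos 2x=\cos 2y$, i.e. $x=y$; then $x=y<\p/4$ forces $\cos 2x>0$, so the second gives $\cos 2y=\cos 2z$, i.e. $y=z$. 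Thus $x=y=z$, and substituting into \eqref{eq:c0-c3exp} gives $c_0=\cos^3 x+i\sin^3 x$ and $c_1=c_2=c_3=i\sin x\cos x\,e^{-ix}$, which is exactly \eqref{eq:decomofspecU}. Therefore $K_E(U)$ equals the entangling power evaluated in Theorem \ref{thm:keuofspeu}.

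I expect the main obstacle to be the careful use of \eqref{eq:xyz2} rather than merely the ordering \eqref{eq:xyz}: without the strict inequality $y<\p/4$ the degenerate branch $x=y=\p/4$ (with $z$ arbitrary) also solves the two relations above, and it is precisely the Schmidt-rank-four assumption --- which forces $\abs{c_0}>\tfrac12$ and hence keeps $\cos 2x$ and $\cos 2z$ strictly positive --- that eliminates it. A secondary point requiring care is verifying in the reduction step that the symmetries of \cite{kc2000} genuinely preserve the hypothesis $\abs{c_1}=\abs{c_2}=\abs{c_3}$, so that the normalization can be carried out without loss of generality.
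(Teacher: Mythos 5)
Your proposal is correct and follows the same route as the paper, which simply asserts the equivalence ``by using Eqs.~\eqref{eq:c02-c12}--\eqref{eq:c02-c32} and the constraint \eqref{eq:xyz}''; you carry out exactly that computation, deriving $\cos2z(\cos2x-\cos2y)=0$ and $\cos2x(\cos2y-\cos2z)=0$ and concluding $x=y=z$. Your added observation that the strict bound \eqref{eq:xyz2} (equivalently $\abs{c_0}>\tfrac12$) is needed to rule out the degenerate branch $x=y=\pi/4$ is a worthwhile clarification the paper glosses over, but it does not change the argument.
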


\subsection{Example 2: $c_0=ic_1^*$ and $c_2=c_3$}
\label{subsubsec:f2}
In this subsection we consider the entangling power of $U$ as follows.
\begin{equation}
\label{eq:decomofspecU-2}
\begin{aligned}
U&:=c_0\s_0\ox\s_0+c_j\sum_{j=1}^3 \s_j\ox\s_j,
\end{aligned}
\end{equation}
where 
\begin{equation}
\label{eq:c0-c3-f2}
\begin{aligned}
c_0&={1\over\sqrt2}(\cos^2 y+i \sin^2 y),
\\
c_1&={1\over\sqrt2}(\sin^2 y+i \cos^2 y),
\\
c_2&={1\over\sqrt2}(\sin y\cos y+i \sin y\cos y),
\\
c_3&={1\over\sqrt2}(\sin y\cos y+i \sin y\cos y),	
\end{aligned}
\end{equation}
for $y\in[0,\p/4)$.
One can verify that the coefficients in \eqref{eq:c0-c3-f2} follow from setting $x=\frac{\pi}{4},y=z$ in \eqref{eq:c0-c3exp}. We will show the expression of entangling power of such $U$ in Theorem \ref{thm:keuoffam-2} whose proof based on the following proposition.

\begin{proposition}
\label{pp:speUf3-1}
For normalized operations $U$ with $x=\frac{\pi}{4}$ in \eqref{eq:c0-c3exp}, $\max\limits_{\a\in[0,\pi/4] \atop \b\in[0,\pi/2]}E(\ph(\a,\b;\frac{\pi}{2},\frac{\pi}{2}))=E(\ph(\frac{\pi}{4},\frac{\pi}{4};\frac{\pi}{2},\frac{\pi}{2}))=H(\abs{c_0}^2,\abs{c_0}^2,\abs{c_2}^2,\abs{c_2}^2)$.
\end{proposition}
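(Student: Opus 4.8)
The plan is to exploit an algebraic collapse that happens precisely at $x=\p/4$. Setting $\cos2x=0$ and $\sin2x=1$ in \eqref{eq:cos2x}, \eqref{eq:c0c3c1c2}, \eqref{eq:c0c3} and \eqref{eq:c02-c12}--\eqref{eq:c02-c32} gives $\abs{c_0}^2+\abs{c_3}^2=\abs{c_1}^2+\abs{c_2}^2=\frac12$, $\abs{c_0}=\abs{c_1}$, $\abs{c_2}=\abs{c_3}$, and $\abs{c_0c_3}^2=\abs{c_1c_2}^2=:l$, while $k:=c_0c_3^*+c_3c_0^*=c_1c_2^*+c_2c_1^*=\frac12\sin 2y$. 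Feeding these into \eqref{eq:t1t2} I get $t_1=\frac12+k\cos2\a\cos2\b$ and $t_2=\frac12-k\cos2\a\cos2\b$, so that $t_1+t_2=1$ identically; and from \eqref{eq:Ephialphabetaeigen} the two $2\times2$ blocks of $\r_{BR_B}$ share the common determinant $p:=l\sin^2 2\a\sin^2 2\b$. In particular, at $\a=\b=\p/4$ one has $t_1=t_2=\frac12$ and $p=l$, so the four eigenvalues are the two roots of $\m^2-\frac12\m+l=0$ taken twice, i.e. $\{\abs{c_0}^2,\abs{c_2}^2,\abs{c_0}^2,\abs{c_2}^2\}$ (using $\abs{c_0}^2+\abs{c_2}^2=\frac12$ and $\abs{c_0}^2\abs{c_2}^2=\abs{c_0c_3}^2=l$); this already identifies $E(\ph(\frac{\p}{4},\frac{\p}{4};\frac{\p}{2},\frac{\p}{2}))=H(\abs{c_0}^2,\abs{c_0}^2,\abs{c_2}^2,\abs{c_2}^2)$.

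Building on this, I would write the entanglement \eqref{eq:entropyentanglement} as a sum of two block contributions, $E(\ph(\a,\b;\frac{\p}{2},\frac{\p}{2}))=\eta_p(t_1)+\eta_p(t_2)$, where for a fixed product $p$ I let $\eta_p(t)$ denote the binary entropy of the two roots of $\m^2-t\m+p=0$ (these roots are real and nonnegative, being eigenvalues of a density matrix, so $t\ge 2\sqrt p$). Since $t_1+t_2=1$ is constant over the whole rectangle, the claim reduces to two monotonicity facts about $\eta_p$. First (fact A), for fixed $p$ the map $t\mapsto\eta_p(t)$ is concave, which by Jensen at the midpoint gives $\eta_p(t_1)+\eta_p(t_2)\le 2\eta_p(\frac{t_1+t_2}{2})=2\eta_p(\frac12)$ (both $t_1,t_2$ and their midpoint $\frac12$ lie in the domain $t\ge2\sqrt p$). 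Second (fact B), $\eta_p(\frac12)$ is increasing in $p$; since $p=l\sin^2 2\a\sin^2 2\b\le l$, this yields $2\eta_p(\frac12)\le 2\eta_l(\frac12)=H(\abs{c_0}^2,\abs{c_0}^2,\abs{c_2}^2,\abs{c_2}^2)$. Chaining the two bounds gives $E(\ph(\a,\b;\frac{\p}{2},\frac{\p}{2}))\le E(\ph(\frac{\p}{4},\frac{\p}{4};\frac{\p}{2},\frac{\p}{2}))$ for every $(\a,\b)$, with equality at $\a=\b=\p/4$ (where $p=l$ and $t_1=t_2=\frac12$). Note that this is a pointwise bound on the entire square, so it requires neither Lemma \ref{le:maxa=b} nor a separate boundary analysis; Lemma \ref{le:boundarycases}(i) is invoked only to match the domain $\a\in[0,\p/4]$ in the statement.

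The crux, and the step I expect to be most delicate, is the concavity of $\eta_p$ (fact A). I would prove it via the substitution $R=\sqrt p\,e^{\t}$, $r=\sqrt p\,e^{-\t}$ ($\t\ge0$) for the two roots, under which $t=R+r=2\sqrt p\cosh\t$ and $\eta_p(t)=-R\ln R-r\ln r=-t\ln\sqrt p-2\sqrt p\,\t\sinh\t$. Because the first term is linear in $t$, concavity of $\eta_p$ in $t$ is equivalent to convexity of $\t\sinh\t$ as a function of $t=2\sqrt p\cosh\t$; differentiating twice reduces the sign of $\frac{d^2}{dt^2}(\t\sinh\t)$ to that of $\frac{d}{d\t}(\t\coth\t)=(\cosh\t\sinh\t-\t)/\sinh^2\t$, which is positive since $\sinh 2\t\ge 2\t$. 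Fact B is elementary: writing the roots of $\m^2-\frac12\m+p$ as $\frac14(1\pm\sqrt{1-16p})$, one checks $\eta_p(\frac12)$ decreases in $\sqrt{1-16p}$, hence increases in $p$. Assembling fact A, fact B and the endpoint evaluation of the first paragraph completes the proof; the only remaining bookkeeping is verifying the eigenvalue identities at $\a=\b=\p/4$, which is immediate.
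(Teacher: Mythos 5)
Your proposal is correct, and it takes a genuinely different route from the paper's. The paper first funnels the problem through Proposition \ref{pp:lowerbound} (hence through Lemmas \ref{le:boundarycases} and \ref{le:maxa=b}) to reduce to the line $\a+\b=\pi/2$, then sets $u=\cos^22\a$, proves ${d(\l_{22}+\l_{42})\over du}\ge 0$ by an explicit ratio-of-radicals computation, and converts the resulting bound $\l_{22}+\l_{42}\ge 2\abs{c_0}^2$ into the majorization $[\abs{c_0}^2,\abs{c_0}^2,\abs{c_2}^2,\abs{c_2}^2]^T\prec[\l_{12},\l_{22},\l_{32},\l_{42}]^T$, finishing with Lemma \ref{le:maj}. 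You instead notice that at $x=\pi/4$ the two $2\times 2$ blocks of $\r_{BR_B}$ have traces summing to $1$ and a common determinant $p=\abs{c_0c_3}^2\sin^22\a\sin^22\b$, and you obtain a pointwise bound on the entire square from concavity of $\eta_p$ in the trace (Jensen at the midpoint) plus monotonicity of $\eta_p(\frac12)$ in $p$; this bypasses Lemma \ref{le:maxa=b} and the derivative computation entirely, and it makes the optimality of $\a=\b=\pi/4$ transparent, since that point simultaneously equalizes the block traces and maximizes the common determinant. I verified the two analytic facts: the hyperbolic substitution gives $\eta_p(t)=-t\ln\sqrt p-2\sqrt p\,\theta\sinh\theta$ with $t=2\sqrt p\cosh\theta$, the sign of the second derivative reduces to $\cosh\theta\sinh\theta\ge\theta$, and $2\sqrt p\le 2\abs{c_0}\abs{c_2}\le\frac12$ guarantees that the midpoint $\frac12$ lies in the domain $t\ge 2\sqrt p$; likewise the Vieta identification of the eigenvalues at $\a=\b=\pi/4$ with $\{\abs{c_0}^2,\abs{c_2}^2\}$ (each twice) is right. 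The only loose end is the degenerate case $p=0$, which does occur on the boundary $\sin2\a\sin2\b=0$ and where your substitution breaks down; there $\eta_0(t)=-t\log t$ is concave directly and the chain still closes because $2\eta_0(\frac12)=1\le H(\abs{c_0}^2,\abs{c_0}^2,\abs{c_2}^2,\abs{c_2}^2)$, so this is a one-line patch rather than a gap. What the paper's route buys in exchange is uniformity: the reduction to $\a+\b=\pi/2$ and the majorization template are shared with Example 1 (Proposition \ref{pp:cristaspec1}), whereas your argument leans on the coincidences $t_1+t_2=1$ and $\abs{c_0c_3}=\abs{c_1c_2}$ that are special to $x=\pi/4$.
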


The proof of Proposition \ref{pp:speUf3-1} is presented in Appendix \ref{sec:proofofpp3}.

\begin{theorem}
\label{thm:keuoffam-2}
Suppose $U$ is expressed by Eqs. \eqref{eq:decomofspecU-2} and \eqref{eq:c0-c3-f2}.
\begin{equation}
\label{eq:keufam-2}
\begin{aligned} 
K_E(U)&=H(\abs{c_0}^2,\abs{c_0}^2,\abs{c_2}^2,\abs{c_2}^2) \quad ebits,\\
      &=H\big(\frac{1}{2}(\cos^4 y+\sin^4 y),\frac{1}{2}(\cos^4 y+\sin^4 y),\sin^2 y\cos^2 y,\sin^2 y\cos^2 y\big) \quad ebits,
\end{aligned}
\end{equation}
where $y\in(0,\frac{\pi}{4})$, with the critical state $\frac{1}{\sqrt{2}}(\ket{00}+\ket{11})\otimes\frac{1}{\sqrt{2}}(\ket{00}+\ket{11})$.
\end{theorem}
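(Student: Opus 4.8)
The plan is to assemble Theorem \ref{thm:keuoffam-2} directly from Proposition \ref{pp:speUf3-1}, treating the two displayed lines of Eq. \eqref{eq:keufam-2} as (a) an application of the proposition and (b) a routine substitution of the explicit coefficients from Eq. \eqref{eq:c0-c3-f2}. First I would invoke Theorem \ref{pp:fam2-1}: since the $U$ of Eq. \eqref{eq:decomofspecU-2} satisfies $c_2=c_3$, the entangling power equals $\max_{\a,\b\in[0,\p/2]}E(\ph(\a,\b;\frac{\pi}{2},\frac{\pi}{2}))$, and by Lemma \ref{le:boundarycases}(i) the search can be restricted to $\a\in[0,\frac{\pi}{4}],\b\in[0,\frac{\pi}{2}]$. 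The coefficients in Eq. \eqref{eq:c0-c3-f2} come from the choice $x=\frac{\pi}{4}$ (with $y=z$), so Proposition \ref{pp:speUf3-1} applies verbatim and yields $K_E(U)=E(\ph(\frac{\pi}{4},\frac{\pi}{4};\frac{\pi}{2},\frac{\pi}{2}))=H(\abs{c_0}^2,\abs{c_0}^2,\abs{c_2}^2,\abs{c_2}^2)$, with the critical state the local maximally entangled state $\frac{1}{\sqrt{2}}(\ket{00}+\ket{11})\otimes\frac{1}{\sqrt{2}}(\ket{00}+\ket{11})$ corresponding to $\a=\b=\frac{\pi}{4}$.

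It then remains to verify the second line of Eq. \eqref{eq:keufam-2}, which is purely computational. From Eq. \eqref{eq:c0-c3-f2} I would compute $\abs{c_0}^2=\frac{1}{2}(\cos^4 y+\sin^4 y)$ directly as $\frac{1}{2}(\cos^2 y)^2+\frac{1}{2}(\sin^2 y)^2$, and $\abs{c_2}^2=\frac{1}{2}\big((\sin y\cos y)^2+(\sin y\cos y)^2\big)=\sin^2 y\cos^2 y$. Substituting these into $H(\abs{c_0}^2,\abs{c_0}^2,\abs{c_2}^2,\abs{c_2}^2)$ gives the stated form. As a consistency check I would confirm the four entries sum to $1$: indeed $2\cdot\frac{1}{2}(\cos^4 y+\sin^4 y)+2\sin^2 y\cos^2 y=(\cos^2 y+\sin^2 y)^2=1$, so the arguments of $H$ form a genuine probability distribution and the eigenvalue formulas of Eq. \eqref{eq:Ephialphabetaeigen} evaluated at $\a=\b=\frac{\pi}{4}$ are consistent with $\{\abs{c_0}^2,\abs{c_1}^2,\abs{c_2}^2,\abs{c_3}^2\}$; note $\abs{c_1}^2=\abs{c_0}^2$ here because $c_0=ic_1^*$.

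There is essentially no obstacle in this theorem itself: the entire analytic difficulty has been pushed into Proposition \ref{pp:speUf3-1}, whose proof is deferred to Appendix \ref{sec:proofofpp3}. If I were proving that proposition rather than the theorem, the hard part would be showing that the single-variable maximum along $\a+\b=\frac{\pi}{2}$ (guaranteed by Lemma \ref{le:maxa=b}) together with the boundary values of Lemma \ref{le:boundarycases}(ii) is attained precisely at the local maximally entangled point $\a=\b=\frac{\pi}{4}$, uniformly in $y$; this requires ruling out the interior candidate $\a=\b$ via the majorization argument of Lemma \ref{le:maxa=b} and comparing against the $\a=0$ boundary value $E(\ph(0,\frac{\pi}{2};\frac{\pi}{2},\frac{\pi}{2}))=H(\abs{c_0-c_3}^2,\abs{c_1+c_2}^2)$. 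For the theorem as stated, however, the work is just citing the proposition and performing the substitution above, so I would keep the write-up to those two short movements followed by the normalization check.
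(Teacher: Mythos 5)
Your proposal is correct and follows essentially the same route as the paper: cite Theorem \ref{pp:fam2-1} to reduce to the states $\ket{\ph(\a,\b;\frac{\pi}{2},\frac{\pi}{2})}$, then invoke Proposition \ref{pp:speUf3-1} (noting $x=\frac{\pi}{4}$, $y=z$) to get $K_E(U)=H(\abs{c_0}^2,\abs{c_0}^2,\abs{c_2}^2,\abs{c_2}^2)$ at $\a=\b=\frac{\pi}{4}$, and finish by substituting the explicit coefficients. Your explicit computation of $\abs{c_0}^2$ and $\abs{c_2}^2$ and the normalization check are details the paper leaves implicit, but the argument is the same.
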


\begin{proof}
For such normalized operations $U$, Theorem \ref{pp:fam2-1} shows the critical states should be $\ket{\ph(\a,\b;\frac{\pi}{2},\frac{\pi}{2})}$, and $K_E(U)=\max\limits_{\a\in[0,\pi/4],\b\in[0,\pi/2]}E(\ph(\a,\b;\frac{\pi}{2},\frac{\pi}{2}))$. Further Proposition \ref{pp:speUf3-1} shows that $K_E(U)=H(\abs{c_0}^2,\abs{c_0}^2,\abs{c_2}^2,\abs{c_2}^2)$ which can be reached by setting the local maximally entangled state $(\a=\b=\frac{\pi}{4})$ as the input state. Then Eq. \eqref{eq:keufam-2} holds. 

This completes the proof.
\end{proof}

In the same way, one can show $U$ with $\abs{c_0}=\abs{c_1}$ and $c_2=c_3$ is equivalent to $U$ with $c_0=ic_1^*$ and $c_2=c_3$ by using Eqs. \eqref{eq:xyz} and \eqref{eq:c02-c12}. So we have another slightly larger set of two-qubit unitary operations with computable entangling power as follows.

\begin{corollary}
\label{cr:c0=c1&c2=c3}
The two-qubit unitary operation $U$ with $\abs{c_0}=\abs{c_1}$ and $c_2=c_3$ in \eqref{eq:twoqubit}
has the same entangling power as that expressed by Eqs. \eqref{eq:decomofspecU-2} and \eqref{eq:c0-c3-f2}, and thus can be computed via
Theorem \ref{thm:keuoffam-2}.	
\end{corollary}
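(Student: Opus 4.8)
The plan is to reduce Corollary~\ref{cr:c0=c1&c2=c3} to Theorem~\ref{thm:keuoffam-2} by exhibiting an equivalence between the two-qubit unitary $U'$ satisfying $\abs{c_0}=\abs{c_1}$ together with $c_2=c_3$, and the unitary $U$ described by Eqs.~\eqref{eq:decomofspecU-2} and \eqref{eq:c0-c3-f2}, which was obtained from the parameter choice $x=\frac{\pi}{4},\ y=z$ in \eqref{eq:c0-c3exp}. Since local unitary operations leave the entangling power invariant, it suffices to show that $U'$ has the same normal form as $U$ up to local unitaries, equivalently that the parameters of $U'$ can be brought into the range $x=\frac{\pi}{4},\ y=z$ satisfying \eqref{eq:xyz} and \eqref{eq:xyz2}.

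First I would observe that $c_2=c_3$ is by definition equivalent to $y=z$, so this part of the hypothesis immediately places $U'$ in the family already analyzed in Sec.~\ref{sec:2qubit}. The remaining task is to convert the modulus condition $\abs{c_0}=\abs{c_1}$ into the constraint $x=\frac{\pi}{4}$. Here I would invoke Eq.~\eqref{eq:c02-c12}, which states
\begin{equation}
\abs{c_0}^2-\abs{c_1}^2={1\over2}\cos2x(\cos2y+\cos2z).
\end{equation}
Under the assumption $\abs{c_0}=\abs{c_1}$ the left-hand side vanishes, so we need $\cos2x(\cos2y+\cos2z)=0$. Using $y=z$ this becomes $\cos2x\cdot 2\cos2y=0$, and since the constraint \eqref{eq:xyz2} forces $0<y<\frac{\pi}{4}$ we have $\cos2y>0$; therefore the only possibility is $\cos2x=0$, i.e.\ $x=\frac{\pi}{4}$. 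This is exactly the parameter regime defining $U$ in Proposition~\ref{pp:speUf3-1} and Theorem~\ref{thm:keuoffam-2}, so $U'$ and $U$ share the same $(x,y,z)$ and hence the same normal form \eqref{eq:twoqubit}.

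Having matched the parameters, I would conclude that $U'$ differs from $U$ only by local unitaries via the normal decomposition \eqref{eq:decom}, so $K_E(U')=K_E(U)$, and then apply Theorem~\ref{thm:keuoffam-2} to read off the entangling power. The main obstacle, such as it is, lies in carefully verifying that the modulus condition $\abs{c_0}=\abs{c_1}$ together with the ordering \eqref{eq:xyz} genuinely pins down $x=\frac{\pi}{4}$ rather than merely permitting it: one must confirm that $\cos2y$ cannot vanish under \eqref{eq:xyz2} and that the sign conventions on the parameter ranges rule out spurious solutions. Once this algebraic identification is secured through Eqs.~\eqref{eq:xyz} and \eqref{eq:c02-c12}, the invariance of entangling power under local unitaries does the rest, and the computation via Theorem~\ref{thm:keuoffam-2} is immediate.
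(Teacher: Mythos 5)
Your proposal is correct and follows essentially the same route as the paper, which justifies the corollary in one line ``by using Eqs.~\eqref{eq:xyz} and \eqref{eq:c02-c12}'': namely, $c_2=c_3$ gives $y=z$, and then $\abs{c_0}=\abs{c_1}$ in \eqref{eq:c02-c12} forces $\cos 2x\cos 2y=0$, hence $x=\pi/4$ since \eqref{eq:xyz2} guarantees $\cos 2y>0$. Your added care in checking that $\cos 2y$ cannot vanish is a welcome explicit detail, but it is the same argument.
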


\section{Entangling power of Schmidt-rank-two bipartite unitary operations}
\label{sec:Schr2}

So far we have investigated the entangling power of two-qubit unitary operations. It is evidently a harder problem to investigate the unitary operations in higher dimensions. In this section we investigate the entangling power of the Schmidt-rank-two bipartite unitary operation in arbitrary dimensions. Such unitary can be written as $V=\ketbra{1}{1}\otimes I_n+\ketbra{2}{2}\otimes\sum_{j=1}^n e^{i\theta_j}\ketbra{j}{j}$ where $\t_j$'s are real. It follows from \cite [Eq.(18)]{lcly20160808} that
\begin{equation}
\label{ep:schr2u1}
\mathit{K}_{\mathit{E}}(V)=\max_{c_1,\cdots,c_n\ge0,\atop \sum_j c_j=1} \mathit{H}
\bigg(
\frac{1-\big(1-4y(\{c_j\})\big)^{1\over2}}{2},\frac{1+\big(1-4y(\{c_j\})\big)^{1\over2}}{2}
\bigg),
\end{equation}
where
\begin{eqnarray}
\label{eq:ycj}	
y(\{c_j\})&\coloneqq &\sum_{j>k}c_jc_k\sin^2\big(\frac{\theta_j-\theta_k}{2}\big).
\end{eqnarray}
From the property of function $H(p,1-p)$, we know computing $K_E(V)$ is equivalent to maximizing $y(\{c_j\})$ over the conditions $c_1,\cdots,c_n\ge0$ and $\sum_j c_j=1$. The case $n=2$ has been investigated in Lemma 8 of \cite{lcly20160808}. In the following we study the case $n\geq3$. For this purpose, we introduce the conditional extremum characterized by the system of linear equations
\begin{equation}
\label{eq:extremum}
{\partial \big( y(\{c_j\}) + \l (\sum_j c_j-1)\big) \over \partial c_j}=0,
\end{equation}
where $\lambda$ is the Lagrange multiplier. For convenience, we formulate \eqref{eq:ycj}	 and  \eqref{eq:extremum} in matrix forms as follows.
\begin{eqnarray}
\label{eq:ycj-2}
y(\{c_j\})
=
{1\over2}
\bma
c_1 & c_2 & \cdots & c_n
\ema
\cdot
M_n
\cdot
\bma
c_1\\
c_2\\
\vdots\\
c_n
\ema,\\
\label{eq:Lagrangemultiplier2}
M_n
\cdot
\bma
c_1\\
c_2\\
\vdots\\
c_n
\ema
=-\lambda
\bma
1\\
1\\
\vdots\\
1
\ema,
\end{eqnarray}
where 
\begin{equation}
\label{eq:fam2}
M_n:=\bma
\sin^2\frac{\theta_1-\theta_1}{2} & \sin^2\frac{\theta_1-\theta_2}{2} & \cdots & \sin^2\frac{\theta_1-\theta_n}{2} \\
\sin^2\frac{\theta_2-\theta_1}{2} & \sin^2\frac{\theta_2-\theta_2}{2} & \cdots & \sin^2\frac{\theta_2-\theta_n}{2} \\
\vdots & \vdots & \vdots & \vdots \\
\sin^2\frac{\theta_n-\theta_1}{2} & \sin^2\frac{\theta_n-\theta_2}{2} & \cdots & \sin^2\frac{\theta_n-\theta_n}{2}
\ema.
\end{equation}
Since $M_n$ belongs to the family of matrices defined in Definition \ref{def:mat1}, we have $\rank M_n\leq 3,\forall n\geq 3$ from Lemma \ref{le:rleq3} in Appendix \ref{sec:rankofAn}.
To maximize $y(\{c_j\})$, we need to compare $y(\{c_j\})$ in the points fixed by \eqref{eq:Lagrangemultiplier2} with boundary points. We begin with the case $n=3$.
\begin{lemma}
\label{le:n=3}
For $n=3$, $\max y(\{c_j\})$ has the following two cases.

(i) If $\sin^2\frac{\theta_1-\theta_2}{2}\sin^2\frac{\theta_2-\theta_3}{2}\sin^2\frac{\theta_3-\theta_1}{2}>0$ and the vector $
\bma
\cos(\frac{\theta_2-\theta_3}{2})\csc(\frac{\theta_1-\theta_2}{2})\csc(\frac{\theta_1-\theta_3}{2})\\
\cos(\frac{\theta_1-\theta_3}{2})\csc(\frac{\theta_2-\theta_1}{2})\csc(\frac{\theta_2-\theta_3}{2})\\
\cos(\frac{\theta_1-\theta_2}{2})\csc(\frac{\theta_3-\theta_1}{2})\csc(\frac{\theta_3-\theta_2}{2})
\ema
$ has non-negative components, then $\max y(\{c_j\})=\frac{1}{4}$. 

(ii) Otherwise, $\max y(\{c_j\})=\frac{1}{4}\max\{\sin^2\frac{\theta_1-\theta_2}{2},\sin^2\frac{\theta_1-\theta_3}{2},\sin^2\frac{\theta_2-\theta_3}{2}\}$. 
\end{lemma}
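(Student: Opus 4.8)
The plan is to treat the problem as maximizing the quadratic form $y(\{c_j\})=\tfrac12\,\mathbf{c}^{T}M_3\,\mathbf{c}$ over the compact simplex $\Delta=\{\mathbf{c}:c_j\ge 0,\ \sum_j c_j=1\}$, where $M_3$ is the matrix in \eqref{eq:fam2} for $n=3$ and $\mathbf{c}=(c_1,c_2,c_3)^{T}$. By compactness the maximum is attained, and any maximizer is either a relative-interior point, where the Lagrange condition \eqref{eq:Lagrangemultiplier2} must hold, or a point lying on an edge $\{c_k=0\}$. On such an edge the form collapses to $c_ic_j\sin^2\frac{\theta_i-\theta_j}{2}$ with $c_i+c_j=1$, whose maximum is $\frac14\sin^2\frac{\theta_i-\theta_j}{2}$ at the midpoint. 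Hence the boundary contributes exactly the quantity $\frac14\max\{\sin^2\frac{\theta_1-\theta_2}{2},\sin^2\frac{\theta_1-\theta_3}{2},\sin^2\frac{\theta_2-\theta_3}{2}\}$ appearing in case (ii), and the whole problem reduces to deciding when an interior critical point beats this.

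First I would analyze the interior critical point under the hypothesis of (i) that all three $\sin^2\frac{\theta_j-\theta_k}{2}>0$, so that $\det M_3=2\prod_{j<k}\sin^2\frac{\theta_j-\theta_k}{2}\neq 0$ and $M_3$ is invertible. Solving \eqref{eq:Lagrangemultiplier2} gives $\mathbf{c}=-\lambda M_3^{-1}\mathbf{1}$, and evaluating the form there yields $y=\tfrac12\mathbf{c}^{T}M_3\mathbf{c}=-\tfrac{\lambda}{2}\,\mathbf{1}^{T}\mathbf{c}=-\tfrac{\lambda}{2}$, where $\lambda$ is fixed by the normalization $\mathbf{1}^{T}\mathbf{c}=-\lambda\,\mathbf{1}^{T}M_3^{-1}\mathbf{1}=1$.

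The crux, and the step I expect to be the \emph{main obstacle}, is a trigonometric identity. Writing $\alpha=\frac{\theta_1-\theta_2}{2}$, $\beta=\frac{\theta_2-\theta_3}{2}$, $\gamma=\frac{\theta_3-\theta_1}{2}$, one has $\alpha+\beta+\gamma=0$. Computing the adjugate of $M_3$ and its row sums, and then repeatedly applying $\sin^2 u-\sin^2 v=\sin(u+v)\sin(u-v)$ together with $\sin\gamma=-\sin(\alpha+\beta)$, I would establish two facts with $s_1=\sin^2\alpha,\ s_2=\sin^2\beta,\ s_3=\sin^2\gamma$: (a) $\mathbf{1}^{T}M_3^{-1}\mathbf{1}=2$, equivalently $2(s_1s_2+s_2s_3+s_3s_1)-(s_1^2+s_2^2+s_3^2)=4\,s_1s_2s_3$; and (b) $M_3^{-1}\mathbf{1}$ equals exactly the displayed vector in (i). Fact (a) forces $\lambda=-\tfrac12$ and hence the interior critical value equals $\tfrac14$, while fact (b) shows $\mathbf{c}=\tfrac12\times(\text{the vector in (i)})$, so the interior critical point lies in $\Delta$ precisely when that vector has nonnegative components.

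Finally I would assemble the comparison. Since every edge value $\frac14\sin^2\frac{\theta_i-\theta_j}{2}\le\frac14$, whenever the interior critical point is feasible it attains the dominating value $\frac14$; as $M_3$ is invertible this critical point is unique, so any interior maximizer has value $\frac14$ and any boundary point has value $\le\frac14$, giving $\max y=\frac14$, which is case (i). If instead the displayed vector has a negative entry, the unique interior critical point is infeasible, so the maximum is forced onto the boundary and equals $\frac14\max\{\dots\}$. The remaining degenerate situations, where some $\sin^2\frac{\theta_j-\theta_k}{2}=0$, I would dispatch by direct inspection of \eqref{eq:Lagrangemultiplier2}: a vanishing entry makes $M_3$ singular and shows that any candidate interior critical point either fails to be interior or already attains value $\frac14\max\{\dots\}$, so the boundary value again prevails, completing case (ii).
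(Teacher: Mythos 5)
Your proposal is correct and follows essentially the same route as the paper: both treat the problem as maximizing the quadratic form $\frac12\mathbf{c}^{T}M_3\mathbf{c}$ over the simplex, use $\det M_3=2\prod_{j<k}\sin^2\frac{\theta_j-\theta_k}{2}$ to get the unique Lagrange critical point (whose value is $-\lambda/2=\frac14$ and whose coordinates are half the displayed vector), and reduce the boundary/degenerate cases to the two-term maximum $\frac14\sin^2\frac{\theta_i-\theta_j}{2}$. The only difference is that you make explicit the trigonometric identities the paper leaves as ``one can verify.''
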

\begin{proof}
(i) One can show $\det(M_3)=2\sin^2\frac{\theta_1-\theta_2}{2}\sin^2\frac{\theta_2-\theta_3}{2}\sin^2\frac{\theta_3-\theta_1}{2}>0$ which follows from hypothesis in (i). It implies $\rank M_3=3$. So for $n=3$, the system of linear equations \eqref{eq:Lagrangemultiplier2} has exact one solution $
\bma
\lambda\\
c_1\\
c_2\\
c_3
\ema=
\bma
-\frac{1}{2}\\
\frac{1}{2}\cos(\frac{\theta_2-\theta_3}{2})\csc(\frac{\theta_1-\theta_2}{2})\csc(\frac{\theta_1-\theta_3}{2})\\
\frac{1}{2}\cos(\frac{\theta_1-\theta_3}{2})\csc(\frac{\theta_2-\theta_1}{2})\csc(\frac{\theta_2-\theta_3}{2})\\
\frac{1}{2}\cos(\frac{\theta_1-\theta_2}{2})\csc(\frac{\theta_3-\theta_1}{2})\csc(\frac{\theta_3-\theta_2}{2})
\ema.
$ One can verify $c_1+c_2+c_3=1$. The hypothesis in (i) also insures $c_1,c_2,c_3$ satisfy the conditions $c_1,c_2,c_3\ge0$. In this point fixed by the solution, we have $y(\{c_j\})=-\frac{\lambda}{2}=\frac{1}{4}$ from \eqref{eq:ycj-2}. Hence, the assertion (i) holds.

(ii) Suppose $\theta_1,\theta_2,\theta_3$ do not satisfy the hypothesis in (i). There are two cases when maximizing $y(\{c_j\})$.

Case 1. If there exists one negative component in the above solution vector, the maximum of $y(\{c_j\})$ occurs on the boundary. Hence, we have $\max y(\{c_j\})=\frac{1}{4}\max\{\sin^2\frac{\theta_1-\theta_2}{2},\sin^2\frac{\theta_1-\theta_3}{2},\sin^2\frac{\theta_2-\theta_3}{2}\}$ which is implied from Propositon 1 \cite{lcly20160808}.

Case 2. When $\sin^2\frac{\theta_1-\theta_2}{2}\sin^2\frac{\theta_2-\theta_3}{2}\sin^2\frac{\theta_3-\theta_1}{2}=0$, straightforward computation shows $\max y(\{c_j\})=\frac{1}{4}\max\{\sin^2\frac{\theta_1-\theta_2}{2},\sin^2\frac{\theta_1-\theta_3}{2},\sin^2\frac{\theta_2-\theta_3}{2}\}$.

To sum up, the assertion (ii) holds.

This completes the proof.
\end{proof}

Based on Lemma \ref{le:n=3} we further investigate the entangling power of $V$ for any $n>3$ in \eqref{ep:schr2u1}.

\begin{theorem}
\label{thm:Schr2}
Suppose $V=\ketbra{0}{0}\otimes I_n+\ketbra{1}{1}\otimes\sum\limits_{j=1}^{n}e^{i\theta_j}\ketbra{j}{j}$, $\forall n>3$, is a
Schmidt-rank-two bipartite unitary operation, where $\t_j$'s are real.

(i) If there exist $\theta_{i_1},\theta_{i_2},\theta_{i_3}$ satisfying $\sin^2\frac{\theta_{i_1}-\theta_{i_2}}{2}\sin^2\frac{\theta_{i_2}-\theta_{i_3}}{2}\sin^2\frac{\theta_{i_3}-\theta_{i_1}}{2}>0$ , and non-negative $c_j$'s, $j\neq i_1,i_2,i_3$ such that vector
\begin{equation}
\label{eq:nonnegcompon-2}
\bma
c_{i_1}\\
c_{i_2}\\
c_{i_3}
\ema
=
\bma
\bigg(\cos(\frac{\theta_{i_2}-\theta_{i_3}}{2})-\sum\limits_{j\neq i_1,i_2,i_3}\sin\frac{\theta_j-\theta_{i_2}}{2}\sin\frac{\theta_j-\theta_{i_3}}{2}c_j\bigg)\csc(\frac{\theta_{i_1}-\theta_{i_2}}{2})\csc(\frac{\theta_{i_1}-\theta_{i_3}}{2})\\
\bigg(\cos(\frac{\theta_{i_1}-\theta_{i_3}}{2})-\sum\limits_{j\neq i_1,i_2,i_3}\sin\frac{\theta_j-\theta_{i_1}}{2}\sin\frac{\theta_j-\theta_{i_3}}{2}c_j\bigg)\csc(\frac{\theta_{i_2}-\theta_{i_1}}{2})\csc(\frac{\theta_{i_2}-\theta_{i_3}}{2})\\
\bigg(\cos(\frac{\theta_{i_1}-\theta_{i_2}}{2})-\sum\limits_{j\neq i_1,i_2,i_3}\sin\frac{\theta_j-\theta_{i_1}}{2}\sin\frac{\theta_j-\theta_{i_2}}{2}c_j\bigg)\csc(\frac{\theta_{i_3}-\theta_{i_1}}{2})\csc(\frac{\theta_{i_3}-\theta_{i_3}}{2})
\ema
\end{equation}
 has non-negative components, then $\mathit{K}_{\mathit{E}}(V)=1$ ebit.

(ii) Otherwise $\mathit{K}_{\mathit{E}}(V)=\max\limits_{1\leq i<j\leq n}\{h(i,j)\}$ ebits, where $h(i,j)\coloneqq\mathit{H}(\frac{1-\abs{\cos\frac{\theta_i-\theta_j}{2}}}{2},\frac{1+\abs{\cos\frac{\theta_i-\theta_j}{2}}}{2})$. 
\end{theorem}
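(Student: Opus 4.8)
The plan is to reduce the general $n>3$ problem to the already-solved case $n=3$ of Lemma~\ref{le:n=3}, exploiting the crucial rank bound $\rank M_n\le 3$. The objective is to maximize $y(\{c_j\})={1\over2}\mathbf{c}^T M_n\,\mathbf{c}$ over the probability simplex $\{c_j\ge0,\ \sum_j c_j=1\}$. First I would observe that, because $M_n$ has rank at most $3$, the quadratic form $y$ depends on $\mathbf{c}$ only through at most three linear functionals. Any interior conditional extremum must satisfy the Lagrange system \eqref{eq:Lagrangemultiplier2}, and since $y$ is a quadratic form with an indefinite low-rank matrix, the maximum over the simplex is attained either at such a stationary point (when it lies in the simplex) or on a face, i.e.\ on a lower-dimensional subsimplex where some $c_j=0$.

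For part (i), I would show that the claimed value $K_E(V)=1$ ebit corresponds to $y(\{c_j\})={1\over4}$, which is the global maximum $H$ can detect: since $H(p,1-p)$ is increasing on $[0,1/2]$ and the inner argument $\tfrac{1\mp(1-4y)^{1/2}}{2}$ pushes toward the balanced point $1/2$ as $y\to 1/4$, maximizing $K_E(V)$ is equivalent to maximizing $y$, and $y\le 1/4$ always holds with equality exactly when the Lagrange point is a genuine maximizer. Concretely, if there exist indices $i_1,i_2,i_3$ with $\sin^2\tfrac{\theta_{i_1}-\theta_{i_2}}{2}\sin^2\tfrac{\theta_{i_2}-\theta_{i_3}}{2}\sin^2\tfrac{\theta_{i_3}-\theta_{i_1}}{2}>0$ and free non-negative weights $c_j$ ($j\ne i_1,i_2,i_3$) making the vector \eqref{eq:nonnegcompon-2} non-negative, then one checks directly (as in Lemma~\ref{le:n=3}(i), where $\lambda=-\tfrac12$ forces $y={1\over4}$) that this feasible point yields $y=\tfrac14$, hence $K_E(V)=H(\tfrac12,\tfrac12)=1$. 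The formulas in \eqref{eq:nonnegcompon-2} are exactly the $3\times3$ solution of Lemma~\ref{le:n=3}(i) with the remaining weights absorbed into the right-hand side via the off-diagonal $\sin\cdot\sin$ couplings.

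For part (ii), I would argue that when no such triple-plus-feasible-completion exists, the maximum of $y$ cannot reach $\tfrac14$ in the interior, so it must occur on the boundary of the simplex. By an iterated face-reduction argument—whenever a maximizer has more than two nonzero $c_j$ and does not attain $\tfrac14$, one can descend to a face—the maximum is attained with at most two nonzero coordinates, say $c_i,c_j$. On such an edge $y$ reduces to $c_i c_j\sin^2\tfrac{\theta_i-\theta_j}{2}$, maximized at $c_i=c_j=\tfrac12$ to give $y=\tfrac14\sin^2\tfrac{\theta_i-\theta_j}{2}$, and the corresponding entropy is precisely $h(i,j)=H\!\big(\tfrac{1-|\cos\frac{\theta_i-\theta_j}{2}|}{2},\tfrac{1+|\cos\frac{\theta_i-\theta_j}{2}|}{2}\big)$ after simplifying $1-\sin^2=\cos^2$. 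Taking the maximum over all pairs gives the stated $\max_{i<j}h(i,j)$. This generalizes Lemma~\ref{le:n=3}(ii) Cases~1 and~2.

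The main obstacle I anticipate is the face-reduction step in part~(ii): rigorously showing that whenever the interior/higher-dimensional stationary condition fails to deliver $y=\tfrac14$, the global maximum genuinely migrates to a two-element support rather than stalling on some intermediate face with three or more active coordinates. Because $M_n$ is indefinite and low-rank, a naive appeal to convexity is unavailable, so I would instead restrict $y$ to each face (itself a copy of the $M_k$ problem for smaller $k$, still of rank $\le 3$) and apply Lemma~\ref{le:n=3} together with the non-existence hypothesis of (i) to rule out interior maxima on every face of dimension $\ge2$. Verifying that the hypothesis of (i) failing on the full index set propagates to all relevant sub-triples—so that no face can attain $\tfrac14$—is the delicate combinatorial bookkeeping that must be handled carefully.
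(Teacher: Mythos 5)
Your proposal follows essentially the same route as the paper's proof: the rank bound $\rank M_n\le 3$, the Lagrange system \eqref{eq:Lagrangemultiplier2} whose solution \eqref{eq:nonnegcompon-2} forces $y=-\lambda/2=\tfrac14$ in case (i), and reduction to the boundary (two-element supports, yielding $\max_{i<j}h(i,j)$) in case (ii). The face-reduction step you flag as delicate is in fact treated no more rigorously in the paper, which simply asserts that the maximum ``occurs on the boundary'' and reduces to the $n=3$ cases by setting $n-3$ of the $c_j$ equal to zero.
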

\begin{proof}
(i) We know $\rank M_n\leq3$ from Lemma \ref{le:rleq3} in Appendix \ref{sec:rankofAn}. We have $\rank M_n=3$ which follows from the hypothesis $\sin^2\frac{\theta_{i_1}-\theta_{i_2}}{2}\sin^2\frac{\theta_{i_2}-\theta_{i_3}}{2}\sin^2\frac{\theta_{i_3}-\theta_{i_1}}{2}>0$. It implies there should be $(n-3)$ free variables in the fundamental system of solutions of the system of linear equations \eqref{eq:Lagrangemultiplier2}. We select $c_j$'s, $j\neq i_1,i_2,i_3$ as the $(n-3)$ free variables, and the remaining $c_{i_1},c_{i_2},c_{i_3}$ as the dependent variables. One can verify vectors $[c_1,c_2,\cdots,c_n]^T$ where $c_{i_1},c_{i_2},c_{i_3}$ are expressed in \eqref{eq:nonnegcompon-2} are the solutions of \eqref{eq:Lagrangemultiplier2}, and $\sum\limits_{i=1}^n c_i=1$.
The hypothesis in (i) assures $c_i$'s satisfy the conditions $c_1,\cdots,c_n\ge0$. In points fixed by the solutions, we have $y(\{c_j\})=-\frac{\lambda}{2}=\frac{1}{4}$ from \eqref{eq:ycj-2}. So $\mathit{K}_{\mathit{E}}(V)=1$ ebit from \eqref{ep:schr2u1}. Hence, the assertion (i) holds.

(ii) If the hypothesis in (i) doesn't hold, there are two cases when maximizing $y(\{c_j\})$.

Case 1. If there exist no such non-negative $c_j$'s, $j\neq i_1,i_2,i_3$ such that the vector in \eqref{eq:nonnegcompon-2} has non-negative components, the maximum of $y(\{c_j\})$ occurs on the boundary. So this case can be reduced to Case 1 in the proof of Lemma \ref{le:n=3} (ii) by setting some $(n-3)$ $c_j$'s equal to zero.  Hence, we have $\max y(\{c_j\})=\frac{1}{4}\max\limits_{1\leq {j_1}<{j_2}\leq n}\{\sin^2\frac{\theta_{j_1}-\theta_{j_2}}{2}\}$. Then we have 
$\mathit{K}_{\mathit{E}}(V)=\max\limits_{1\leq i<j\leq n}\{h(i,j)\}$ from \eqref{ep:schr2u1}. So the assertion (ii) holds.

Case 2. If $\sin^2\frac{\theta_{i_1}-\theta_{i_2}}{2}\sin^2\frac{\theta_{i_2}-\theta_{i_3}}{2}\sin^2\frac{\theta_{i_3}-\theta_{i_1}}{2}=0$ for any three pairwisely different $i_1,i_2,i_3$, this case can be reduced to Case 2 in the proof of Lemma \ref{le:n=3} (ii). Hence, the assertion (ii) holds from \eqref{ep:schr2u1}.

To sum up, the assertion (ii) holds.

This completes the proof.
\end{proof}

Using Lemma \ref{le:n=3}, Theorem \ref{thm:Schr2} and the facts in \cite[Eq. (16)]{lcly20160808}, we can analytically derive the entangling power of any Schmidt-rank-two bipartite unitary operations. Lemma \ref{le:n=3} and Theorem \ref{thm:Schr2} also show that Proposition 1 and Conjecture 1 in Ref. \cite{lcly20160808} are incomplete.


\section{Open problems}
\label{sec:open}

The first open problem from this paper is how to compute the entangling power of normalized $U$ of Schmidt rank four when $c_2\ne c_3$. The primary computation has shown that the analytical expression of $K_{E}(U)$ satisfies some monotonicity when $c_j$'s are in some interval. The second open problem is how to obtain the normal decomposition of bipartite unitary operations in higher dimensions, as it may decrease the number of parameters involved in the computation. Third, it is unknown whether we can extend our results to the assisted entangling power of $U$, namely the input states can be entangled or separable.

\section*{Acknowledgments}
\label{sec:ack}

We want to show our deepest gratitude to the anonymous referees for their careful work and useful suggestions. We thank Siddhartha Das for pointing out the paper \cite{sd1803}. 
YS and LC were supported by the  NNSF of China (Grant No. 11501024), Beijing Natural Science Foundation (4173076), and the Fundamental Research Funds for the Central Universities (Grant Nos. KG12040501, ZG216S1810 and ZG226S18C1).

\appendix

\section{The proof of Lemma \ref{le:boundarycases}}
\label{sec:proofoflemma4}

\begin{proof}
(i) Using those formulas in Sec. \ref{subsec:mathproperty} one can show that $\forall 1\leq j,k \leq 4$, $m_{jk}(\a,\b)=m_{kj}^*(\a,\b)$, $m_{jk}({\pi\over2}-\a,{\pi\over2}-\b)=m_{(5-j)(5-k)}(\a,\b)$ and
$\sum_j m_{jj}(\a,\b)=1$. Substituting ${\pi\over2}-\a$ and ${\pi\over2}-\b$ for $\a$ and $\b$ respectively in \eqref{eq:Ephialphabetax}, we convert $\rho_{BR_B}$ into $(\s_1\ox\s_1)\rho_{BR_B}(\s_1\ox\s_1)$. Then we have $E(\ph(\a,\b;\frac{\pi}{2},\frac{\pi}{2}))=E(\ph(\frac{\pi}{2}-\a,\frac{\pi}{2}-\b;\frac{\pi}{2},\frac{\pi}{2}))$. Hence, the assertion (i) holds.

(ii) We first claim $\max\limits_{\a\in[0,\frac{\pi}{4}]}E(\ph(\a,0;\frac{\pi}{2},\frac{\pi}{2}))=E(\ph(\frac{\pi}{4},0;\frac{\pi}{2},\frac{\pi}{2}))<\max\limits_{\a\in[0,\frac{\pi}{4}]}E(\ph(\a,\frac{\pi}{2};\frac{\pi}{2},\frac{\pi}{2}))$. When $\b=0$ or $\frac{\pi}{2}$, $E(\ph(\a,\b;\frac{\pi}{2},\frac{\pi}{2}))=H(t_1,t_2)$ where $t_1,t_2$ are expressed by \eqref{eq:t1t2}. To maximize $H(t_1,t_2)$ is to minimize $\abs{t_1-t_2}$. Using Eqs. \eqref{eq:cos2x} and \eqref{eq:c0c3c1c2} one can show
\begin{equation}
\label{eq:abst1-t2}
\begin{aligned}
\abs{t_1-t_2}&=\abs{2(c_0c_3^*+c_0^*c_3)\cos2\a\cos2\b+(\abs{c_0}^2+\abs{c_3}^2)-(\abs{c_1}^2+\abs{c_2}^2)},\\
&=\abs{(\sin2x\sin2y)\cos2\a\cos2\b+\cos2x\cos2y}.
\end{aligned}
\end{equation}
Since we assume $\a\in[0,\frac{\pi}{4}],\b\in[0,\frac{\pi}{2}]$, one can show the minimum of $\abs{t_1-t_2}$ when $\b=0$ is greater than when $\b=\frac{\pi}{2}$. So the first claim holds.

Secondly we claim that $\max\limits_{\b\in[0,\frac{\pi}{2}]}E(\ph(0,\b;\frac{\pi}{2},\frac{\pi}{2}))=\max\limits_{\a\in[0,\frac{\pi}{4}]}E(\ph(\a,\frac{\pi}{2};\frac{\pi}{2},\frac{\pi}{2}))=1$ when $\cos(2x+2y)\leq 0$, and $\max\limits_{\b\in[0,\frac{\pi}{2}]}E(\ph(0,\b;\frac{\pi}{2},\frac{\pi}{2}))=\max\limits_{\a\in[0,\frac{\pi}{4}]}E(\ph(\a,\frac{\pi}{2};\frac{\pi}{2},\frac{\pi}{2}))=E(\ph(0,\frac{\pi}{2};\frac{\pi}{2},\frac{\pi}{2}))$ when $\cos(2x+2y)> 0$. 
From Eq. \eqref{eq:abst1-t2}, one can verify $\max\limits_{\b\in[0,\frac{\pi}{2}]}E(\ph(0,\b;\frac{\pi}{2},\frac{\pi}{2}))=\max\limits_{\a\in[0,\frac{\pi}{4}]}E(\ph(\a,\frac{\pi}{2};\frac{\pi}{2},\frac{\pi}{2}))$. So it suffices to consider $\max\limits_{\b\in[0,\frac{\pi}{2}]}E(\ph(0,\b;\frac{\pi}{2},\frac{\pi}{2}))$. When $\a=0$, \eqref{eq:abst1-t2} can be simplified into
\begin{equation}
\label{eq:abst1-t2-2}
\begin{aligned}
\abs{t_1-t_2}&=\abs{2(c_0c_3^*+c_0^*c_3)\cos2\b+(\abs{c_0}^2+\abs{c_3}^2)-(\abs{c_1}^2+\abs{c_2}^2)},\\
&=\abs{(\sin2x\sin2y)\cos2\b+\cos2x\cos2y}.
\end{aligned}
\end{equation}
Since there is a restriction \eqref{eq:xyz} for parameters $x,y,z$, we find $(\sin2x\sin2y)\cos2\b+\cos2x\cos2y$ is a monotone increasing function with the variable $\cos2\b$. So we have $\min\limits_{\b\in[0,{\pi\over2}]}(\sin2x\sin2y)\cos2\b+\cos2x\cos2y=\cos(2x+2y)$. If $\cos(2x+2y)\leq 0$, it implies $\min\abs{t_1-t_2}=0$. Then we have $\max H(t_1,t_2)=1$ ebit. If $\cos(2x+2y)> 0$, it implies $\min\abs{t_1-t_2}=\cos(2x+2y)$ which occurs when $\b=\frac{\pi}{2}$. Then we have $\max H(t_1,t_2)=E(\ph(0,\frac{\pi}{2};\frac{\pi}{2},\frac{\pi}{2}))=H(\abs{c_0-c_3}^2,\abs{c_1-c_2}^2)$. So the second claim holds.

Finally we claim $\max\limits_{\b\in[0,\frac{\pi}{2}]}E(\ph(\frac{\pi}{4},\b;\frac{\pi}{2},\frac{\pi}{2}))=\max\limits_{\a\in[0,\frac{\pi}{4}]}E(\ph(\a,\frac{\pi}{4};\frac{\pi}{2},\frac{\pi}{2}))=E(\ph(\frac{\pi}{4},\frac{\pi}{4};\frac{\pi}{2},\frac{\pi}{2}))$. It follows from straightforward computation. 

Summarizing the above three claims, the assertion (ii) holds. 

This completes the proof.
\end{proof}

\section{The proof of Proposition \ref{pp:cristaspec1}}
\label{sec:proofofpp2}

\begin{proof}
From Proposition \ref{pp:lowerbound}, one can show \\
(i)If $x\in[\frac{\pi}{8},\frac{\pi}{4})$,
\begin{equation}
\label{eq:fam1pp1-1}
\max_{\a\in[0,\pi/4],\b\in[0,\pi/2]}E(\ph(\a,\b;\frac{\pi}{2},\frac{\pi}{2}))=
\max_{\a\in[0,\p/4]}
\big\{
E(\ph(\a,\frac{\pi}{2}-\a;\frac{\pi}{2},\frac{\pi}{2})),1
\big\}.
\end{equation}
(ii)If $x\in(0,\frac{\pi}{8})$,
\begin{equation}
\label{eq:fam1pp1-2}
\max_{\a\in[0,\pi/4],\b\in[0,\pi/2]}E(\ph(\a,\b;\frac{\pi}{2},\frac{\pi}{2}))=
\max_{\a\in[0,\p/4]}
E(\ph(\a,\frac{\pi}{2}-\a;\frac{\pi}{2},\frac{\pi}{2})).
\end{equation}
We need to show 
\begin{equation}
\label{eq:fam1pp1-3}
\max_{\a\in[0,\p/4]}
E(\ph(\a,\frac{\pi}{2}-\a;\frac{\pi}{2},\frac{\pi}{2}))
=\max\{E(\ph(\frac{\pi}{4},\frac{\pi}{4};\frac{\pi}{2},\frac{\pi}{2})),E(\ph(0,\frac{\pi}{2};\frac{\pi}{2},\frac{\pi}{2}))\},
\end{equation}
and when $x\in[\frac{\pi}{8},\frac{\pi}{4})$,
\begin{equation}
\label{eq:fam1pp1-4}
E(\ph(\frac{\pi}{4},\frac{\pi}{4};\frac{\pi}{2},\frac{\pi}{2}))\geq 1.
\end{equation}

We first prove \eqref{eq:fam1pp1-3} holds. Using $c_1=c_2=c_3$, one can further simplify $\l_{i2}$'s which have been expressed in \eqref{eq:Ephialphabetaeigen2} as follows.
\begin{equation}
\label{eq:fam1case2lam}
\begin{aligned}
\l_{12}&=\frac{\abs{c_0}^2-\cos4\a\abs{c}^2-\sqrt{(\abs{c_0}^2-\abs{c}^2)(\abs{c_0}^2-\cos^24\a\abs{c}^2)}}{2},\\
\l_{22}&=\frac{\abs{c_0}^2-\cos4\a\abs{c}^2+\sqrt{(\abs{c_0}^2-\abs{c}^2)(\abs{c_0}^2-\cos^24\a\abs{c}^2)}}{2},\\
\l_{32}&=(\cos2\a-1)^2\abs{c}^2,\\
\l_{42}&=(\cos2\a+1)^2\abs{c}^2.\\
\end{aligned}
\end{equation}
Set $y=-\cos4\a$ and $E_2(y):=E(\ph(\a,\frac{\pi}{2}-\a;\frac{\pi}{2},\frac{\pi}{2}))$. One can show the derivative of $E_2(y)$ as follows.
\begin{equation}
\label{eq:partialderivatives41}
\begin{aligned}
{d E_2(y)
\over
d y}&=\frac{\abs{c}^2}{2}\bigg(-\sqrt{\frac{2}{1-y}}(\log\l_{32}-\log\l_{42})-\sqrt{\frac{1-4\abs{c}^2}{1-\abs{c}^2(3+y^2)}}y(\log\l_{12}-\log\l_{22})\\
&+\log\frac{\abs{c}^2}{1-3\abs{c}^2}\bigg).
\end{aligned}
\end{equation}
From \eqref{eq:partialderivatives41}, one can show $\lim\limits_{y\to-1}{d E_2(y)
\over
d y}=\abs{c}^2(\log4\abs{c}^2-\log(1-4\abs{c}^2))$, and $\lim\limits_{y\to1}{d E_2(y)
\over
d y}=\frac{2\abs{c}^2}{\ln2}\geq0$. We have $\lim\limits_{y\to-1}{d E_2(y)
\over
d y}<0$ when $\abs{c}^2\in(0,\frac{1}{8})$, and $\lim\limits_{y\to-1}{d E_2(y)
\over
d y}\geq 0$ when $\abs{c}^2\in[\frac{1}{8},\frac{1}{4})$. We first consider the case when $\abs{c}^2\in[\frac{1}{8},\frac{1}{4})$. Suppose $F_1(y,\abs{c}^2):={d E_2(y)
\over
d y}$. Fig. \ref{fig:fam1case1} shows $F_1(y,\abs{c}^2)$ is nonnegative within the domain $y\in[-1,1],\abs{c}^2\in[\frac{1}{8},\frac{1}{4})$. It implies when $\abs{c}^2\in[\frac{1}{8},\frac{1}{4})$, $E_2(y)$ is monotone increasing in terms of $y$. Hence, we conclude the maximum of $E_2(y)$ occurs when $y=1$, i.e., $\a=\frac{\pi}{4}$ if $\abs{c}^2\in[\frac{1}{8},\frac{1}{4})$.

We next consider the case when $\abs{c}^2\in(0,\frac{1}{8})$. 
Suppose $F_2(y,\abs{c}^2):={d^2 E_2(y)
\over
d y^2}$. Fig. \ref{fig:fam1case2} shows $F_2(y,\abs{c}^2)$ is nonnegative within the domain $y\in[-1,1],\abs{c}^2\in(0,\frac{1}{8})$. It implies $E_2(y)$ is convex when $\abs{c}^2\in(0,\frac{1}{8})$. Hence, we conclude the maximum of $E_2(y)$ occurs at the two edge points $y=-1,1$, i.e., $\a=0,\frac{\pi}{4}$ if $\abs{c}^2\in(0,\frac{1}{8})$. It implies $\max\limits_{\a\in[0,\frac{\pi}{4}]}E(\ph(\a,\frac{\pi}{2}-\a;\frac{\pi}{2},\frac{\pi}{2}))=\max\{E_2(-1),E_2(1)\}$ in this case. Combining the two cases, we have Eq.\eqref{eq:fam1pp1-3} holds.

Then we need to show when $x\in[\frac{\pi}{8},\frac{\pi}{4})$, Eq. \eqref{eq:fam1pp1-4} holds. By computing, we have $h(x):=E(\ph(\frac{\pi}{4},\frac{\pi}{4};\frac{\pi}{2},\frac{\pi}{2}))=H(\cos^6x+\sin^6x,\cos^2x\sin^2x,\cos^2x\sin^2x,\cos^2x\sin^2x)$. One can show $h(x)$ is monotone increasing. So we have $h(x)$ is lower bounded by $h(\frac{\pi}{8})\approx 1.55>1$. Therefore, Eq. \eqref{eq:fam1pp1-4} holds.

This completes the proof.
\end{proof}

\begin{figure}[!h]
\centering
\subfigure[$F_1(y,\abs{c}^2)$]
{
     \includegraphics[width=3.0in]{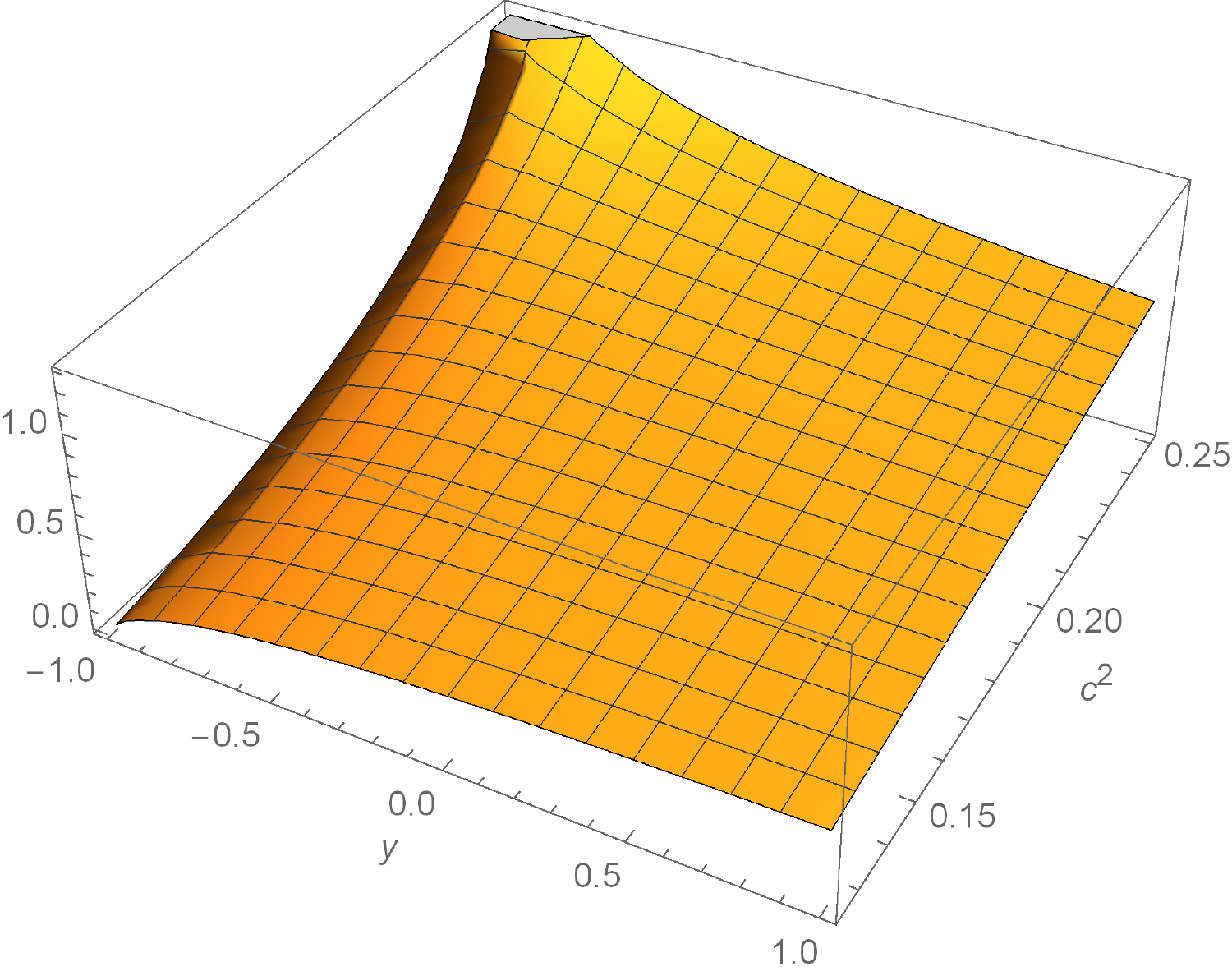}
     \label{fig:fam1case1}
}
\subfigure[$F_2(y,\abs{c}^2)$]
{
     \includegraphics[width=3.0in]{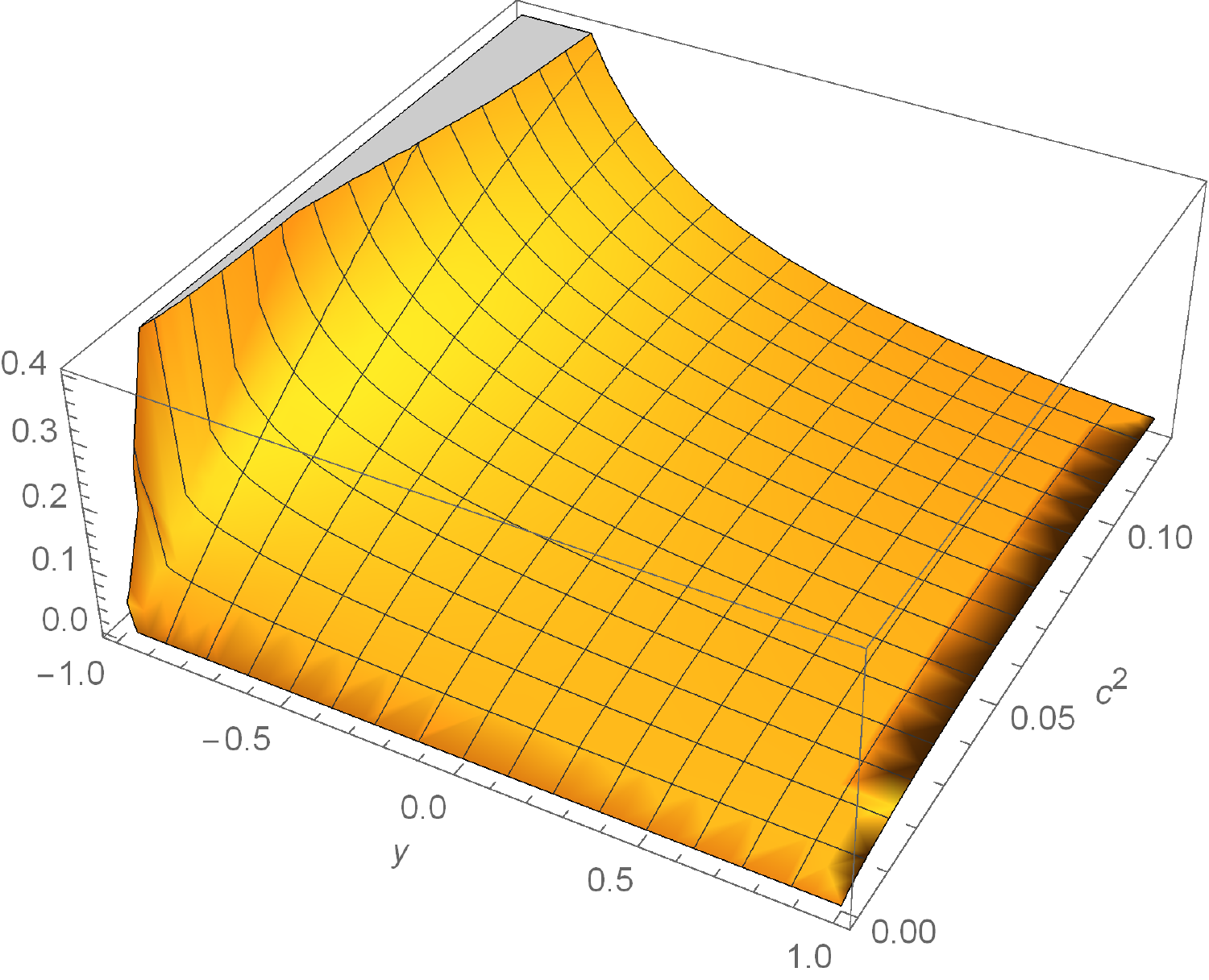}
     \label{fig:fam1case2}
}
\caption{Images of derivative}
\label{fig:family1}
\end{figure}

\section{The proof of Proposition \ref{pp:speUf3-1}}
\label{sec:proofofpp3}

\begin{proof}
When $x=\frac{\pi}{4}$, one can find the following fact.
\begin{equation}
\label{eq:c0=ic1*}
\begin{aligned}
&c_0=ic_1^*,\quad
 c_2=ic_3^*,\\
&\abs{c_0}=\abs{c_1}>
 \abs{c_2}=\abs{c_3},\\
&\abs{c_0}^2+\abs{c_2}^2
 ={1\over2}.
\end{aligned}
\end{equation}
One can show $H(\abs{c_0}^2,\abs{c_0}^2,\abs{c_2}^2,\abs{c_2}^2)\geq1$ by straightforward computation. From Proposition \ref{pp:lowerbound}, we only need to prove $\max\limits_{\a\in[0,\pi/4]}E(\ph(\a,\frac{\pi}{2}-\a;\frac{\pi}{2},\frac{\pi}{2}))=E(\ph(\frac{\pi}{4},\frac{\pi}{4};\frac{\pi}{2},\frac{\pi}{2}))$. Set $u=\cos^22\a\in[0,1]$ and $E(u):=E(\ph(\a,\frac{\pi}{2}-\a;\frac{\pi}{2},\frac{\pi}{2}))$. One can show that 
\begin{equation}
\label{eq:derioff3-1}
{d (\l_{22}+\l_{42})
\over
d u}
=\frac{1}{2}\bigg(\frac{-k+8l(1-u)+2k^2u}{\sqrt{(1-2ku)^2-16l(1-u)^2}}+\frac{k+8l(1-u)+2k^2u}{\sqrt{(1+2ku)^2-16l(1-u)^2}}\bigg),
\end{equation}
where
\begin{equation}
\label{eq:derioff3-2}
k=c_0c_3^*+c_0^*c_3=c_1c_2^*+c_1^*c_2=\frac{1}{2}\sin2y, \quad l=\abs{c_0c_3}^2=\abs{c_1c_2}^2.
\end{equation}
We claim 
$
{d (\l_{22}+\l_{42})
\over
d u}
\geq 0
$. One can show
\begin{equation}
\label{eq:derioff3-2}
\begin{aligned}
&\bigg(\frac{-k+8l(1-u)+2k^2u}{\sqrt{(1-2ku)^2-16l(1-u)^2}}/\frac{k+8l(1-u)+2k^2u}{\sqrt{(1+2ku)^2-16l(1-u)^2}}\bigg)^2\\
&=\frac{(k+8l(-1+u)-2k^2u)^2(-16l(-1+u)^2+(1+2ku)^2)}{(k-8l(-1+u)+2k^2u)^2(-16l(-1+u)^2+(1-2ku)^2)}.
\end{aligned}
\end{equation}
Considering the second term of \eqref{eq:derioff3-2}, the difference between the numerator and the denominator is $32kl(1-u)\big((4k^2-16l)u-1+16l\big)$. One can show this difference is non-positive from $4k^2-16l\leq 0,16l-1\leq0$. It implies $\abs{\frac{-k+8l(1-u)+2k^2u}{\sqrt{(1-2ku)^2-16l(1-u)^2}}/\frac{k+8l(1-u)+2k^2u}{\sqrt{(1+2ku)^2-16l(1-u)^2}}}\leq 1$. Since $\frac{k+8l(1-u)+2k^2u}{\sqrt{(1+2ku)^2-16l(1-u)^2}}$ is positive, we have \eqref{eq:derioff3-1} is nonnegative. It implies $(\l_{22}+\l_{42})$ is monotone increasing with $u\in[0,1]$. Since $(\l_{22}+\l_{42})=2\abs{c_0}^2$ when $u=0$, the monotonicity implies $(\l_{22}+\l_{42})\geq2\abs{c_0}^2$. Further we have $(\l_{12}+\l_{32})\leq2\abs{c_2}^2$. Therefore, one can show one of $\l_{22},\l_{42}$ isn't less than $\abs{c_0}^2$, and one of $\l_{12},\l_{32}$ isn't greater than $\abs{c_2}^2$. It implies $[\abs{c_0}^2,\abs{c_0}^2,\abs{c_2}^2,\abs{c_2}^2]^T\prec[\l_{12},\l_{22},\l_{32},\l_{42}]^T$. From Lemma \ref{le:maj}, we have $H(\l_{12},\l_{22},\l_{32},\l_{42})$ is upper bounded by $H(\abs{c_0}^2,\abs{c_0}^2,\abs{c_2}^2,\abs{c_2}^2)$ which is attained when $\a=\frac{\pi}{4}$. 

This completes the proof.
\end{proof}

\section{The rank of matrix $A_n$}
\label{sec:rankofAn}

\begin{definition}
\label{def:mat1}
Define a family of order-$n$ matrices as follows. Let $\a_j,\b_j\in\bbR$ for any $j$ and
\begin{equation}
\label{eq:fam1}
A_n:=\bma
\sin^2\alpha_1 & \sin^2(\alpha_1+\beta_1) & \cdots & \sin^2(\alpha_1+\beta_{n-1})\\
\sin^2\alpha_2 & \sin^2(\alpha_2+\beta_1) & \cdots & \sin^2(\alpha_2+\beta_{n-1})\\
\vdots & \vdots & \vdots & \vdots \\
\sin^2\alpha_n & \sin^2(\alpha_n+\beta_1) & \cdots & \sin^2(\alpha_n+\beta_{n-1})
\ema.
\end{equation}
\qed
\end{definition}

\begin{lemma}
\label{le:rleq3}
For any $n\geq 3$,
$
\rank A_n \le 3
$.

\end{lemma}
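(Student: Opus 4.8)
The plan is to exhibit $A_n$ explicitly as a sum of three rank-one matrices, which immediately forces $\rank A_n\le 3$ uniformly in $n$. The whole engine is the elementary half-angle identity $\sin^2\t=\tfrac12(1-\cos2\t)$ combined with the cosine addition formula.

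First I would adopt the convention that the first column corresponds to the offset $\b_0:=0$, so that the $(i,j)$ entry of $A_n$ is $\sin^2(\a_i+\b_{j-1})$ for $j=1,\dots,n$. Rewriting each entry,
\begin{equation*}
\sin^2(\a_i+\b_{j-1})
=\frac12-\frac12\cos(2\a_i+2\b_{j-1})
=\frac12-\frac12\cos2\a_i\cos2\b_{j-1}+\frac12\sin2\a_i\sin2\b_{j-1}.
\end{equation*}
The point is that every one of the three resulting terms factors as a function of the row index $i$ alone times a function of the column index $j$ alone.

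Concretely, I would introduce the vectors
\begin{align*}
u_1&=\tfrac12(1,\dots,1)^{T}, & v_1&=(1,\dots,1)^{T},\\
u_2&=-\tfrac12(\cos2\a_1,\dots,\cos2\a_n)^{T}, & v_2&=(\cos2\b_0,\dots,\cos2\b_{n-1})^{T},\\
u_3&=\tfrac12(\sin2\a_1,\dots,\sin2\a_n)^{T}, & v_3&=(\sin2\b_0,\dots,\sin2\b_{n-1})^{T},
\end{align*}
so that the displayed identity reads exactly $A_n=u_1v_1^{T}+u_2v_2^{T}+u_3v_3^{T}$. Each summand is an outer product of two vectors and hence has rank at most one; since rank is subadditive under addition, $\rank A_n\le 1+1+1=3$ for every $n\ge 3$ (indeed for every $n$).

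There is essentially no obstacle here: the only content is the trigonometric identity, which is routine, plus the bookkeeping that the $\b_0=0$ column is correctly absorbed via $\cos0=1$ (the zeroth entry of $v_2$) and $\sin0=0$ (the zeroth entry of $v_3$). The argument in fact gives slightly more than claimed: the column space of $A_n$ always lies in the span of the all-ones vector together with the two vectors $(\cos2\a_i)_i$ and $(\sin2\a_i)_i$, an at most three-dimensional space independent of $n$, which is precisely what is needed to deduce $\rank M_n\le 3$ for the matrices $M_n$ of Eq.~\eqref{eq:fam2} used in Theorem~\ref{thm:Schr2}.
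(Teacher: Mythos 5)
Your proof is correct and takes essentially the same approach as the paper: both express each entry of $A_n$ as a sum of three terms that separate into (row function)\,$\times$\,(column function), hence write $A_n$ as a sum of three rank-one matrices. The only difference is cosmetic — you use the half-angle form $\sin^2\t=\tfrac12-\tfrac12\cos2\a_i\cos2\b_{j-1}+\tfrac12\sin2\a_i\sin2\b_{j-1}$, while the paper uses the equivalent expansion $\sin^2\a_i\cos^2\b_j+\cos^2\a_i\sin^2\b_j+\tfrac12\sin2\a_i\sin2\b_j$.
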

\begin{proof}
One can verify that $\sin^2(\alpha_i+\beta_j)=\sin^2\alpha_i\cos^2\beta_j+\cos^2\alpha_i\sin^2\beta_j+\frac{1}{2}\sin2\alpha_i\sin2\beta_j$. Then one can show $A_n$ is the sum of the three $\rank$ $1$ matrices as follows.
\begin{equation}
\begin{aligned}
\label{eq:sumrank1}
A_n&=
\bma
\sin^2\alpha_1 & \sin^2\alpha_1\cos^2\beta_1 & \cdots & \sin^2\alpha_1\cos^2\beta_{n-1} \\
\sin^2\alpha_2 & \sin^2\alpha_2\cos^2\beta_1 & \cdots & \sin^2\alpha_2\cos^2\beta_{n-1} \\
\vdots & \vdots & \vdots \\
\sin^2\alpha_n & \sin^2\alpha_n\cos^2\beta_1 & \cdots & \sin^2\alpha_n\cos^2\beta_{n-1}
\ema\\
&+
\bma
0 & \cos^2\alpha_1\sin^2\beta_1 & \cdots & \cos^2\alpha_1\sin^2\beta_{n-1} \\
0 & \cos^2\alpha_2\sin^2\beta_1 & \cdots & \cos^2\alpha_1\sin^2\beta_{n-1} \\
\vdots & \vdots & \vdots \\
0 & \cos^2\alpha_n\sin^2\beta_1 & \cdots & \cos^2\alpha_1\sin^2\beta_{n-1}
\ema\\
&+
\bma
0 & \frac{1}{2}\sin2\alpha_1\sin2\beta_1 & \cdots & \frac{1}{2}\sin2\alpha_1\sin2\beta_{n-1} \\
0 & \frac{1}{2}\sin2\alpha_2\sin2\beta_1 & \cdots & \frac{1}{2}\sin2\alpha_2\sin2\beta_{n-1} \\
\vdots & \vdots & \vdots \\
0 & \frac{1}{2}\sin2\alpha_n\sin2\beta_1 & \cdots & \frac{1}{2}\sin2\alpha_n\sin2\beta_{n-1}
\ema.
\end{aligned}
\end{equation}

So we have $\rank(A_n)\leq 3$. This completes the proof.
\end{proof}

\renewcommand\refname{References}
\bibliographystyle{ieeetr} 
\bibliography{dis}

\end{document}